\def\PV{\mathcal{P}}  %
\def\alphPV{2^\PV}
\def\alphsym{\theta}
\def\trace{\ensuremath{\pi}} %
\def\lang{\mathcal{L}}
\def\translate{\mathsf{tr}}
\def\Cvars{\mathcal{Y}}
\def\Uvars{\mathcal{X}}
\def\strategy{\sigma}
\def\endtrace{\mathsf{end}}
\newcommand{\compat}[1]{\mathsf{ptraces}(#1)}
\newcommand{\fulltraces}[1]{\mathsf{traces}^\mathsf{f}(#1)} %
\newcommand{\tracestrat}[2]{\trace^\mathsf{f}[#1, #2]} %
\newcommand{\inftraces}[1]{\mathsf{traces}(#1)} %
\newcommand{\inftracestrat}[2]{\trace[#1, #2]}
\def\Fmc{\ensuremath{\mathcal{F}}} %
\newcommand{\eval}[3]{\llbracket #1, #2\rrbracket_{#3} }
\def\pvals{V}
\newcommand{\wcv}[1]{\mathsf{bgv}(#1)}  %
\newcommand{\wcvstart}[2]{\mathsf{bgv}_{#2}(#1)}
\def\opt{bgv-optimal\xspace} %
\def\desiredvals{\mathcal{V}}
\def\constraintdba{\C{A}^{\alpha_s, \alpha_c}_{\varphi}}
\def\cpredecessor{\mathsf{CPre}}
\newcommand{\showcomments}[1]{}
\newcommand{\acite}[1]{\citeauthor{#1} \shortcite{#1}}
\newcommand{\set}[1]{\left\{ #1 \right\}}
\newcommand{\la}{\langle}
\newcommand{\ra}{\rangle}
\newcommand{\act}{{\cal A}\xspace}
\def\my-isdef{\hbox{~$\stackrel{\text{def}}{=}$~}}
\renewcommand{\paragraph}[1]{\vspace{1mm}\noindent\textbf{#1}}
\newtheorem{theorem}{Theorem}
\newtheorem{lemma}{Lemma}
\newtheorem{proposition}{Proposition}
\theoremstyle{definition} 
\newtheorem{example}{Example}
\newtheorem{definition}{Definition}
\newtheorem{remark}{Remark}
\newcommand{\hide}[1]{}
\colorlet{tablerowcolor}{gray!10} %
\newcolumntype{L}[1]{>{\raggedright\arraybackslash}p{#1}}
\newcolumntype{C}[1]{>{\centering\arraybackslash}p{#1}}
\newcolumntype{R}[1]{>{@{\,}\raggedleft\arraybackslash}p{#1}@{\,}}
\def\LTL{\ensuremath{\mathsf{LTL}}\xspace}
\def\LTLf{\ensuremath{\mathsf{LTL_{f}}}\xspace}
\def\LTLF{\ensuremath{\mathsf{LTL[\mathcal{F}]}}\xspace}
\def\LTLfF{\ensuremath{\mathsf{LTL_{f}[\mathcal{F}]}}\xspace}
\newcommand{\ltluntil}[2]{{#1} \mathcal{U} {#2}}
\newcommand{\ltlrelease}[2]{{#1} \mathcal{R} {#2}}
\newcommand{\ltlalways}[1]{\square{#1}}
\newcommand{\tikzcircle}[1]{\tikz[baseline=-0.5ex]\draw[black,radius=3pt,fill=#1] (0,0) circle ;}
\newcommand{\ltlweaknext}[1]{\tikzcircle{black}\xspace{#1}}
\newcommand{\ltlnext}[1]{\tikzcircle{white}\xspace{#1}}
\newcommand{\ltleventually}[1]{\lozenge\xspace{#1}}
\newcommand{\until}{\ensuremath{\operatorname{\mathsf{U}}}\xspace}
\newcommand{\release}{\ensuremath{\operatorname{\mathsf{R}}}\xspace}
\renewcommand{\ltluntil}[2]{{#1}\until{#2}}
\renewcommand{\ltlrelease}[2]{{#1}\release{#2}}
\newcommand{\assume}{\ensuremath{\alpha}}
\def\win{\mathsf{Reach}}
\newcommand{\constrainedproblem}[1]{\ensuremath{\la \xvars{#1}, \yvars{#1}, \assume{#1}, \varphi{#1} \ra}}
\newcommand{\agproblem}[2]{\ensuremath{\la \xvars, \yvars, #1, #2 \ra}}
\newcommand{\C}[1]{\mathcal{#1}\xspace}
\newcommand{\NFA}{\text{NFA}\xspace} %
\newcommand{\DFA}{\text{DFA}\xspace} %
\newcommand{\NBA}{\text{NBA}\xspace}
\newcommand{\DBA}{\text{DBA}\xspace}
\def\defeq{\coloneqq}
\newcommand{\xvars}{\ensuremath{\C{X}}}
\newcommand{\yvars}{\ensuremath{\C{Y}}}
\newcommand{\synthproblem}[1]{\ensuremath{\la \xvars{#1}, \yvars{#1}, \varphi{#1} \ra}}
\newcommand{\QF}[1]{\ensuremath{F}_{#1}}
\def\actions{\C{A}}
\def\fluents{\C{F}}
\def\initform{\Psi_\mathit{init}}
\def\effform{\Psi_\mathit{eff}}
\def\oneform{\Psi_\mathit{one}}
\def\agentform{\Psi_\mathit{pre}}
\def\goalform{\Psi_\mathit{goal}}
\newcommand{\phione}{\ensuremath{ (\ltlnext{\top} \leftrightarrow \bigvee_{a\in\act{}} a) \land \bigwedge_{a,a'\in\act, a\neq a'} ( \neg a \vee \neg {a'}) } }
\begin{document}
\title{Finite LTL Synthesis with Environment Assumptions and Quality Measures}

\newcommand{\uoft}{$^\star$}
\newcommand{\umeg}{$^\dagger$}

\author{
Alberto Camacho\\
Department of Computer Science\\
University of Toronto, Canada\\
acamacho@cs.toronto.edu
\And
Meghyn Bienvenu\\
CNRS, Univ. Montpellier,  Inria\\
Montpellier, France \\
meghyn@lirmm.fr
\And
Sheila A. McIlraith\\
Department of Computer Science\\
University of Toronto, Canada\\
sheila@cs.toronto.edu
}

  \maketitle

\begin{abstract}
    In this paper, we investigate the problem of synthesizing strategies for linear temporal logic (\LTL) specifications that are interpreted over finite traces -- a problem that is central to the automated construction of controllers, robot programs, and business processes.
    We study a natural variant of the finite \LTL synthesis problem in which strategy guarantees are predicated on specified environment behavior. We further explore a quantitative extension of \LTL that supports specification of quality measures, utilizing it to synthesize high-quality strategies. We propose new notions of optimality and associated algorithms that yield strategies that best satisfy specified quality measures. Our algorithms utilize an automata-game approach, positioning them well for future implementation via existing state-of-the-art techniques.
\end{abstract}

\section{Introduction}
The problem of automatically synthesizing digital circuits from logical specifications was first proposed 
by \acite{Church57}.
In \citeyear{pnu-ros-popl89}, \citeauthor{pnu-ros-popl89}
examined the problem of synthesizing strategies for reactive systems, proposing Linear Temporal Logic (\LTL) \cite{pnueli77} as the specification language.
In a nutshell,  \LTL is used to express temporally extended properties of infinite state sequences (called traces), and the aim of \LTL synthesis is to produce a winning strategy, i.e. a function that assigns values to the state variables under the control of the system at every time step, in such a way that the induced infinite trace is guaranteed to satisfy the given \LTL formula, no matter how the environment sets the remaining state variables.  
In 2015, \citeauthor{deg-var-ijcai15} introduced the problem of $LTL_{f}$~\emph{synthesis} in which the specification is described in a variant of \LTL interpreted over finite traces \cite{DBLP:conf/ijcai/GiacomoV13}.  
Finite interpretations of \LTL have long been exploited to specify temporally extended goals and preferences in AI automated planning (e.g., \cite{BacchusK00,bai-bac-mci09}).
In contrast to \LTL\ synthesis, which produces programs that run in perpetuity, \LTLf synthesis is concerned with the generation of terminating programs.
Two natural and important application domains are automated synthesis of business processes, including web services; and automated synthesis of robot controllers, in cases where program termination is desired. 

Despite recent work on \LTLf synthesis, there is little written on the nature and form of the \LTLf specifications and how this relates to the successful and nontrivial realization of strategies for such specifications. 
\LTLf synthesis is conceived as a game between the environment and an agent. The logical specification that defines the problem must not only define the desired behavior that execution of the strategy should manifest -- what we might loosely think of as the \emph{objective} of the strategy, but must also define the context, including %
any \emph{assumptions} about %
the environment's behavior %
upon which realization of the objective is predicated. 
As we show in this work, if  %
assumptions about environment behavior are not appropriately taken into account, specifications can either be impossible to realize or can be realized trivially by allowing the agent to violate assumptions upon which guaranteed realization of the objective is predicated.

We  further examine
the problem of how to construct specifications where the realization of an objective comes with a quality measure, and where strategies provide guarantees with respect to these measures.  The addition of quality measures is practically motivated. In some instances we may have an objective that can be  realized in a variety of ways of differing quality (e.g., my automated travel assistant may find a myriad of ways for me to get to KR2018 -- some more preferable than others!). Similarly, 
we may have multiple objectives that are mutually unachievable and we may wish to associate a quality measure with their individual realization
(e.g., I'd like my home robot to do the laundry, wash dishes, and cook dinner before its battery dies, but dinner is most critical, followed by dishes).  

In this paper we explore finite \LTL synthesis with environment assumptions and quality guarantees.  In doing so, we uncover %
important 
observations regarding the form and nature of \LTLf synthesis specifications, how resulting strategies are computed, and the nature of the guarantees we can provide regarding the resulting strategies.  In Section \ref{sec:synth-env-ass} we examine the problem of \LTLf synthesis with environment assumptions, introducing the notion of \emph{constrained \LTLf synthesis} in Section \ref{sec:constrained}.  In Section \ref{sec:alg-cltlf-synth}, we propose algorithms for constrained \LTLf synthesis, including a 
reduction to Deterministic B\"{u}chi Automata games for the fragment of environment constraints that are conjunctions of safe and co-safe \LTL formulae.
In Section \ref{sec:high-quality-synth}, we examine the problem of augmenting constrained \LTLf synthesis with quality measures.  We adopt a specification language, \LTLfF, proposed by \cite{AlmagorKRV17} and define a new notion of optimal strategies.  In Section \ref{sec:alg-high-qual-synth}, we provide algorithms for computing high-quality strategies for constrained \LTLf synthesis.   %
Section \ref{sec:conclusion} summarizes our technical contributions.
Some proofs are deferred to the appendix. This paper (without the appendix proofs) will appear in the \emph{Proceedings of the 16th International Conference on Principles of Knowledge Representation and Reasoning} (KR 2018).
\section{Preliminaries}
 We recall the syntax and semantics of linear temporal logic for both infinite and finite traces,
as well as the basics of finite state automata and the link between \LTL\ and automata. 
\subsection{Linear Temporal Logic (\LTL)}
Given a set $\PV$ of propositional variables,  \LTL formulae are defined as follows:
\[
\varphi \defeq \top ~\vert~ \bot ~\vert~ 
p \mid \neg \varphi \mid \varphi_1 \land \varphi_2 \mid \varphi_1 \lor \varphi_2 \mid \ltlnext{} \varphi \mid \varphi_1 \ltluntil{}{} \varphi_2 \mid  \varphi_1 \ltlrelease{}{} \varphi_2
\]
where $p \in \PV$. Here $\neg$, $\wedge$, and $\vee$ are the usual Boolean connectives, and $\ltlnext{}$ (next), $\ltluntil{}{} $ (until), and $\ltlrelease{}{}$ (release) are temporal operators.
The formula $\ltlnext{} \varphi$ states that $\varphi$ must hold in the next timepoint, $\varphi_1 \ltluntil{}{} \varphi_2$ stipulates that $\varphi_1$
must hold until $\varphi_2$ becomes true, and $\varphi_1 \ltlrelease{}{} \varphi_2$ expresses that $\varphi_2$ remain true until and including the point in which $\varphi_1$ is made true (or forever if $\varphi_2$ never becomes true). %
For concision, we do not include logical implication ($\rightarrow$), eventually ($\ltleventually{}$, `sometime in the future') and always ($\ltlalways{}$, `at every point in the future') in the core syntax, but instead view them as abbreviations: 
$\alpha \rightarrow \beta \defeq \neg \alpha \lor \beta$,
$\ltleventually{} \varphi \defeq \top \ltluntil{}{} \varphi$ and
and $\ltlalways{} \varphi \defeq \bot \ltlrelease{}{} \varphi $. 

\LTL\ formulae are traditionally interpreted over \emph{infinite traces} $\trace$, i.e., infinite words 
over the alphabet $2^\PV$. 
Intuitively, an infinite trace $\trace$ describes an infinite %
sequence of (time)steps, with the $i$-th symbol in $\trace$, written $\trace(i)$, specifying the propositional symbols that hold at step~$i$.
We use $\trace \sqsubseteq \trace'$
to indicate that $\trace$ is a prefix of $\trace'$. 
We define what it means for an infinite trace $\trace$ to \emph{satisfy an \LTL formula $\varphi$ at step $i$}, denoted $\trace\models_i \varphi$: 
\begin{itemize}[noitemsep,topsep=0pt]
\item $\trace\models_i \top$, $\trace\not\models_i \bot$, and
$\trace \models_i p$ iff $p \in \trace(i)$, for each $p\in\PV$; 
\item $\trace\models_i \neg \varphi$ iff $\trace\not\models_i \varphi$;
\item $\trace\models_i \varphi_1\land\varphi_2$ iff $\trace\models_i \varphi_1$ and $\trace\models_i \varphi_2$;
\item $\trace\models_i \varphi_1\lor\varphi_2$ iff $\trace\models_i \varphi_1$ or $\trace\models_i \varphi_2$;
\item $\trace\models_i \ltlnext{} \varphi$ iff $\trace\models_{i+1}\varphi$;
\item $\trace\models_i \varphi_1\ltluntil{}\varphi_2$ iff there exists $j \geq i$ such that $\trace\models_j \varphi_2$, and for each $i \leq k < j$, $\trace\models_k\varphi_1$;
\item $\trace\models_i \varphi_1\ltlrelease{}\varphi_2$ iff for all $j \geq i$ either $\trace \models_j \varphi_2$ or there exists $i \leq k < j$ such that $\trace \models_k \varphi_1$. 
\end{itemize}
A formula $\varphi$ is \emph{satisfied in $\trace$}, written $\trace \models \varphi$, if $\trace\models_1 \varphi$. 
Two formulas $\varphi$ and $\psi$ are \emph{equivalent} if $\trace \models \varphi$ iff $\trace \models \psi$ for all traces $\pi$. 
Observe that, in addition to the usual Boolean equivalences, we have the following: 
$\varphi_1 \ltluntil{} \varphi_2 \equiv \neg (\neg \varphi_1 \ltlrelease{} \neg\varphi_2)$ and $\neg \ltlnext{} \varphi \equiv \ltlnext{} \neg \varphi$. 

We consider two well-known syntactic fragments of~\LTL.
The \emph{safe fragment} is defined as follows \cite{DBLP:journals/fac/Sistla94}:
\[
\varphi \defeq \top \mid  \bot \mid 
p \mid \neg p \mid  \varphi_1 \land \varphi_2  \mid \varphi_1 \vee \varphi_2  \mid \ltlnext{} \varphi \mid   \varphi_1 \ltlrelease{}{} \varphi_2
\]
The complementary \emph{co-safe fragment} is similarly defined, using
$\ltluntil{}{}$ in place of %
$\ltlrelease{}{}$. 
It is known that  if $\varphi$ is a safe formula
and $\trace \not \models \varphi$, then there is a finite \emph{bad prefix} $\trace_b \sqsubseteq \trace$ such that 
$\trace' \not \models \varphi$ for every infinite trace $\trace'$ with $\trace_b \sqsubseteq \trace'$. 
 Similarly, if $\varphi$ is a co-safe formula and $\trace \models \varphi$, then there exists a finite \emph{good prefix} $\trace_g \sqsubseteq \trace$
such that $\trace' \models \varphi$ for every infinite trace $\trace'$ with $\trace_g \sqsubseteq \trace'$.
This means that violation of safe formulae and satisfaction of co-safe formulae can be shown by exhibiting a suitable finite prefix \cite{DBLP:journals/fmsd/KupfermanV01}.  

In this paper, our main focus will be on a more recently studied \emph{finite version of \LTL}, denoted \LTLf 
\cite{DBLP:conf/ijcai/GiacomoV13}, in which formulae are interpreted over \emph{finite traces} (finite words over $2^\PV$). We will reuse the notation $\trace(i)$ ($i$-th symbol) %
and introduce the notation $|\trace|$ for the length of~$\trace$. %
\LTLf\ has precisely the same syntax as \LTL\ and the same semantics for the propositional constructs, but it differs in its interpretation of the temporal operators:
\begin{itemize}[noitemsep,topsep=0pt]
\item $\trace\models_i \ltlnext{} \varphi$ iff $|\trace|> i$ and $\trace\models_{i+1}\varphi$;
\item $\trace\models_i \varphi_1\ltluntil{}\varphi_2$ iff there exists $i \leq j \leq |\pi|$ such that $\trace\models_j \varphi_2$, and $\trace\models_k\varphi_1$, for each $i \leq k < j$;
\item $\trace\models_i \varphi_1\ltlrelease{}\varphi_2$ iff for all $i \leq j \leq |\pi|$ either $\trace \models_j \varphi_2$ or there exists $i \leq k < j$ such that $\trace \models_k \varphi_1$
\end{itemize}
We introduce %
the \emph{weak next} operator ($\ltlweaknext{}$) as an abbreviation: $\ltlweaknext{} \varphi \defeq \ltlnext{} \varphi \vee \neg \ltlnext{} \top$.
Thus, $\ltlweaknext{} \varphi $ holds if $\varphi$ holds in the next time step or we have reached the end of the trace. 
Over finite traces, $\neg \ltlnext{} \varphi \not \equiv \ltlnext{} \neg \varphi$, but we do have  $\neg \ltlweaknext{} \varphi \equiv \ltlnext{} \neg \varphi$. 

As before, we say that $\varphi$ is satisfied in $\trace$, written $\trace \models \varphi$,  if $\trace\models_1 \varphi$. Note that we can unambiguously use the same notation for \LTL\ and \LTLf\ so long as we specify whether the considered trace is finite or infinite.

\subsection{Finite State Automata}
We recall that a \emph{non-deterministic finite-state automaton (NFA)} is a tuple 
$\C{A} = \langle \Sigma, Q, \delta, Q_0, \QF{} \rangle$, where $\Sigma$ is a finite alphabet of \emph{input symbols}, $Q$ is a finite set of \emph{states}, $Q_0 \subseteq Q$ is a set of \emph{initial states}, $\QF{} \subseteq Q$ is a set of \emph{accepting states}, and $\delta: Q \times \Sigma \rightarrow 2^Q$ is the \emph{transition} function.
NFAs are evaluated on \emph{finite} words, i.e.\ elements of $\Sigma^*$. A \emph{run} of $\C{A}$ on a word $w = \theta_1 \cdots \theta_n$ is a sequence $q_0\cdots q_n$ of states, such that $q_0 \in Q_0$, and $q_{i+1} \in \delta(q_i,\theta_{i+1})$ for all $0 \leq i < n$. A run $q_0\cdots q_n$ is \emph{accepting} if $q_n \in \QF{}$, and $\C{A}$ \emph{accepts} $w$ if some run of $\C{A}$ on $w$ is accepting.
The \emph{language} of an automaton 
$\C{A}$, denoted $\C{L}(\C{A})$, 
is the set of words accepted by $\C{A}$.

\emph{Deterministic finite-state automata} (DFAs) are NFAs in which $|Q_0|=1$
and $|\delta(q,\theta)| = 1$ for all $(q,\sigma) \in Q \times \Sigma$. When $\C{A}$ is a \DFA, we write 
$\delta: Q \times \Sigma \rightarrow Q$ and $q' = \delta(q,\theta)$ in place of $q' \in \delta(q,\theta)$, 
and when $Q_0=\{q_0\}$, we will simply write $q_0$ (without the set notation). 
For every \NFA $\C{A}$, there exists a \DFA that accepts the same language as $\C{A}$ and whose size is at most single exponential in the size of $\C{A}$.
The \emph{powerset construction} is a well-known technique to \emph{determinize} NFAs \cite{DBLP:journals/ibmrd/RabinS59}.

\emph{Non-deterministic B\"uchi automata} (\NBA) are defined like NFAs but evaluated on \emph{infinite} words, that is, elements of~$\Sigma^\omega$. %
A \emph{run} of $\C{A}$ on an infinite word $w = \theta_1 \theta_2 \cdots$ is a sequence $\rho = q_0 q_1 q_2 \cdots$ of states, such that $q_0 \in Q_0$, and $q_{i+1} \in \delta(q_i,\theta_{i+1})$ for every $i \geq 0$.
A run $\rho$ is \emph{accepting} if $\mathsf{inf}(\rho) \cap \QF{} \neq \emptyset$, where $\mathsf{inf}(\rho)$ is the set of states that appear infinitely often in $\rho$. We say that an \NBA $\C{A}$ \emph{accepts} $w$ if some run of $\C{A}$ on $w$ is accepting.
Analogous definitions apply to \emph{deterministic} B\"uchi automata (DBAs).

We will also consider \emph{deterministic finite-state transducers} (also called Mealy machines, later abbreviated to `transducers'), given by tuples $\C{T} = \langle \Sigma, \Omega, Q, \delta, \omega, q_0 \rangle$, where $\Sigma$ and $\Omega$ are respectively the \emph{input and output alphabets}, 
$Q$ is the set of states, $\delta: Q \times \Sigma \rightarrow Q$ is the transition function, $\omega: Q \times \Sigma \rightarrow \Omega$ is the \emph{output function}, 
and $q_0$  the initial state. The run of $\C{T}$ on %
$w = \theta_1 \theta_2 \ldots \in \Sigma^\omega$
is an infinite sequence of states $q_0 q_1 q_2 \ldots$ with $q_{i+1} \in \delta(q_i,\theta_{i+1})$ for every $i \geq 0$,
and the \emph{output sequence} of $\C{T}$ on $w$ is $\omega(q_0, \theta_1) \omega(q_1,\theta_2) \ldots$.

Given an \LTLf formula $\varphi$, one can construct an \NFA
that accepts precisely those finite traces $\trace$ with $\trace \models \varphi$ (e.g. \cite{deg-var-ijcai15}).
For every safe formula $\varphi_s$ (resp.\ co-safe formula $\varphi_c$), one can construct an \NFA %
that accepts all bad prefixes of $\varphi_s$
(resp.\ good prefixes of $\varphi_c$) \cite{DBLP:journals/fmsd/KupfermanV01}. 
In these constructions, the \NFA\/s are 
worst case
single exponential in the size of the formula. 
By determinizing these \NFA\/s, we can obtain \DFA\/s of double-exponential size that recognize the same languages. %

\section{\LTL\ and \LTLf\ Synthesis}
\label{sec:synth-env-ass}
To set the stage for our work, we recall the definition of \LTL\ synthesis in the infinite and finite trace settings
and the relationship between planning and synthesis. 
\subsection{LTL Synthesis}

An \emph{\LTL specification} is a tuple $\synthproblem{}$ where $\varphi$ is an \LTL formula over \emph{uncontrollable} variables $\Uvars$ and 
 \emph{controllable} variables  $\Cvars$. 
A \emph{strategy} %
is a function $\strategy: (2^{\Uvars})^* \rightarrow 2^{\Cvars}$.
The \emph{infinite trace induced by 
$\mathbf{X}=\set{X_i}_{i \geq 1} \in (2^{\C{X}})^\omega$
and %
$\strategy$}
is 
\[
\inftracestrat{\strategy}{\mathbf{X}}=\left(X_1 \cup \strategy(X_1)\right) \left(X_2 \cup \strategy(X_1 X_2)\right) \ldots 
\]
The set of all infinite traces induced by $\strategy$ is denoted $\inftraces{\strategy}= \{\inftracestrat{\strategy}{\mathbf{X}}\mid 
\mathbf{X} %
\in (2^{\C{X}})^\omega
\}$.
The \emph{realizability} problem %
$\synthproblem{}$
consists in determining whether there exists a \emph{winning strategy}, i.e., 
 a strategy $\strategy$ such that $\trace \models \varphi$ for every $\trace \in \inftraces{\strategy}$. 
The \emph{synthesis} problem is to compute such a winning strategy when one exists.

\LTL synthesis can be viewed %
as a 2-player game between the environment ($\Uvars$) and the agent ($\Cvars$). 
In each turn, the environment makes a move by selecting $X_i \subseteq \Uvars$, and the agent replies by selecting $Y_i \subseteq \Cvars$. 
The aim is to find a strategy $\strategy$ for the agent that guarantees the resulting trace satisfies $\varphi$. 

\subsection{Finite LTL Synthesis}
We now recall \LTLf realizability and synthesis, where %
the specification formula is interpreted on finite traces.
An \emph{\LTLf specification} is a tuple  $\synthproblem{}$, where $\varphi$ is an \LTLf formula over \emph{uncontrollable} variables $\Uvars$ and 
 \emph{controllable} variables  $\Cvars$. 
A \emph{strategy} %
is a function $\strategy: (2^{\Uvars})^* \rightarrow 2^{\Cvars \cup \{\endtrace\}}$ such that 
for each infinite sequence $\mathbf{X}=\set{X_i}_{i \geq 1} \in (2^{\C{X}})^\omega$ 
of subsets of $\C{X}$, there is exactly one integer $n_{\strategy,\mathbf{X}}\geq 1$ with
$\endtrace \in \strategy(X_1\cdots X_{n_{\strategy,\mathbf{X}}})$. 
The induced infinite trace $\inftracestrat{\strategy}{\mathbf{X}}$ is defined as before, and the \emph{finite trace induced by 
$\mathbf{X}$ 
and %
$\strategy$}
is 
\[
\tracestrat{\strategy}{\mathbf{X}}=\left(X_1 \cup \strategy(X_1)\right) \ldots \left(X_{n_{\strategy,\mathbf{X}}} \cup \strategy(X_1\cdots X_{n_{\strategy,\mathbf{X}}})\right) 
\]
but with $\endtrace$ removed from 
$\strategy(X_1\cdots X_{n_{\strategy,\mathbf{X}}})$.
The set of all \emph{finite traces induced by $\strategy$} is denoted $\fulltraces{\strategy}= \{\tracestrat{\strategy}{\mathbf{X}}\mid \mathbf{X} \in (2^{\C{X}})^\omega\}$.
A finite trace $\trace$ is \emph{compatible with $\strategy$} if $\trace \sqsubseteq \trace'$ for some $\trace' \in \fulltraces{\strategy}$, with
$\compat{\strategy}$ (`p' for `partial') %
 the set of all such traces. 
We call $\sigma$ a winning strategy for an \LTLf\ specification $\synthproblem{}$  if $\trace \models \varphi$ for every 
$\trace \in \fulltraces{\strategy}$.
The realizability and synthesis problems for \LTLf %
are then defined %
in the same way as for \LTL. 

\paragraph{Comparison with prior formulations}
Prior work on \LTLf synthesis defined strategies as functions $\strategy: (2^{\Uvars})^* \rightarrow 2^{\Cvars}$ that do not explicitly indicate the end of the trace \cite{deg-var-ijcai15,zhu-etal-ijcai17,cam-bai-mui-mci-icaps18}. 
In these works, %
a strategy $\strategy$ is winning iff
for each $\trace \in \inftraces{\strategy}$ there exists some finite prefix $\trace' \sqsubseteq \trace$ such that $\pi' \models \varphi$.
Note that in general, multiple prefixes $\trace'$ that satisfy $\varphi$ may exist. %
We believe that it is cleaner mathematically to be precise about which trace is produced, and it will substantially simplify our technical developments. 
The two definitions give rise to the same notion of realizability, and 
existing results and algorithms for \LTLf\ synthesis transfer to our slightly different setting.

\subsection{Planning as \LTLf\ Synthesis}\label{sec:planning}
It has been observed %
that different forms of automated planning can be recast as \LTLf\ synthesis
(see e.g. \cite{deg-var-ijcai15,DIppolitoRS18,cam-bai-mui-mci-ccai18}).
We recall that planning problems are specified in terms of a set of \emph{fluents} (i.e., atomic facts whose value may change over time), a set of \emph{actions} which can change the state of the world, an \emph{action theory} whose axioms give the \emph{preconditions} and \emph{effects} of the actions (i.e., which fluents must hold for an action to be executable, and how do the fluents change as a result of performing an action), 
a description of the \emph{initial state},  and a \emph{goal}. 
In classical planning, actions are deterministic (i.e. there is a unique state resulting from performing an action in a given state), 
and the aim is to produce a sequence of actions leading from the initial state to a goal state. 
In \emph{fully observable non-deterministic} (FOND) planning, actions have non-deterministic effects, meaning that there may be multiple possible states that result from performing a given action in a given state (with the effect axioms determining which states are possible results). 
Strong solutions are \emph{policies} (i.e., functions %
that map states into actions) that guarantee eventual achievement of the goal. 

We briefly describe how FOND planning can be reduced to \LTLf\ synthesis,\footnote{Our high-level presentation 
combines elements of the reductions in \cite{deg-var-ijcai15,cam-bai-mui-mci-icaps18}. Its purpose is to illustrate the general form and components of an \LTLf\ encoding of planning
(not to provide the most efficient encoding).} 
as the reduction crucially relies on the use of environment assumptions. 
We will use the set $\C{F}$ of fluents as the uncontrollable variables, and the set of actions $\C{A}$
for the controllable variables. The high-level structure of the \LTLf\ specification formula is:
$\Phi = (\initform \land \effform) \rightarrow (\agentform \land \goalform)$
Intuitively, $\Phi$ states that under the assumption that the environment sets %
the fluents in accordance with the initial state and effect axioms 
(captured by $\initform$ and $\effform$),
the agent can choose a single action per turn ($\oneform$) in such a way that the preconditions are obeyed ($\agentform$) and the goal is achieved ($\goalform$). 
We set %
$\goalform=\ltleventually{(\gamma \land \neg\ltlnext{\top})}$, 
where $\gamma$ is a propositional formula over $\C{F}$ describing goal states. 
The formula $\oneform=\Box \psi_\mathit{one}$ with 
$\psi_\mathit{one} =\phione$
enforces that 
a single action 
is performed at each step. The formula $\agentform$ can be defined as $\Box \bigwedge_{a \in \actions} (a \rightarrow \rho_a)$, where $\rho_a$ is a propositional formula over $\fluents$ (typically, a conjunction of literals) that gives the preconditions of $a$. The formula $\initform$ will simply be the conjunction of literals over $\C{F}$ corresponding to the initial state. 
Finally, %
$\effform$ will be a conjunction of formulae of the form 
\begin{equation}
\Box \left( (\kappa \wedge a \wedge \rho_a \wedge \psi_\mathit{one}) \rightarrow \ltlweaknext{} \beta \right)
\end{equation}
where $a \in \actions$, and $\kappa$ and $\beta$ are propositional formulas over $\C{F}$. Intuitively, the latter formula 
states that if the current state verifies $\kappa$ and action $a$ is correctly performed by the agent (i.e. the preconditions are met and no other action is simultaneously performed)
then the next state must satisfy $\beta$. We discuss later why it is important to include $\rho_a \wedge \psi_\mathit{one}$.

\def\catenv{\Psi_\mathit{env}^\mathit{cat}}
\def\catagt{\Psi_\mathit{robot}^\mathit{cat}}

\subsection{Illustrative Example}\label{sec:bigexample}
We now give a concrete example of an \LTLf\ synthesis problem, which illustrates the importance of environment assumptions. 
Consider synthesizing a high-level control strategy for your Roomba-style robot vacuum cleaner. 
You want the robot to  clean the living room ($LR$) and bedroom ($BR$) when they are dirty, 
but you don't want it to vacuum a room while your cat is there
(the robot scares her).
We now describe how this problem can be formalized as \LTLf\ synthesis.

Taking inspiration from the encoding of planning, we will use
$\{clean(z), catIn(z) \mid z \in \{LR, BR\}\}$ (the fluents\footnote{We use notation reminiscent of first-order logic to enhance readability, but the variables (e.g. $clean(LR)$) are propositional.} in our scenario) as the set of uncontrollable variables,
and take the robot's actions $\{vac(BR), vac(LR)\}$ as the controllable variables.
As was the case 
for planning, it is natural to conceive of the specification as having the form of an implication $\catenv \rightarrow \catagt$, with
$\catenv$ describing the rules governing the environment's behavior and $\catagt$ the desired %
behavior of the robot. %
We define $\catagt$ as the conjunction of: %
\begin{itemize}[noitemsep,topsep=0pt]
\item for $z \in \{LR, BR\}$, the formula $\Box (vac(z) \rightarrow \rho_\mathit{vac(z)})$, 
with $\rho_\mathit{vac(z)} = \neg clean(z) \wedge \neg catIn(z)$ the precondition of $vac(z)$
(we can only vacuum dirty cat-free rooms);
\item $\Box (\neg vac(LR) \vee \neg vac(BR))$ (we cannot vacuum in two places at once);
\item $\ltleventually{(clean(LR) \wedge clean(BR))}$ (our goal: both rooms clean).  
\end{itemize}
We let $\varphi_{vac(z)}= vac(z) \wedge \rho_\mathit{vac(z)} \wedge \neg vac(z')$ (with $z'$ the other room) 
encode a correct execution of $vac(z)$, and let 
$\catenv$ be a conjunction of the following:
\begin{itemize}[noitemsep,topsep=0pt]
\item for $z \in \{LR, BR\}$: $\Box \left(clean(z) \vee \varphi_{vac(z)}\rightarrow \ltlweaknext{clean(z)}\right)$ 
(if room $z$ is currently clean, or if the robot correctly performs action $vac(z)$, then room $z$ is clean in the next state\footnote{For simplicity, we assume once a room is clean, it stays clean. })
\item for $z \in \{LR, BR\}$: $\Box \left( \neg clean(z) \wedge \neg vac(z) \rightarrow \ltlweaknext{\neg clean(z)}  \right)$ (a room can only become clean if it is vacuumed);
\item $\Box (\neg catIn(LR) \vee \neg catIn(BR))$ and $\Box(catIn(LR) \vee catIn(BR))$ (the cat is in exactly one of the rooms). 
\end{itemize}
As the reader may have noticed, while the assumptions in $\catenv$ are necessary, 
they are not sufficient to ensure realizability, as the cat may stay forever in a dirty room. 
If we further assume that 
the cat eventually leaves each of the rooms 
($\varphi_\mathit{leaves} =\ltleventually{\neg catIn(BR)} \wedge \ltleventually{\neg catIn(LR)}  $),
there is an obvious solution: 
vacuum a cat-free room, and then simply wait until the other room is cat-free and then vacuum it.
However, rather unexpectedly, adding $\varphi_\mathit{leaves}$ to $\catenv$ makes the specification $\catenv \rightarrow \catagt$
realizable in a trivial and unintended way: by ending execution in the first move, $\neg \catenv$ trivially holds in the resulting length-1 trace~$\pi$. Indeed, 
there are three possibilities: (i)  $\pi \models catIn(BR)$ (so $\pi \not \models \ltleventually{\neg catIn(BR)}$),
(ii)  $\pi \models catIn(LR)$ (so $\pi \not \models \ltleventually{\neg catIn(LR)}$), or
(iii) $\pi \models \neg catIn(LR) \wedge \neg catIn(BR)$ (so $\pi \not \models  \left (\Box(catIn(LR) \vee catIn(BR)) \right)$). 
Clearly, 
this length-1 strategy 
is not the strategy that we wanted to synthesize.
In Section \ref{sec:constrained}, we propose a new framework for handling environment assumptions
which avoids the generation of such trivial strategies and makes it possible to find the desired strategies.

\section{Constrained \LTLf\ Synthesis}\label{sec:constrained}

To the aim of properly handling environment assumptions, 
we introduce a generalization of \LTLf\ synthesis, %
in which the assumptions are separated from the rest of the specification formula
and given a different interpretation. Essentially, the idea is that the environment is allowed to 
satisfy the assumption over the whole infinite trace, rather than on the finite prefix chosen by the agent.  
This can be accomplished using \LTL\ semantics for the environment assumption, 
but keeping \LTLf\ semantics for the formula describing the objective. %

Formally, a \emph{constrained \LTLf specification} is a tuple $\constrainedproblem{}$, 
where $\Uvars$ and $\Cvars$ are the uncontrollable and controllable variables,  
$\varphi$ is an \LTLf\ formula over $\Uvars \cup \Cvars$, and  $\assume$ is an \LTL formula over $\Uvars \cup \Cvars$.  
Here $\varphi$ describes the desired agent behavior when the environment behaves so as to satisfy $\assume$. 
We will henceforth call $\varphi$ the \emph{objective}, and will refer to $\assume$ as the (environment) \emph{assumption} or \emph{constraint} 
(as it acts to constrain the allowed environment behaviors).

A \emph{strategy for \constrainedproblem{}}
is a function $\strategy: (2^{\Uvars})^* \rightarrow 2^{\Cvars \cup \{\endtrace\}}$ such that 
for each infinite sequence 
$\mathbf{X}=\set{X_i}_{i \geq 1} \in (2^{\C{X}})^\omega$ 
of subsets of $\C{X}$, there is \emph{at most} one integer $n_{\strategy,\mathbf{X}}\geq 1$ with
$\endtrace \in \strategy(X_1\cdots X_{n_{\strategy,\mathbf{X}}})$. If none exists, we write $n_{\strategy,\mathbf{X}} = \infty$.
To account for traces that do not contain $\endtrace$, we redefine 
$\fulltraces{\strategy}$ as follows: 
$\{\tracestrat{\strategy}{\mathbf{X}}\mid \mathbf{X} \in (2^{\C{X}})^\omega \text{ and } n_{\strategy,\mathbf{X}}<\infty\}$.
A strategy $\strategy$ is an \emph{$\assume$-strategy} if 
for every $\mathbf{X} \in (2^{\C{X}})^\omega$, either 
$n_{\strategy,\mathbf{X}} < \infty$ or $\inftracestrat{\sigma}{\mathbf{X}} \not \models \assume$,
i.e. $\strategy$ terminates on every trace that satisfies~$\assume$.  
A \emph{winning strategy} (w.r.t. \constrainedproblem{})
is an $\assume$-strategy such that
$\trace \models \varphi$ for every $\trace \in \fulltraces{\strategy}$.
In other words, winning strategies are those that guarantee the satisfaction of the objective $\varphi$ 
under the assumption that the environment behaves in a way that constraint $\assume$ is satisfied. 
The realizability and synthesis problems for constrained \LTLf specifications are defined as before, using this notion of winning strategy. 

Because the constraints are interpreted using infinite \LTL\ semantics,
we are now able to correctly handle 
liveness constraints %
($\ltleventually{\psi}$) and 
fairness constraints as studied in \LTL synthesis ($\ltlalways{\ltleventually{\psi}}$) and FOND planning ($\ltlalways{\ltleventually{\psi_1}} \rightarrow \ltlalways{\ltleventually{\psi_2}} $) \cite{DIppolitoRS18}.

\begin{example}
Returning to our earlier example, consider the constrained synthesis problem with assumption $\catenv$ (which includes $\varphi_\mathit{leaves}$) and objective $\catagt$. 
The obvious strategy 
(vacuum dirty rooms as soon as they are cat-free) 
gives rise to a winning strategy, in which we output $\endtrace$ if we manage to clean both rooms, and otherwise, produce an infinite trace without $\endtrace$ in which $\catenv$ is not true.
Trivial strategies that terminate immediately will \emph{not} be winning strategies, 
as there will be infinite traces that satisfy the constraint but where the length-1 finite 
trace falsifies the objective. 
\end{example}

We remark that %
if we are not careful about how we write the constraint $\assume$,
we may unintentionally allow the agent to block the environment from fulfilling $\assume$.

\begin{remark}
Suppose that instead of using Equation 1 to encode the effects of actions, we employ the simpler
 $\Box \left( (\kappa \wedge a) \rightarrow \ltlweaknext{} \beta \right)$. 
While intuitive, this alternative formulation does not properly encode %
FOND planning, as the specification may be realized in an unintended way: 
by performing multiple actions with conflicting effects, or a single action whose precondition is not satisfied, the agent can force the environment to satisfy a contradictory set of formulae $\beta$ in the next state, causing the assumption to be violated. 
\end{remark}

\acite{ChatterjeeHJ08} discuss this phenomenon in the context of \LTL synthesis, and suggest
that a reasonable environment constraint is one which is \emph{realizable for the environment}. 
We note that the constraints we considered in Section \ref{sec:synth-env-ass} all satisfy this property.

\subsubsection{Correspondence with Finite LTL Synthesis}
We begin by observing that (plain) \LTLf\ synthesis is a special case of constrained \LTLf\ synthesis in which one uses the trivial constraint $\top$ for the environment assumption:

\begin{theorem}\label{thm:reduction}
Every winning strategy $\strategy$ for the \LTLf\ specification $\la \C{X}, \C{Y}, \varphi \ra$ is a winning strategy for the constrained \LTLf\ specification $\la \C{X}, \C{Y}, \top, \varphi \ra$, and vice-versa. In particular, 
$\la \C{X}, \C{Y}, \varphi \ra$ is realizable iff $\la \C{X}, \C{Y}, \top, \varphi \ra$ is realizable. 
\end{theorem}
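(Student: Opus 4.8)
The plan is to show that the two notions of winning strategy coincide \emph{syntactically} once the constraint is instantiated to $\top$, so that the theorem follows by unwinding definitions rather than by any substantive argument.

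First I would observe that a function $\strategy : (2^{\Uvars})^* \to 2^{\Cvars \cup \{\endtrace\}}$ is a strategy for the plain \LTLf\ specification $\la \Uvars, \Cvars, \varphi \ra$ if and only if it is a $\top$-strategy (and hence, in particular, a strategy) for the constrained specification $\la \Uvars, \Cvars, \top, \varphi \ra$. Indeed, by definition $\strategy$ is a $\top$-strategy precisely when, for every $\mathbf{X} \in (2^{\Uvars})^\omega$, either $n_{\strategy,\mathbf{X}} < \infty$ or $\inftracestrat{\strategy}{\mathbf{X}} \not\models \top$; since every infinite trace satisfies $\top$, the second disjunct is never available, so the condition is equivalent to requiring $n_{\strategy,\mathbf{X}} < \infty$ for all $\mathbf{X}$. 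Combined with the constrained-setting requirement that there be \emph{at most} one such integer, this says there is \emph{exactly} one integer $n_{\strategy,\mathbf{X}} \geq 1$ with $\endtrace \in \strategy(X_1 \cdots X_{n_{\strategy,\mathbf{X}}})$, which is exactly the defining condition of a strategy for the plain \LTLf\ specification.

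Next I would check that the induced finite-trace set $\fulltraces{\strategy}$ is the same object in both settings whenever $\strategy$ is such a strategy. In the constrained setting $\fulltraces{\strategy}$ is redefined as $\{\tracestrat{\strategy}{\mathbf{X}} \mid \mathbf{X} \in (2^{\Uvars})^\omega,\ n_{\strategy,\mathbf{X}} < \infty\}$; but we have just seen that $n_{\strategy,\mathbf{X}} < \infty$ holds for every $\mathbf{X}$, so this set equals $\{\tracestrat{\strategy}{\mathbf{X}} \mid \mathbf{X} \in (2^{\Uvars})^\omega\}$, which is precisely the original definition of $\fulltraces{\strategy}$. The winning condition in both settings is literally ``$\trace \models \varphi$ for every $\trace \in \fulltraces{\strategy}$'' (the extra clause in the constrained definition, namely that $\strategy$ be a $\top$-strategy, being automatic as shown above). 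Hence $\strategy$ is winning for one specification if and only if it is winning for the other, which gives both directions of the theorem. The realizability equivalence is then immediate: a winning strategy exists for one specification iff it exists for the other.

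There is essentially no hard step here; the only point requiring a moment's care is verifying that the \emph{redefinition} of $\fulltraces{\cdot}$ used in the constrained setting does not change the induced set when specialized to $\top$-strategies, which is exactly where the observation ``$n_{\strategy,\mathbf{X}} < \infty$ for all $\mathbf{X}$'' is invoked. I would state the identification of strategies once and note that it simultaneously delivers both implications claimed in the statement.
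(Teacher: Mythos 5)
Your proof is correct and is precisely the definitional unwinding the paper intends: the theorem is stated as an observation without an explicit proof, and the key points you identify (that a $\top$-strategy must terminate on every $\mathbf{X}$ since every infinite trace satisfies $\top$, so ``at most one'' collapses to ``exactly one'' and the two definitions of $\fulltraces{\strategy}$ coincide) are exactly what makes the statement immediate. Nothing is missing.
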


A natural question is whether a reduction in the other direction exists. Indeed, it is well-known that in the infinite setting, 
assume-guarantee \LTL\ synthesis\footnote{Here we refer to assume-guarantee synthesis as considered in \cite{ChatterjeeHJ08,AlmagorKRV17}, where given a pair $(\assume, \varphi)$, the aim is
to construct a strategy such that every induced infinite trace either violates $\assume$ or satisfies $\varphi$. This is different from the assume-guarantee synthesis of \cite{ChatterjeeH07}, in which
N agents each have their own goals, and the objective is for each agent to satisfy its own goals.} 
with an assumption $\assume$ and objective $\varphi$ corresponds to classical \LTL\  synthesis w.r.t.\ $\assume \rightarrow \varphi$ (that is, the two synthesis problems have precisely the same winning strategies). 
The following negative result shows that a simple reduction via implication does not work in the finite trace setting:

\begin{theorem}\label{thm:implication-fails}
There exists an unrealizable constrained  \LTLf specification $\C{S}=\la \C{X}, \C{Y},
 \assume, \varphi \ra$
such that the \LTLf\ specification $\C{S}_\rightarrow = \la 
\C{X}, \C{Y},
 \assume \rightarrow \varphi \ra$ is realizable. 
\end{theorem}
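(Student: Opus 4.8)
The plan is to exhibit a single small example over a tiny vocabulary that separates the two notions. The key conceptual point, highlighted by the paper's Roomba example, is that under the implication semantics the agent can win \emph{trivially} by terminating immediately and thereby falsifying $\assume$ on the short finite trace it produces, whereas under the constrained semantics the agent is required to be an $\assume$-strategy, so it cannot escape its obligations by early termination on traces where the environment plays so as to satisfy $\assume$ forever. So I would pick $\assume$ to be a liveness-style \LTL\ formula that (a) can be falsified on \emph{every} length-1 trace, yet (b) can be satisfied by the environment over infinite traces in a way that is completely unaffected by the agent's moves, and pick $\varphi$ to be something the agent provably \emph{cannot} guarantee on any finite trace it is forced to produce.

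\textbf{Concrete instantiation.} Take $\C{X}=\{x\}$, $\C{Y}=\emptyset$ (the agent controls nothing; this already makes the constrained problem maximally hard for the agent), $\assume = \ltleventually{\neg x} \wedge \ltleventually{x}$, and $\varphi = x \wedge \neg x$ — or, to keep $\varphi$ satisfiable on \emph{some} finite trace but still unachievable here, $\varphi = \ltlnext{} \top$ would not work since a length-$\geq 2$ trace satisfies it; instead take $\varphi = \bot$ (equivalently any unsatisfiable \LTLf\ formula). First I would check $\C{S}_\rightarrow = \la \{x\},\emptyset, \assume\rightarrow\bot\ra = \la\{x\},\emptyset,\neg\assume\ra$ is realizable: the agent's strategy outputs $\endtrace$ immediately, producing a length-1 trace $\pi$ which is either $\{x\}$ or $\emptyset$; in the first case $\pi\not\models\ltleventually{\neg x}$ (on finite traces $\ltleventually{\neg x}$ needs some position where $x$ is false, and there is only one position, where $x$ holds) and in the second $\pi\not\models\ltleventually{x}$; either way $\pi\models\neg\assume$, hence $\pi\models\assume\rightarrow\bot$, so this strategy is winning for $\C{S}_\rightarrow$. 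Next I would check $\C{S}=\la\{x\},\emptyset,\assume,\bot\ra$ is unrealizable: any winning strategy must be an $\assume$-strategy, so on the input sequence $\mathbf{X}$ alternating $\{x\},\emptyset,\{x\},\emptyset,\dots$ (which satisfies $\assume$ as an infinite \LTL\ trace) the strategy must terminate, i.e. $n_{\strategy,\mathbf{X}}<\infty$, and then the induced finite trace lies in $\fulltraces{\strategy}$ and must satisfy $\varphi=\bot$ — impossible. Hence no winning strategy exists.

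\textbf{Main obstacle.} The only delicate point is making sure $\assume$ really is falsified on every length-1 trace (so that the trivial immediate-termination strategy works for $\C{S}_\rightarrow$) \emph{while} $\assume$ is genuinely satisfiable over infinite traces by environment moves the agent cannot prevent (so that $\C{S}$ forces termination on at least one such trace and thereby forces $\varphi$). Using $\assume=\ltleventually{\neg x}\wedge\ltleventually{x}$ handles both: over finite traces a single symbol cannot make both $x$ and $\neg x$ true at some position, and over infinite traces the alternating sequence on $x$ (a purely environmental variable) satisfies it regardless of $\strategy$. A secondary bookkeeping point is choosing $\varphi$ unsatisfiable so that termination is fatal; if one prefers a "nontrivial'' $\varphi$ one can instead let $\C{Y}=\{y\}$ and $\varphi = y \wedge \ltlnext{}\top$ together with $\varphi$-incompatible forcing, but the $\bot$ version is cleanest and suffices for the existential claim in the statement. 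I would present the example, verify the two bullet claims above in two short paragraphs, and conclude.
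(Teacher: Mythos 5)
Your proposal is correct and uses essentially the same idea as the paper's proof: choose an assumption that is falsified by every length-1 finite trace (so immediate termination trivially realizes $\assume \rightarrow \varphi$) yet is satisfiable by the environment over infinite traces regardless of the agent's moves (so any $\assume$-strategy is forced to terminate and produce a finite trace that must satisfy $\varphi$, which is impossible). The paper's witness is $\assume = \neg x \wedge \ltleventually{x}$ with the satisfiable objective $\varphi = \ltleventually{(x' \wedge y)}$ rather than your $\ltleventually{\neg x} \wedge \ltleventually{x}$ with $\varphi = \bot$, but this is only a cosmetic difference and your verification of both halves is sound.
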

\begin{proof}
Consider the constrained \LTLf specification $\C{S}=\la \set{x,x'}, \set{y}, \assume, \varphi\ra$ with $\assume=\neg x \wedge \ltleventually{x}$
and $\varphi= \ltleventually{(x' \!\wedge y)}$. We claim that $\C{S}$ is unrealizable. 
Indeed, take any $\mathbf{X}=X_1 X_2 \ldots$ such that $x \not \in X_1$, $x \in X_2$, and $x' \not \in X_i$ for all $i \geq 1$. 
Then no matter which strategy $\sigma$ is used, the infinite trace $\inftracestrat{\sigma}{\mathbf{X}}$ 
will satisfy $\assume$, and the induced finite trace $\tracestrat{\sigma}{\mathbf{X}}$, if it exists, will falsify $\varphi$ (as $x'$ never holds). 

Next consider %
$\C{S}_\rightarrow = \la \set{x}, \set{y}, \assume \rightarrow \varphi \ra$, and observe 
that $\assume \rightarrow \varphi \equiv x \vee (\Box \neg x) \vee \ltleventually{(x' \!\wedge y)}$. 
A simple winning strategy exists: output $\endtrace$ in the first time step. Indeed, %
every induced trace has length 1 and hence trivially satisfies  $x \vee \Box \neg x$. %
\end{proof}

With the next theorem, we observe a more fundamental difficulty in reducing constrained \LTLf\ synthesis problems to standard \LTLf\ synthesis: 
winning strategies for constrained problems may need an \emph{unbounded} number of time steps to realize the specification, 
a phenomenon that does not occur in standard \LTLf synthesis. 

\begin{theorem}\label{thm:no_reduction}
An \LTLf\ specification is realizable iff it admits a \emph{bounded} winning strategy, i.e.\ a strategy for which there exists %
$B > 0$
such that every induced finite trace has length at most $B$. %
There exist realizable constrained \LTLf\ specifications that do not possess any bounded winning strategy. 
\end{theorem}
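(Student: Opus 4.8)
The plan is to establish the two claims of Theorem~\ref{thm:no_reduction} separately. The equivalence ``an \LTLf\ specification is realizable iff it admits a bounded winning strategy'' has one trivial direction, since a bounded winning strategy is in particular a winning strategy; the content is the converse. For that I would use the automata-game view of \LTLf\ synthesis \cite{deg-var-ijcai15}: from $\varphi$ build an \NFA\ over $2^{\C{X}\cup\C{Y}}$ accepting exactly the finite traces that satisfy $\varphi$, determinize it to a \DFA\ $\mathcal{A}_\varphi$ with state set $Q$, and recast synthesis as a turn-based reachability game played on $\mathcal{A}_\varphi$: from a state $q$ the environment chooses $X\subseteq\C{X}$, the agent answers with $Y\subseteq\C{Y}$ together with a bit indicating whether to halt, the run moves to $\delta(q,X\cup Y)$, and the agent wins the play iff it halts in an accepting state. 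Realizability of $\la\C{X},\C{Y},\varphi\ra$ is then equivalent to the agent having a winning strategy in this game, and agent game-strategies correspond to \LTLf\ strategies.

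Next I would appeal to the fact that reachability games on finite arenas are positionally determined, and in particular that the reaching player (here the agent) has a positional winning strategy $\tau$ under which every consistent play reaches the target set within $|Q|$ moves --- the standard attractor/rank argument (equivalently, a consistent play that revisits a position before reaching the target can be short-circuited). The induced \LTLf\ strategy that simulates the run of $\mathcal{A}_\varphi$, follows $\tau$, and emits $\endtrace$ the first time the run sits in an accepting state is then a \emph{bounded} winning strategy: every induced finite trace has length at most $|Q|$, the strategy halts on every environment sequence (as $|Q|<\infty$), and each induced finite trace ends in an accepting state of $\mathcal{A}_\varphi$, hence satisfies $\varphi$. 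This proves the first claim with $B=|Q|$.

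For the second claim I would exhibit the concrete specification $\C{S}=\la\set{p},\set{q},\ltleventually{p},\ltleventually{p}\ra$, where the first occurrence of $\ltleventually{p}$ is the \LTL\ assumption (read over the infinite trace) and the second is the \LTLf\ objective (read over the finite trace). $\C{S}$ is realizable: the strategy that keeps going except that it emits $\endtrace$ exactly at the first step at which $p$ holds is a well-defined constrained strategy (such a step, if it exists, is unique), is an $\assume$-strategy (it fails to terminate only on environment sequences where $p$ never holds, which falsify $\ltleventually{p}$), and is winning (whenever it terminates, the induced finite trace ends in a position where $p$ holds, so it satisfies $\ltleventually{p}$). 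On the other hand, $\C{S}$ admits no bounded winning strategy: given a winning strategy $\sigma$ with bound $B$, feed it the environment sequence $\mathbf{X}=X_1X_2\cdots$ with $X_i=\emptyset$ for $i\le B$ and $X_i=\set{p}$ for $i>B$; the induced infinite trace satisfies $\ltleventually{p}$, so $\sigma$ must terminate, necessarily at some step $n\le B$; but then $p$ holds at none of the first $n$ positions of the induced finite trace (since $p\in\C{X}$ is uncontrollable and the environment keeps it false through step $B$), so that finite trace falsifies $\ltleventually{p}$, contradicting that $\sigma$ is winning.

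The delicate point is the bookkeeping in the reduction to a reachability game: the agent's move includes the choice to halt, so the reachability target is really the set of positions from which the agent can move into an accepting state and halt, and one has to verify that the positional winning strategy simultaneously (i) reaches that target within $|Q|$ steps on \emph{every} play and (ii) yields an \LTLf\ strategy that halts on \emph{every} infinite environment sequence, not only on those that follow a ``short'' play. Everything about the example of the second claim is elementary; the only care needed there is the ``at most one $\endtrace$'' well-formedness of the wait-until-first-$p$ strategy, which is immediate since the first step at which $p$ holds is unique.
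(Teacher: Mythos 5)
Your proposal is correct and follows essentially the same route as the paper: the first claim is obtained from the DFA-based reachability-game view of \LTLf\ synthesis (the paper simply cites the algorithm of De Giacomo and Vardi, whose produced strategies achieve $\varphi$ within the number of DFA states, which is exactly your attractor bound $B=|Q|$), and the second claim is witnessed by a liveness assumption that the environment can defer past any fixed bound. Your example $\la\set{p},\set{q},\ltleventually{p},\ltleventually{p}\ra$ differs only cosmetically from the paper's $\la \set{x}, \set{y}, \ltleventually{x}, \ltluntil{\neg y\,}{(x \land y)} \ra$, and the delaying-environment contradiction is argued identically.
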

\begin{proof}
A straightforward examination of the \LTLf\ synthesis algorithm\footnote{The algorithm can be easily modified to output $\endtrace$ once $\varphi$ has been satisfied to match our definition of strategy.  } in \cite{deg-var-ijcai15} shows that 
when %
$\la \Uvars, \Cvars, \varphi\ra$ is realizable, the produced strategy guarantees achievement of $\varphi$ in 
a number of time steps bounded by the number of states in a DFA for $\varphi$. 

For the second point, 
consider the constrained \LTLf specification $\C{S}=\la \set{x}, \set{y}, \ltleventually{x}, \ltluntil{\neg y\ }{(x \land y)} \ra$. 
Observe that %
$\C{S}$ is realizable, as it suffices to output $\neg y$ until the first $x$ is read, then output $\{y,\endtrace\}$.
Assume for a contradiction that there is a winning strategy $\sigma$ for $\C{S}$ and constant $B > 0$ such that 
 $n_{\strategy,\mathbf{X}}\leq B$ for every $\mathbf{X} \in \C{X}^\omega$. 
Define $\mathbf{X}^B = X_1^B X_2^B \ldots $ as follows: $X_i^B=\{x\}$ if $i=B+1$ and %
$X_i^B=\emptyset$ otherwise. 
The induced trace $\pi = \tracestrat{\sigma}{\mathbf{X}^B}$ has length at most $B$ and hence does not contain $x$.
It follows that $\pi \not \models \varphi$, contradicting our assumption
that $\sigma$ is a winning strategy. 
\end{proof}

While the implication-based approach does not work in general, 
we show that it can be made to work for environment assumptions that belong to the safe fragment:

\begin{theorem}\label{thm:safe_reduction}
When $\assume$ is a safe formula, 
the constrained \LTLf\ specification $\C{S}=\constrainedproblem{}$ is realizable iff the \LTLf specification $\C{S}'=\langle \Uvars, \Cvars, \assume' \rightarrow \varphi \rangle$ is realizable,
where $\assume'$ is %
obtained from $\assume$ by replacing every occurrence of $\ltlnext{\psi}$ by $\ltlweaknext{\psi}$. 
\end{theorem}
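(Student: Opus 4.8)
The plan is to show that $\C{S}$ is realizable iff $\C{S}'$ is, by transforming winning strategies in both directions, the whole argument resting on a translation lemma relating the safe \LTL formula $\assume$ to its finite counterpart $\assume'$. The lemma I would prove is: \emph{for every safe formula $\assume$, every $i \geq 1$, and every infinite trace $\trace_\infty$, we have $\trace_\infty \models_i \assume$ iff $\trace \models_i \assume'$ for every finite prefix $\trace \sqsubseteq \trace_\infty$ with $|\trace| \geq i$.} Specialised to $i = 1$ this reads: $\trace_\infty \models \assume$ iff every finite prefix of $\trace_\infty$ satisfies $\assume'$; equivalently, the finite traces that violate $\assume'$ form a set of bad prefixes for $\assume$ rich enough to catch every $\assume$-violating infinite trace.

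I would prove the lemma by structural induction on $\assume$, strengthened to simultaneously establish a \emph{prefix-monotonicity} property: if $\trace \sqsubseteq \trace'$ are finite traces and $i \leq |\trace|$, then $\trace' \models_i \assume'$ implies $\trace \models_i \assume'$. The Boolean base cases are immediate, as a trace and its prefixes agree on all positions up to the prefix length. For $\ltlnext{\psi} \mapsto \ltlweaknext{\psi'}$, one uses that $\trace \models_i \ltlweaknext{\psi'}$ holds vacuously when $i = |\trace|$ and otherwise reduces to $\trace \models_{i+1} \psi'$, then invokes the induction hypothesis at $i+1$. The conjunction case is easy because both ``$\trace_\infty$ satisfies it at $i$'' and ``every prefix satisfies it at $i$'' commute with $\wedge$ --- which is exactly why the lemma must be phrased with a universal quantifier over prefixes rather than with the existence of a single $\assume$-satisfying infinite extension. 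The disjunction and release cases are the delicate ones: when the infinite-trace side fails, the induction hypothesis supplies finitely many finite ``witness'' prefixes (one per failed disjunct; for release, one showing $\assume_2'$ fails at the offending position plus one for each earlier position at which $\assume_1'$ fails), and then taking the longest such prefix and applying prefix-monotonicity to the strict subformulas transports all the witnesses onto that single prefix, contradicting the finite-trace side evaluated there.

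Granting the lemma, the forward direction goes as follows. Given a winning $\assume$-strategy $\strategy$ for $\C{S}$, I define $\strategy'$ to copy $\strategy$ but to emit $\endtrace$ at the first step at which either $\strategy$ itself would, or the finite trace produced so far already fails $\assume'$. Then $\strategy'$ is a genuine \LTLf strategy: on every $\mathbf{X}$ on which $\strategy$ terminates it does too, and on every $\mathbf{X}$ on which $\strategy$ does not, $\inftracestrat{\strategy}{\mathbf{X}} \not\models \assume$ (since $\strategy$ is an $\assume$-strategy), so the lemma yields a finite prefix violating $\assume'$ that forces $\strategy'$ to terminate. Every trace in $\fulltraces{\strategy'}$ either violates $\assume'$ --- hence vacuously satisfies $\assume' \rightarrow \varphi$ --- or coincides with some $\tracestrat{\strategy}{\mathbf{X}}$ --- hence satisfies $\varphi$; in either case $\strategy'$ is winning for $\C{S}'$. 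Conversely, given a winning \LTLf strategy $\strategy'$ for $\C{S}'$, I define $\strategy$ to copy $\strategy'$ but to keep $\endtrace$ only when the finite trace produced by $\strategy'$ actually satisfies $\varphi$, never terminating on that $\mathbf{X}$ otherwise. On an $\mathbf{X}$ where $\strategy'$'s finite trace $\trace$ satisfies $\varphi$, $\strategy$ terminates and produces $\trace \models \varphi$; on an $\mathbf{X}$ where $\trace$ fails $\varphi$, $\trace$ must also fail $\assume'$ (as $\trace$ satisfies $\assume' \rightarrow \varphi$), so by the lemma the infinite trace $\inftracestrat{\strategy}{\mathbf{X}}$, which equals $\inftracestrat{\strategy'}{\mathbf{X}}$ and extends $\trace$, violates $\assume$. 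Thus on every $\mathbf{X}$ either $\strategy$ terminates or its infinite trace violates $\assume$ (so $\strategy$ is an $\assume$-strategy), and every finite trace it produces satisfies $\varphi$; hence $\strategy$ is winning for $\C{S}$.

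I expect the main obstacle to be the translation lemma, and within it the $\vee$ and $\ltlrelease{}{}$ cases of the simultaneous induction: one must check that prefix-monotonicity is only ever invoked on proper subformulas, and keep the bookkeeping between the position index $i$ and prefix lengths consistent (every quantified prefix must have length at least $i$, and the ``longest witness'' prefix must still reach all positions involved). The two strategy constructions, by contrast, are routine once one verifies that each transformed strategy is a well-defined function of the observed history and emits $\endtrace$ the correct number of times --- exactly once for the \LTLf strategy $\strategy'$, at most once for the $\assume$-strategy $\strategy$.
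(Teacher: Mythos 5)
Your proposal is correct and follows essentially the same route as the paper's proof: both directions transform strategies by adjusting where $\endtrace$ is emitted, terminating the \LTLf strategy as soon as the produced finite trace witnesses a violation of $\assume'$ (equivalently, a bad prefix of $\assume$), and stripping $\endtrace$ from the constrained strategy when the objective is not yet met, relying on safety of $\assume$ to guarantee the resulting infinite trace violates it. The only substantive difference is that you state and prove in full (via the simultaneous induction with prefix-monotonicity) the correspondence between $\assume$ on infinite traces and $\assume'$ on finite prefixes, which the paper invokes only implicitly through the equivalence $\neg \ltlweaknext{} \psi \equiv \ltlnext{} \neg \psi$ and the definition of bad prefixes; your lemma is correct and fills in exactly that gap.
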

\begin{proof}[Proof sketch]
Let $\strategy'$  be a winning strategy for $%
\langle \Uvars, \Cvars, \assume' \rightarrow \varphi \rangle$,  
with $\assume$ a safe formula. To define a winning strategy $\sigma$ for $%
\constrainedproblem{}$, we set $\sigma(X_1\cdots X_n)$ equal to
\begin{itemize}[noitemsep,topsep=0pt]
\item $\sigma(X_1 \cdots X_n) \setminus \{\endtrace\}$, when $\endtrace \in \sigma(X_1 \cdots X_n)$ and \newline $(X_1 \cup \strategy(X_1)) \ldots (X_n \cup \strategy(X_1\cdots X_n)) \not \models \assume'$; 
\item $\sigma(X_1 \cdots X_n)$, otherwise.
\end{itemize}
For the other direction, given a winning strategy $\strategy$ for $\constrainedproblem{}$, %
we can define a winning strategy $\sigma'$ for %
$\langle \Uvars, \Cvars, \assume' \rightarrow \varphi \rangle$ 
by setting $\sigma'(X_1\cdots X_n)$ equal to
\begin{itemize}
\item $\strategy(X_1\cdots X_n) \cup \set{\endtrace}$, if $(X_1 \cup \strategy(X_1)) \ldots (X_n \cup \strategy(X_1\cdots X_n))$ is a bad prefix for $\assume$,
and $\endtrace \not \in \sigma'(X_1\cdots X_k)$ for $k < n$;
\item $ \strategy(X_1\cdots X_n) \setminus \set{\endtrace}$, if 
$\endtrace \in \sigma'(X_1\cdots X_k)$ for some $k < n$;
\item $\sigma'(X_1\cdots X_n)= \strategy(X_1\cdots X_n)$, otherwise.\qedhere
\end{itemize} 
\end{proof}

The following example shows that it is essential in the preceding theorem to use $\assume' \rightarrow \varphi$ rather than $\assume \rightarrow \varphi$:

\begin{example}
If we let $\assume=\Box(\neg x \vee \ltlnext{x})$  %
and $\varphi = \neg x \wedge y$, then the constrained specification $\constrainedproblem{}$
is not realizable (as the environment can output $x$ in the first step), but the \LTLf\ specification $\langle \Uvars, \Cvars, \assume \rightarrow \varphi \rangle$ is realizable with a strategy that outputs $\{y, \endtrace\}$ in the first step. 
Indeed, if the environment outputs $x$, then $\neg \assume \equiv \ltleventually{(x \wedge \neg \ltlnext{x})}$ holds in the induced length-1 trace;
if we have $\neg x$ instead, then $\varphi$ holds. %
\end{example}
 
Note however that the negative result in the general case (Theorem \ref{thm:implication-fails}) continues to hold if $\assume' \rightarrow \varphi$ is used instead of $\assume \rightarrow \varphi$, since the formulas in that proof do not involve $\ltlnext{}$. 

Another interesting observation is the environment assumptions $\initform$ and $\effform$
used to encode the initial state and action effects in planning are safe formulas. 
This explains why these constraints can be properly encoded in \LTLf\ using implication 
and $\ltlweaknext{}$ rather than $\ltlnext{}$. We note that if we encode planning 
using constrained \LTLf\ synthesis, then we can use $\ltlnext{}$ in the effect axioms, 
which is arguably more natural.

\subsubsection{Reduction to LTL Synthesis}
\def\translate{\tau}
Every \LTLf formula $\varphi$ over %
$\C{P}$ can be polynomially transformed into an %
\LTL formula $\varphi_{\mathsf{inf}}$ over %
$\C{P} \cup \set{\mathsf{alive}}$ such that %
$\pi \models \varphi_{\mathsf{inf}}$ iff $\pi' \models \varphi$ for some finite prefix $\trace' \sqsubseteq \trace$ \cite{DBLP:conf/ijcai/GiacomoV13}.
Intuitively, %
$\mathsf{alive}$ holds for the duration of the (simulated) finite trace. 
Formally, 
$\varphi_{\mathsf{inf}} \defeq \translate(\varphi) \land \mathsf{alive} \land \left( \ltluntil{\mathsf{alive}}{(\ltlalways{\neg \mathsf{alive}})} \right)$, 
where:
\begin{center}
$  \translate(p) = p %
 \quad
\translate(\neg \varphi) = \neg \translate(\varphi) \quad
    \translate(\varphi_1 \land \varphi_2) =\translate(\varphi_1) \land \translate(\varphi_2)$\\
    $    \translate(\ltlnext{\varphi}) = \ltlnext{(\mathsf{alive} \land \translate(\varphi))} \hfill
    \translate(\ltluntil{\varphi_1}{\varphi_2}) = \ltluntil{\tau(\varphi_1)}{(\mathsf{alive} \land \translate(\varphi_2))} $
\end{center}
We extend this transformation as follows: 
\begin{align*}
    \psi_{\endtrace} &\defeq \ltlalways{(\endtrace \leftrightarrow \mathsf{alive} \land \ltlnext{\neg \mathsf{alive}})} \land \ltlalways{(\endtrace \rightarrow \ltlnext{\ltlalways{ \neg \endtrace}})} \\
    \psi_{\assume, \varphi} &\defeq \psi_{\endtrace} \land \left ((\assume \vee \ltleventually{\endtrace}) \rightarrow \varphi_{\mathsf{inf}}\right)
\end{align*}
Here $\psi_{\endtrace}$ forces the agent to trigger variable $\endtrace$ when the end of the trace is simulated and also ensures that $\endtrace$ occurs at most once. 
Formula $\psi_{\assume, \varphi}$ ensures that $\varphi_{\mathsf{inf}}$ is satisfied -- i.e., a finite trace that satisfies $\varphi$ and ends is simulated -- when either the environment assumption $\assume$ holds or $\endtrace$ occurs.

\begin{theorem}\label{reduction-infinite}
The constrained \LTLf\ specification $\C{S}=\constrainedproblem{}$ is realizable iff \LTL\ specification $\C{S}^\infty=\langle \Uvars, \Cvars \cup \set{\mathsf{alive},\endtrace}, \psi_{\assume, \varphi} \rangle$ is realizable. Moreover, 
for every winning strategy $\sigma$ for $\C{S}^\infty$, 
the strategy $\sigma'$ defined by $\strategy'(X_1\cdots X_n) \defeq \strategy(X_1 \cdots X_n) \setminus \set{\mathsf{alive}}$ is a winning strategy for $\C{S}$. 

\end{theorem}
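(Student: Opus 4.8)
The plan is to establish the equivalence by exhibiting the two strategy translations, with the ``moreover'' claim falling out of the $\C{S}^\infty \Rightarrow \C{S}$ direction. For the $\C{S}^\infty \Rightarrow \C{S}$ direction, fix a winning strategy $\sigma$ for $\C{S}^\infty$ and set $\sigma'(X_1\cdots X_n) \defeq \sigma(X_1\cdots X_n) \setminus \{\mathsf{alive}\}$. First I would observe that $\sigma'$ is a legal constrained \LTLf\ strategy: every $\inftracestrat{\sigma}{\mathbf{X}}$ satisfies $\psi_{\assume,\varphi}$, hence its conjunct $\psi_{\endtrace}$, which forces $\endtrace$ to occur at most once, so $\sigma'$ triggers $\endtrace$ at most once on each $\mathbf{X}$. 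Next, for the $\assume$-strategy condition, suppose $n_{\sigma',\mathbf{X}} = \infty$; then $\endtrace$ never holds on $\inftracestrat{\sigma}{\mathbf{X}}$, so by $\psi_{\endtrace}$ the flag $\mathsf{alive}$ never switches off, whence $\varphi_{\mathsf{inf}}$ --- which demands that $\mathsf{alive}$ hold on a nonempty finite prefix and then fail forever --- is false; since the trace satisfies $(\assume \vee \ltleventually{\endtrace}) \rightarrow \varphi_{\mathsf{inf}}$ and $\ltleventually{\endtrace}$ is false, $\assume$ must be false, and because $\assume$ mentions neither $\mathsf{alive}$ nor $\endtrace$ while $\inftracestrat{\sigma'}{\mathbf{X}}$ agrees with $\inftracestrat{\sigma}{\mathbf{X}}$ on $\Uvars \cup \Cvars$, we get $\inftracestrat{\sigma'}{\mathbf{X}} \not\models \assume$. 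Finally, if $n_{\sigma',\mathbf{X}} = n < \infty$, then $\endtrace$ holds on $\inftracestrat{\sigma}{\mathbf{X}}$, so $\varphi_{\mathsf{inf}}$ holds, and together with $\psi_{\endtrace}$ this forces the $\mathsf{alive}$-steps to be exactly $\{1, \ldots, n\}$; correctness of the $\tau$-encoding then yields that the length-$n$ $\mathsf{alive}$-prefix, projected onto $\Uvars \cup \Cvars$ --- which is precisely $\tracestrat{\sigma'}{\mathbf{X}}$ --- satisfies $\varphi$. Hence $\sigma'$ is winning for $\C{S}$, which also proves $\C{S}^\infty$ realizable $\Rightarrow$ $\C{S}$ realizable.

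For the $\C{S} \Rightarrow \C{S}^\infty$ direction, given a winning strategy $\sigma$ for $\C{S}$, I define $\hat\sigma$ for $\C{S}^\infty$ by: if $\sigma$ has not triggered $\endtrace$ on any prefix of $X_1 \cdots X_{n-1}$, set $\hat\sigma(X_1\cdots X_n) \defeq \sigma(X_1 \cdots X_n) \cup \{\mathsf{alive}\}$ (note this includes $\endtrace$ exactly when $\sigma(X_1\cdots X_n)$ does); otherwise set $\hat\sigma(X_1\cdots X_n) \defeq \emptyset$. Then $\mathsf{alive}$ holds exactly on the steps up to and including termination (or forever, if $\sigma$ never terminates), and $\endtrace$ holds exactly at the terminating step (or never), so $\psi_{\endtrace}$ holds on every induced trace. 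If $\sigma$ does not terminate on $\mathbf{X}$, then since $\sigma$ is an $\assume$-strategy $\inftracestrat{\sigma}{\mathbf{X}} \not\models \assume$ and $\ltleventually{\endtrace}$ is false, so the implication in $\psi_{\assume,\varphi}$ is vacuous; if $\sigma$ terminates at step $n$, then $\tracestrat{\sigma}{\mathbf{X}} \models \varphi$ and, by correctness of the $\tau$-encoding applied to the $\mathsf{alive}$-prefix $\{1,\ldots,n\}$, $\inftracestrat{\hat\sigma}{\mathbf{X}} \models \varphi_{\mathsf{inf}}$, so $\psi_{\assume,\varphi}$ holds. Thus $\hat\sigma$ is winning for $\C{S}^\infty$.

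The routine part of the argument is the passage between ``$\tracestrat{\cdot}{\cdot} \models \varphi$'' and ``$\varphi_{\mathsf{inf}}$ holds on the infinite simulation'', which is exactly what the cited $\tau$-encoding provides. The main obstacle is the $\assume$-strategy condition in the $\C{S}^\infty \Rightarrow \C{S}$ direction: one must rule out a winning strategy for $\C{S}^\infty$ that never lets $\mathsf{alive}$ stop along some $\assume$-satisfying trace. This is precisely where the disjunct $\ltleventually{\endtrace}$ in the antecedent of $\psi_{\assume,\varphi}$, together with the tight coupling between $\endtrace$ and the last $\mathsf{alive}$-step enforced by $\psi_{\endtrace}$, does the work: absence of $\endtrace$ makes $\varphi_{\mathsf{inf}}$ false and hence, through the implication, forces $\assume$ to be false; and the same disjunct guarantees that even a strategy choosing to terminate on an $\assume$-violating trace still outputs a finite trace satisfying $\varphi$, as the definition of $\fulltraces{\sigma}$ for constrained specifications requires.
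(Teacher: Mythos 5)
Your proposal is correct and follows essentially the same route as the paper's proof: the same two strategy translations (adding $\mathsf{alive}$ while $\endtrace$ has not yet been emitted in one direction, stripping $\mathsf{alive}$ in the other), the same case split on whether $\endtrace$ occurs, and the same use of $\psi_{\endtrace}$ to show that absence of $\endtrace$ falsifies $\varphi_{\mathsf{inf}}$ and hence forces $\neg\assume$ through the implication. The only (immaterial) difference is that after termination you output $\emptyset$ where the paper keeps outputting $\sigma(X_1\cdots X_n)$.
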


\section{Algorithms for Constrained \LTLf\ Synthesis}
\label{sec:alg-cltlf-synth}
\LTL and \LTLf realizability are both 2EXP-complete  \cite{pnu-ros-popl89,deg-var-ijcai15}, and we can show the same holds for constrained \LTLf\ problems. The upper bound exploits the reduction to \LTL\ (Theorem \ref{reduction-infinite}), and the lower bound is inherited from (plain) \LTLf synthesis, which is a special case of constrained \LTLf\ synthesis (Theorem \ref{thm:reduction}). 

\begin{theorem}\label{thm:complexity}
Constrained \LTLf realizability (resp.\ synthesis) is 2EXP-complete (resp.\ in 2EXP).  
\end{theorem}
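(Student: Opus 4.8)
The plan is to obtain the upper bound from the reduction to \LTL\ synthesis (Theorem~\ref{reduction-infinite}) composed with the classical 2EXP procedure for \LTL\ realizability \cite{pnu-ros-popl89}, and to obtain the matching lower bound essentially for free, since plain \LTLf\ synthesis is the special case $\assume=\top$ of constrained \LTLf\ synthesis (Theorem~\ref{thm:reduction}) and is already 2EXP-hard \cite{deg-var-ijcai15}.

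For the upper bound, take a constrained \LTLf\ specification $\C{S}=\constrainedproblem{}$. The first step is to check that the \LTL\ formula $\psi_{\assume,\varphi}$ produced by the construction preceding Theorem~\ref{reduction-infinite} has size linear in $|\assume|+|\varphi|$: the translation $\translate(\cdot)$ is defined by structural recursion and introduces only a constant amount of extra material per subformula (a copy of $\mathsf{alive}$, at most one temporal operator, one conjunction), so $|\varphi_{\mathsf{inf}}|=O(|\varphi|)$; the auxiliary formula $\psi_{\endtrace}$ is of constant size modulo the single fresh atom $\endtrace$; hence $|\psi_{\assume,\varphi}|=O(|\assume|+|\varphi|)$, and only the two extra controllable variables $\mathsf{alive},\endtrace$ are added. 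By Theorem~\ref{reduction-infinite}, $\C{S}$ is realizable iff $\C{S}^\infty=\langle\Uvars,\Cvars\cup\set{\mathsf{alive},\endtrace},\psi_{\assume,\varphi}\rangle$ is realizable, and \LTL\ realizability is decidable in time doubly exponential in the size of the formula; since $|\psi_{\assume,\varphi}|$ is polynomial in $|\C{S}|$, this yields a 2EXP decision procedure. For synthesis, the \LTL\ synthesis procedure on a realizable $\C{S}^\infty$ returns a finite-state transducer of at most doubly-exponential size computing a winning strategy $\sigma$ for $\C{S}^\infty$; by the ``moreover'' part of Theorem~\ref{reduction-infinite}, erasing $\mathsf{alive}$ from the output labels of this transducer (a linear-time post-processing step) yields a finite-state winning strategy $\sigma'$ for $\C{S}$. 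Hence constrained \LTLf\ synthesis is in 2EXP.

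For the lower bound, Theorem~\ref{thm:reduction} supplies a trivial linear-time reduction from plain \LTLf\ realizability to constrained \LTLf\ realizability, mapping $\la\Uvars,\Cvars,\varphi\ra$ to $\la\Uvars,\Cvars,\top,\varphi\ra$ and preserving realizability in both directions; as \LTLf\ realizability is 2EXP-hard \cite{deg-var-ijcai15}, so is constrained \LTLf\ realizability, and together with the upper bound it is 2EXP-complete. The one place that needs genuine attention — the ``main obstacle'', such as it is — is verifying that the reduction behind Theorem~\ref{reduction-infinite} is \emph{polynomial} rather than merely effective, since any super-polynomial blow-up in $|\psi_{\assume,\varphi}|$ would degrade the composed bound to 3EXP; the structural-recursion analysis sketched above rules this out, and the remainder is bookkeeping.
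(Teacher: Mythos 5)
Your proposal is correct and follows exactly the route the paper takes: the upper bound via the reduction to \LTL\ synthesis of Theorem~\ref{reduction-infinite} composed with the Pnueli--Rosner 2EXP procedure, and the lower bound inherited from plain \LTLf\ realizability via the $\assume=\top$ special case of Theorem~\ref{thm:reduction}. Your explicit check that $|\psi_{\assume,\varphi}|$ is only linear in $|\assume|+|\varphi|$ is a worthwhile detail that the paper leaves implicit.
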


It follows from Theorem \ref{thm:complexity} that the reduction to infinite \LTL\ realizability and synthesis
yields worst-case optimal algorithms. 
However, we argue that the reduction to LTL  %
does not provide a practical approach. 
Indeed, while \LTL\  and \LTLf\ synthesis share the same worst-case complexity, 
recent experiments have shown 
that \LTLf\ is much easier to handle in practice \cite{zhu-etal-ijcai17}. Indeed, 
state-of-the-art approaches to \LTL synthesis rely on first translating the \LTL\ formula into a suitable infinite-word automata,
then solving a two-player game on the resulting automaton. The computational bottleneck is the complex 
transformations of infinite-word automata, for which no efficient implementations exist. %
Recent approaches to \LTLf\ synthesis also adopt an automata-game approach, but 
\LTLf\ formulae require only finite-word automata (NFAs and DFAs), which can be manipulated more efficiently. 

The preceding considerations motivate us to explore an alternative approach to constrained \LTLf synthesis,
which involves a reduction to \DBA games. Importantly, the \DBA\ can be straightforwardly constructed from DFAs 
for the constraint and objective formulae, allowing us to sidestep the difficulties of manipulating infinite-word automata.

\subsection{\DBA for Constrained Specifications} \label{dba-construction}
For the rest of this section, we assume $\assume = \assume_s \land \assume_c$, where $\assume_s$ is a safe formula and  $\assume_c$  is a co-safe formula\footnote{
If we want to have \emph{only} a safe 
(resp.\ co-safe) constraint, it suffices to use a trivial constraint $\assume_c=\top$ (resp.\ $\assume_s = \bot  \ltlrelease{}{}(p \vee \neg p)$). }, both defined over 
$\C{X}\cup \C{Y}$. Safe and co-safe formulae are well-known \LTL\ fragments \cite{DBLP:journals/fmsd/KupfermanV01} of  proven utility. Safe formulas are prevalent in \LTL\ %
specifications and a key part of the encoding of planning as \LTLf\ synthesis (see Section \ref{sec:planning}); the usefulness of co-safe formulas can be seen from our example (Section \ref{sec:bigexample}) and their adoption in work on robot planning (see e.g.\ \cite{DBLP:conf/aaai/LahijanianAFKV15}).

Our aim %
is to construct a \DBA that accepts infinite traces $\pi$ over $2^{\C{X}\cup \C{Y} \cup \set{\endtrace}}$
such that either (i) $\pi$ contains a single occurrence of $\endtrace$ which induces a finite prefix $\pi'$ with $\pi' \models \varphi$,
or (ii) $\pi$ doesn't contain $\endtrace$ and $\pi \not \models \assume_s \wedge \assume_c$. 
Such a \DBA $\constraintdba$ %
can be defined by combining three DFAs:
 $\C{A}_s = \langle \alphPV, Q_s, \delta_s, (q_0)_s, \QF{s} \rangle$ accepts the bad prefixes of $\assume_s$;
$\C{A}_c = \langle \alphPV, Q_c, \delta_c, (q_0)_c, \QF{c} \rangle$ accepts the good prefixes of $\assume_c$;
and $\C{A}_g = \langle \alphPV, Q_g, \delta_g, (q_0)_g, \QF{g} \rangle$ accepts models of $\varphi$. 
Recall that these DFAs can be built in double-exponential time.

Formally, we let $\constraintdba = \langle 2^{\C{P}\cup \{\endtrace\}}, Q, \delta, q_0, \QF{}\rangle$,  
where $Q$, $q_0$, and $\QF{}$ are defined as follows: %
\begin{itemize}[noitemsep,topsep=0pt]
\item $Q \defeq \left ((Q_s\cup\{q_{\mathsf{bad}}\}) \times (Q_c\cup\{q_{\mathsf{good}}\}) \times Q_g) \right)\cup \{q_\top, q_\bot\} $
\item $q_0 \defeq ((q_0)_{s}, (q_0)_{c}, (q_0)_{g})$ 
\item $\QF{} \defeq \{(q_s,q_c,q_g) \in Q \mid 
q_s = q_\mathsf{bad} \text{ or } q_c\neq q_\mathsf{good} \} \cup  \{q_\top\}$
\end{itemize}
For `regular' symbols $\alphsym \in 2^{\C{P}}$ (i.e., $\endtrace \not \in \alphsym$), 
we set $\delta((q_s,q_c,q_g), \alphsym)=(\delta_s^*(q_s,\alphsym), \delta_c^*(q_c,\alphsym), \delta_g(q_g, \alphsym))$ where:
{
\small
\begin{align*}
\delta_s^*(q_s,\alphsym) &= 
\begin{cases}
q_\mathsf{bad}, & \text{if } 
q_s = q_\mathsf{bad} \text{ or } \delta_s(q_s,\alphsym) \in \QF{s}\\
\delta_s(q_s,\alphsym),& \text{otherwise}
\end{cases}\\
\delta_c^*(q_c,\alphsym) &= 
\begin{cases}
q_\mathsf{good}, & \text{if } 
q_c = q_\mathsf{good} \text{ or } \delta_c(q_c,\alphsym) \in \QF{c}\\
\delta_c(q_c,\alphsym),& \text{otherwise}
\end{cases}
\end{align*}}
For $\alphsym$ with $\endtrace \in \alphsym$,
we set $\delta((q_s,q_c,q_g), \alphsym)=q_\top$ if 
$\delta_g(q_g,\alphsym) \in \QF{g}$,
and $\delta((q_s,q_c,q_g), \alphsym)=q_\bot$ in all other cases. %
Accepting state $q_\top$ is quasi-absorbing: $\delta(q_\top, \alphsym) = q_\top$ when $\endtrace \not\in \alphsym$, and $\delta(q_\top, \alphsym) = q_\bot$ otherwise. This forces winning strategies to output variable $\endtrace$ at most once.
Finally, $q_\bot$ is an absorbing state: $\delta(q_\bot, \alphsym) = q_\bot$ for every $\alphsym \in 2^{\C{P} \cup \{\endtrace\}}$.

\begin{theorem}\label{dba-thm}
The \DBA $\constraintdba$ accepts infinite traces $\pi$ such that either: (i) 
$\pi(1)\cdots\pi(n) \models \varphi$ and $\endtrace$ occurs only in $\pi(n)$,
or (ii) $\pi \not\models \assume_s \wedge \assume_c$ and $\endtrace$ does not occur in $\pi$. 
$\constraintdba$ can be constructed in double-exponential time in 
$|\assume_s| + |\assume_c| + |\varphi|$.
\end{theorem}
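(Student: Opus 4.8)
The plan is to establish the claimed language equality by a case analysis on how often the end-of-trace marker $\endtrace$ occurs along $\pi$, using the bad-prefix/good-prefix characterizations of safe and co-safe formulae recalled in the preliminaries, and then to read the time bound off the sizes of the three component DFAs. Before the cases I would record the structural invariants that make the construction work: $q_\bot$ is absorbing and non-accepting; $q_\top$ is accepting, is preserved under every $\endtrace$-free input, and goes to $q_\bot$ on $\endtrace$; the $s$-coordinate latches in $q_{\mathsf{bad}}$ the first time $\C{A}_s$ would reach $\QF{s}$; and the $c$-coordinate latches in $q_{\mathsf{good}}$ the first time $\C{A}_c$ would reach $\QF{c}$. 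I would observe that this latching is sound -- any extension of a bad prefix of $\assume_s$ is again a bad prefix, and likewise for good prefixes of $\assume_c$ -- so that until it latches, the $s$-coordinate (resp.\ $c$-coordinate) after $\pi(1)\cdots\pi(i)$ is exactly the state reached by $\C{A}_s$ (resp.\ $\C{A}_c$) on that prefix; and throughout I read $\delta_g$ on a symbol containing $\endtrace$ as $\delta_g$ applied to its restriction to $\C{P}$.

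In the case where $\endtrace$ occurs in $\pi$, let $n$ be its first occurrence: the first $n-1$ transitions stay inside the product states, and the transition on $\pi(n)$ goes to $q_\top$ exactly when $\delta_g$ reaches $\QF{g}$ on $\pi(1)\cdots\pi(n)$, i.e.\ exactly when $\pi(1)\cdots\pi(n)\models\varphi$, and to $q_\bot$ otherwise. If $\endtrace$ occurs \emph{only} at position $n$, the run stays in $q_\top$ forever afterwards, so $\pi$ is accepted iff $\pi(1)\cdots\pi(n)\models\varphi$ -- precisely condition (i); if $\endtrace$ occurs at least twice, the run is driven to the absorbing non-accepting $q_\bot$ by its second occurrence and $\pi$ is rejected, which is consistent since then neither (i) nor (ii) applies.

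In the remaining case, $\pi$ has no $\endtrace$, so the run stays forever in the product states, and $\pi$ is accepted iff $q_s=q_{\mathsf{bad}}$ or $q_c\ne q_{\mathsf{good}}$ holds at infinitely many positions. Here I would exploit monotonicity of the two latches: if some prefix of $\pi$ is a bad prefix of $\assume_s$, then eventually $q_s=q_{\mathsf{bad}}$ for good, so the run is accepting; if no prefix of $\pi$ is a good prefix of $\assume_c$, then $q_c\ne q_{\mathsf{good}}$ at every position, again accepting; and otherwise eventually $q_s\ne q_{\mathsf{bad}}$ and $q_c=q_{\mathsf{good}}$, leaving only finitely many accepting positions, so $\pi$ is rejected. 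Invoking the preliminaries -- ``$\pi$ has a bad prefix of $\assume_s$'' iff $\pi\not\models\assume_s$, and ``$\pi$ has a good prefix of $\assume_c$'' iff $\pi\models\assume_c$ -- I conclude that $\pi$ is accepted iff $\pi\not\models\assume_s\wedge\assume_c$, which is condition (ii); combining the three cases gives the stated characterization of $\C{L}(\constraintdba)$.

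For the time bound, $\C{A}_s$, $\C{A}_c$, and $\C{A}_g$ each have double-exponentially many states in $|\assume_s|$, $|\assume_c|$, and $|\varphi|$ respectively (single-exponential NFAs from the cited constructions, followed by the powerset construction), so the state set of $\constraintdba$ -- their product together with $q_\top$ and $q_\bot$ -- is still double-exponential in $|\assume_s|+|\assume_c|+|\varphi|$, and $q_0$, $\QF{}$, and each value of $\delta$ are computable in time polynomial in that product; hence $\constraintdba$ is built in double-exponential time. I expect the main obstacle to be the bookkeeping around the latched coordinates: checking that $q_{\mathsf{bad}}/q_{\mathsf{good}}$ and the auxiliary transition functions $\delta_s^*$, $\delta_c^*$ faithfully mirror the \emph{semantic} bad/good-prefix notions, so that the B\"uchi acceptance set realizes ``$\pi\not\models\assume_s\wedge\assume_c$'' correctly -- a point where a naive product of the three DFAs would go wrong.
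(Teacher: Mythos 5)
Your proof is correct and follows essentially the same route as the paper's: a case analysis on the number of occurrences of $\endtrace$ (none, exactly one, more than one), using the bad-prefix/good-prefix characterizations of safe and co-safe formulae to translate the latched $q_{\mathsf{bad}}/q_{\mathsf{good}}$ coordinates into $\pi\not\models\assume_s$ and $\pi\not\models\assume_c$, plus the standard size bound on the product of the three double-exponential DFAs. The only difference is presentational -- you prove a biconditional within each case, whereas the paper separates the two language inclusions -- and your explicit statement of the latching invariants is a welcome addition rather than a deviation.
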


\subsection{\DBA Games}\label{ssec:dba-games}
Once a %
specification has been converted into a \DBA, 
realizability and synthesis can be reduced to \DBA\ games. We briefly recall next
the definition of such games and how winning strategies can be computed.

A \emph{\DBA (or B\"{u}chi) game} (see e.g.\ \cite{ChatterjeeHenzingerPiterman06_AlgorithmsForBuchiGames}) 
is a two-player game given by a tuple $\langle \C{X}, \C{Y}, \C{A} \rangle$, 
where $\C{X}$ and $\C{Y}$ are disjoint finite sets of variables and $\C{A}$ is a \DBA with alphabet $2^{\C{X} \cup \C{Y}}$.
A \emph{play} %
is an infinite sequence of rounds, where in each round, Player I
selects %
$X_i \subseteq \C{X}$, then Player II %
selects %
$Y_i \subseteq \C{Y}$. 
A play is winning if it yields a word  $(X_1 \cup Y_1) (X_2 \cup Y_2) \ldots $ that belongs to $\lang(\C{A})$.
A game is \emph{winning} if there exists a strategy  $\strategy: (2^{\Uvars})^* \rightarrow 2^{\Cvars}$ 
such that for every infinite sequence $X_1 X_2 \ldots \in \C{X}^\omega$, 
the word $(X_1 \cup \strategy(X_1)) (X_2 \cup \strategy(X_1 X_2)) \ldots$ obtained by following $\strategy$ belongs to $ \lang(\C{A})$.
In this case, we call $\sigma$ a \emph{winning strategy}.

Existence of a winning strategy for a DBA game $\C{G} = \langle \C{X}, \C{Y}, \C{A} \rangle$
based upon $\C{A} = \langle 2^{\C{X}\cup \C{Y}}, Q, \delta, q_0, \QF{}\rangle$ 
can be determined by computing the winning region of $\C{G}$. This is done in two steps.
First, we compute the set $RA(\C{G})$ of \emph{recurring accepting} states, 
i.e. those $q \in \QF{}$ such that Player II has a strategy from state $q$ to revisit $\QF{}$ %
infinitely often.
Next, we define the \emph{winning region} $Win(\C{G})$ of $\C{G}$ as those states in $q \in Q$ for which 
Player II has a strategy for reaching a state in $RA(\C{G})$.
The sets 
$RA(\C{G})$ and $Win(\C{G})$ can be computed in polynomial time by utilizing the
\emph{controllable predecessor} operator: 
$\cpredecessor(S) = \{q \in Q \mid \forall X \subseteq \C{X}\, \exists Y \subseteq \C{Y}:  \delta(q,X\cup Y) \in S\}$.
We set $\win^{0}(S) = S$ and 
$\win^{i+1}(S) = \win^{i}(S) \cup \cpredecessor(\win^{i}(S))$. 
Intuitively, $\win^i(S)$ contains those states from 
which Player II has a strategy for reaching (or returning to) $S$ in at most $i$ rounds.  
The limit $\lim_i \win^i(S)$ exists because $\win^{i}(S) \subseteq \win^{i+1}(S)$, and convergence is achieved in a finite number of iterations bounded by $|Q|$.
To compute $RA(\C{G})$, 
we set $S_{\!1} = \QF{}$ and let $S_{\!k+1}=S_{\!k} \cap \lim_i \win^i(S_k)$. The set $S_{\!k}$ contains those accepting states from which Player II has a strategy for visiting $S_{\!k}$ no less than $k$ times.
The limit $\lim_k S_k$ exists because $S_{\!k}  \subseteq S_{\!k+1}$, and convergence is achieved in a finite number of iterations bounded by $|\QF{}|$.
$RA(\C{G})$ is the finite limit of $S_{\!k}$, and
the set $Win(\C{G})$ is then the finite limit of $\win^i(RA(\C{G}))$.
It is easy to see that $Win(\C{G})$ can be computed in polynomial time w.r.t.\ the size of the DBA $\C{A}$. 
The following well-known result shows how we can use 
$Win(\C{G})$ to decide if $\C{G}$ is winning.

\begin{theorem}\label{thm:winning}
$\C{G}$ is winning iff $q_0 \in Win(\C{G})$.
\end{theorem}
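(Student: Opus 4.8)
This is the classical fixpoint characterization of the winning region of a B\"{u}chi game, so the plan is to prove the two directions separately, in each case reading a concrete memoryless strategy off the iterates $\win^i(\cdot)$ and $S_k$ used to define $RA(\C{G})$ and $Win(\C{G})$. For the direction ``$q_0 \in Win(\C{G})$ implies $\C{G}$ winning'', I would build a Player II strategy in two layers. The \emph{reach layer}: since $Win(\C{G}) = \lim_i \win^i(RA(\C{G}))$, give each state $q \in Win(\C{G})$ the rank $r(q)$, i.e., the least $i$ with $q \in \win^i(RA(\C{G}))$; whenever $r(q) \geq 1$, the definition of $\cpredecessor$ supplies a Player II response moving the play to a state of strictly smaller rank, so following these responses reaches $RA(\C{G})$ in at most $r(q_0)$ rounds without ever leaving $Win(\C{G})$. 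The \emph{recurrence layer}: because $RA(\C{G})$ is a fixpoint of $S \mapsto S \cap \lim_i \win^i(S)$ with $RA(\C{G}) \subseteq \QF{}$, every state of $RA(\C{G})$ is accepting and, after at least one move, Player II can force the play back into $RA(\C{G})$ within finitely many rounds while staying inside $Win(\C{G})$. Composing the two layers -- run the reach layer until $RA(\C{G})$ is first entered, then repeat the recurrence layer forever -- gives a strategy under which every consistent play stays in $Win(\C{G})$ and revisits $RA(\C{G}) \subseteq \QF{}$ infinitely often, hence is accepted by $\C{A}$; so it is a winning strategy and $\C{G}$ is winning.

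\textbf{Converse.} For ``$\C{G}$ winning implies $q_0 \in Win(\C{G})$'' I would argue the contrapositive by showing that $RA(\C{G})$ is \emph{exactly} the set of states from which Player II can force infinitely many visits to $\QF{}$, whence $Win(\C{G})$, its Player II attractor, is exactly the set of states from which Player II wins. One inclusion is the construction above. For the other, I would induct on $k$ (and within each $k$ on the index of $\win^i(S_k)$) to show that from any state outside $S_k$ Player I has a strategy allowing at most $k-1$ further visits to $\QF{}$; letting $k \to \infty$ yields, from any state outside $RA(\C{G})$, a Player I strategy that bounds the number of future accepting visits, and from any state outside $Win(\C{G})$ a Player I strategy that additionally forces the play to leave $Win(\C{G})$ for good. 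Gluing these together produces, from $q_0 \notin Win(\C{G})$, a Player I strategy under which the play visits $\QF{}$ only finitely often, so no Player II strategy wins. (Alternatively, one can invoke Borel determinacy of B\"{u}chi games and then only show that Player I's winning region equals $Q \setminus Win(\C{G})$, which needs the same induction.)

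\textbf{Main obstacle.} The crux in both directions is the nested-fixpoint bookkeeping. In the first direction it is checking that $RA(\C{G})$, being a post-fixpoint of $S \mapsto S \cap \lim_i \win^i(S)$, actually lies in $\cpredecessor$ applied to its own attractor, so that the recurrence layer makes genuine progress instead of stalling in place; in the second direction it is verifying that the outer index $k$ of $S_k$ faithfully counts the accepting visits Player II can still guarantee, so that being outside $RA(\C{G})$ really translates into a finite bound for Player I. The attractor arguments, the memorylessness of the extracted strategies, and the composition of the partial strategies are then routine.
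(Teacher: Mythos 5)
Your proposal is correct, and for the direction the paper actually proves ($q_0 \in Win(\C{G})$ implies $\C{G}$ is winning) it is essentially the paper's own argument: the transducer output function in Section \ref{ssec:dba-games}, which picks $Y$ so that $\delta(q, X\cup Y)$ lands in $\win^{i}(RA(\C{G}))$ for the least achievable $i$, is exactly your rank-decreasing reach layer, and remaining inside $Win(\C{G})$ while repeatedly re-attracting to $RA(\C{G})$ is your recurrence layer. The paper does not prove the converse at all --- it cites the theorem as well known and only sketches the right-to-left implication ``since it will be needed for later results'' --- so your induction on $k$, bounding the number of accepting visits Player I can be forced to concede from states outside $S_k$, is extra material relative to the paper, and it is the standard, correct way to close that direction. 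One caution worth recording: as literally printed, the recurrence $S_{k+1} = S_k \cap \lim_i \win^i(S_k)$ is degenerate, since $S_k = \win^0(S_k) \subseteq \lim_i \win^i(S_k)$ forces $S_{k+1}=S_k=\QF{}$ for all $k$; the ``genuine progress after at least one move'' property that you rightly single out as the crux therefore does not follow from the definition as written, but only from the intended one (e.g.\ $S_{k+1} = S_k \cap \cpredecessor(\lim_i \win^i(S_k))$), which is what the paper's prose gloss of $S_k$ presupposes. You correctly identify this as the point needing care, so the gap is in the paper's formalization rather than in your argument.
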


We sketch the proof of the right-to-left implication here, since it will be needed for later results. 
We suppose that $q_0 \in Win(\C{G})$ and show how to construct a transducer
$\C{T}_\C{G}$ that implements a winning strategy. %
Intuitively, the transducer's 
output function ensures that the transducer stays within $Win(\C{G})$, always reducing the `distance' to $RA(\C{G})$.
More precisely, we can define $\C{T}_\C{G}$ as $\langle 2^\C{X}, 2^\C{Y}, Q, \delta', \omega, q_0 \rangle$, where: 
the set of states $Q$ and initial state $q_0$ are the same as for the DBA $\C{A}$, 
and the transition function $\delta'$ mirrors the transition function $\delta$ of $\C{A}$:
$\delta'(q, X) =  \delta(q,X \cup \omega(q,X))$.
We define the output function $\omega$ as follows:
\begin{itemize}[noitemsep,topsep=0pt]
\item Case 1: there exists $Y^*$ such that $\delta(q, X \cup Y^*) \in  Win(\C{G})$. In this case, 
we let $\omega(q,X)$ be any\footnote{Several $Y$ may satisfy the conditions, and choosing \emph{any} such $Y$ yields a suitable transducer. Alternatively, one can use nondeterministic transducers (called \emph{strategy generators} in \cite{deg-var-ijcai15}) to encode a family of deterministic transducers. }  $Y \in 2^\C{Y}$ such that 
(a) $\delta(q, X \cup Y) \in  \win^{i+1}(RA(\C{G}))$, and
(b) there is no $Y'$ with $\delta(q, X \cup Y') \in  \win^i(RA(\C{G}))$.
\item Case 2: no such $Y^*$ exists. We let $\omega(q,X)$ be any $Y \in 2^\C{Y}$.
\end{itemize}
According to this definition, after reading $X$, the transducer $\C{T}_\C{G}$ chooses an output symbol $Y$ that allows the underlying automaton $\C{A}$ 
to transition from the current state via $X \cup Y$ to a state in $Win(\C{G})$ (if some such symbol exists). Moreover, 
among the immediately reachable winning states, preference is given to those that are closest to $RA(\C{G})$, i.e. those belonging
to $\win^{i}(RA(\C{G}))$ for the minimal value $i$. 

\subsection{Constrained \LTLf\ Synthesis via \DBA Games} 
Given a constrained \LTLf synthesis specification \constrainedproblem{} with $\assume = \assume_s \land \assume_c$, we proceed as follows:
\begin{enumerate}[noitemsep,topsep=0pt]
\item Construct the \DBA\ game $\C{G}^{\assume_s, \assume_c}_{\varphi} = \langle \C{X}, \C{Y} \cup \{\endtrace\}, \constraintdba\rangle$.
\item Determine whether $\C{G}^{\assume_s, \assume_c}_{\varphi} $ is winning: build $Win(\C{G}^{\assume_s, \assume_c}_{\varphi} )$ and check whether $((q_0)_{s}, (q_0)_{c}, (q_0)_{g}) \in Win(\C{G}^{\assume_s, \assume_c}_{\varphi} )$. 
\item If $\C{G}^{\assume_s, \assume_c}_{\varphi} $ is not winning, return `unrealizable'.  
\item Otherwise, compute a winning strategy for $\C{G}^{\assume_s, \assume_c}_{\varphi}$ using the transducer from Section \ref{ssec:dba-games}. 
\end{enumerate}
Using Theorems \ref{dba-thm} and \ref{thm:winning}, we can show that this method is correct and yields optimal complexity:

\begin{theorem}\label{thm:correct-constrained}
Consider a constrained \LTLf specification $\C{S}=$\agproblem{\assume_s \wedge \assume_c}{\varphi} where $\assume_s$ (resp.\ $\assume_c$) is a safe (resp.\ co-safe) formula. 
Then:
\begin{itemize}[noitemsep,topsep=0pt]
\item  $\C{S}$ is realizable iff the \DBA game $\C{G}^{\assume_s, \assume_c}_{\varphi}$ is winning;
\item Every winning strategy for 
$\C{G}^{\assume_s, \assume_c}_{\varphi}$ is a winning strategy for $\C{S}$, and vice-versa;
\item Deciding whether $\C{G}^{\assume_s, \assume_c}_{\varphi}$ is winning, and constructing a winning strategy when one exists,
can be done in 2EXP. 
\end{itemize}
\end{theorem}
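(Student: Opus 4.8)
The plan is to reduce everything to two facts already in hand: Theorem~\ref{dba-thm}, which characterizes exactly the language of $\constraintdba$, and Theorem~\ref{thm:winning} together with the transducer construction of Section~\ref{ssec:dba-games}, which settles the game side. The first thing I would observe is that a strategy for $\C{S}$ and a strategy for the game $\C{G}^{\assume_s, \assume_c}_{\varphi}$ are literally the same kind of object, namely a function $(2^{\C{X}})^* \to 2^{\C{Y} \cup \{\endtrace\}}$, the only formal difference being the side condition in the definition of a strategy for a constrained specification that $\endtrace$ be output at most once along every $\mathbf{X}$. I would check that this side condition is automatic for any \emph{winning} strategy of the game: if $\sigma$ were to output $\endtrace$ twice along some $\mathbf{X}$, the run of $\constraintdba$ would reach $q_\top$ on the first occurrence and then, on the second $\endtrace$, move to the absorbing sink $q_\bot$, which is non-accepting, so the play would be losing. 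Hence every winning strategy for $\C{G}^{\assume_s, \assume_c}_{\varphi}$ is, verbatim, a legal strategy for $\C{S}$, and conversely every strategy for $\C{S}$ is a legal input to the game.

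The core step is the equivalence: a strategy $\sigma$ is winning for $\C{S}$ if and only if $\inftracestrat{\sigma}{\mathbf{X}} \in \lang(\constraintdba)$ for every $\mathbf{X} \in (2^{\C{X}})^\omega$, which by definition is precisely what it means for $\sigma$ to be a winning strategy for $\C{G}^{\assume_s, \assume_c}_{\varphi}$. I would prove this by case analysis on whether $n_{\sigma,\mathbf{X}}$ is finite, invoking Theorem~\ref{dba-thm}. For the ``if'' direction, suppose all induced infinite traces are accepted; then for each $\mathbf{X}$ one of the two alternatives of Theorem~\ref{dba-thm} holds. Alternative~(i) says $\endtrace$ occurs exactly once, at some step $n$, and the induced finite trace $\tracestrat{\sigma}{\mathbf{X}}$ satisfies $\varphi$ --- i.e.\ $n_{\sigma,\mathbf{X}} < \infty$ and $\tracestrat{\sigma}{\mathbf{X}} \models \varphi$. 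Alternative~(ii) says $\endtrace$ never occurs and $\inftracestrat{\sigma}{\mathbf{X}} \not\models \assume_s \wedge \assume_c = \assume$ --- i.e.\ $n_{\sigma,\mathbf{X}} = \infty$ but the assumption is violated. Thus $\sigma$ terminates on every trace satisfying $\assume$, so it is an $\assume$-strategy, and every $\pi \in \fulltraces{\sigma}$ satisfies $\varphi$; hence $\sigma$ is winning for $\C{S}$. For the ``only if'' direction, let $\sigma$ be winning for $\C{S}$ and fix $\mathbf{X}$: if $n_{\sigma,\mathbf{X}} < \infty$, then $\endtrace$ occurs exactly once and $\tracestrat{\sigma}{\mathbf{X}} \in \fulltraces{\sigma}$ satisfies $\varphi$, so $\inftracestrat{\sigma}{\mathbf{X}}$ satisfies alternative~(i) and is accepted; if $n_{\sigma,\mathbf{X}} = \infty$, then $\endtrace$ never occurs and, because $\sigma$ is an $\assume$-strategy, $\inftracestrat{\sigma}{\mathbf{X}} \not\models \assume$, so alternative~(ii) applies and the trace is accepted.

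Granting this equivalence, the first two bullets follow: $\C{S}$ is realizable iff it has a winning strategy iff $\C{G}^{\assume_s, \assume_c}_{\varphi}$ has a winning strategy iff, by Theorem~\ref{thm:winning}, $q_0 = ((q_0)_{s},(q_0)_{c},(q_0)_{g}) \in Win(\C{G}^{\assume_s, \assume_c}_{\varphi})$, i.e.\ the game is winning; and the equivalence, together with the first paragraph, gives that the winning strategies of the game --- in particular the transducer $\C{T}_{\C{G}}$ of Section~\ref{ssec:dba-games} --- and those of $\C{S}$ coincide. For the complexity bullet, Theorem~\ref{dba-thm} gives that $\constraintdba$ can be built in double-exponential time in $|\assume_s| + |\assume_c| + |\varphi|$, so its state set has at most double-exponential size, and the winning-region computation and transducer extraction of Section~\ref{ssec:dba-games} run in time polynomial in that size; composing, the whole procedure is in 2EXP. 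I expect the only real obstacle to be bookkeeping: lining up the ``at most one $\endtrace$'' condition and the extraction of the induced finite trace (with $\endtrace$ stripped from its last symbol) with the way $q_\top$, $q_\bot$ and the component DFAs are wired together inside $\constraintdba$ --- but this has already been packaged up by Theorem~\ref{dba-thm}, so the argument should go through cleanly.
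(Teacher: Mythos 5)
Your proposal is correct and follows essentially the same route as the paper's proof: invoke Theorem~\ref{dba-thm} to characterize $\lang(\constraintdba)$, match the two acceptance alternatives against the two cases $n_{\sigma,\mathbf{X}}<\infty$ and $n_{\sigma,\mathbf{X}}=\infty$ to identify winning strategies of the game with winning strategies of $\C{S}$, and then appeal to Theorem~\ref{thm:winning} and the polynomial-time winning-region/transducer construction for the complexity bound. Your explicit handling of the ``at most one $\endtrace$'' side condition (via $q_\top$ and the sink $q_\bot$) is a detail the paper leaves implicit, but it does not change the argument.
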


\section{Synthesis of High-Quality Strategies}
\label{sec:high-quality-synth}

This section explores the use of a quantitative specification language to compare strategies based upon how well they satisfy the specification. We adopt the \LTLfF language from \cite{AlmagorBK16} and propose a new more refined way of defining optimal strategies.

\subsection{The Temporal Logic \LTLfF}
We recall here the language \LTLfF\ proposed by \acite{AlmagorBK16}. The basic idea is that instead of a formula being either totally satisfied or totally violated by a trace,  a value between 0 and 1 will indicate its \emph{degree of satisfaction}. In order to allow for different ways of aggregating formulae, the basic \LTL\ syntax is augmented with a set $\Fmc \subseteq \{f: [0,1]^k\rightarrow [0,1] \mid k \in \mathbb{N} \}$ of functions, with the choice of which functions to include in $\Fmc$ being determined by the application at hand. 

Formally, the set of \LTLfF\ formulae is obtained by adding $f(\varphi_1, \ldots, \varphi_k)$ to the grammar for $\varphi$, for every $f \in \Fmc$. 
We assign a \emph{satisfaction value} to every \LTLfF\ formula, finite trace $\trace$, and time step $1 \leq i \leq |\trace|$, as follows\footnote{We omit $\vee$ and $\ltlrelease{}{}$, as they can be defined using $\neg$, $\wedge$, and $\ltluntil{}{}$.}: 
{\small
\begin{align*}
\eval{\trace}{\top}{i}&= 1 \quad \eval{\trace}{\bot}{i}=0 \quad
\eval{\trace}{p}{i} = \begin{cases} 1 & \mbox{if } p \in \trace(i)\\  0 & \mbox{otherwise}  \end{cases}\\
\eval{\trace}{\neg \varphi}{i} &= 1 - \eval{\trace}{ \varphi}{i} \\
\eval{\trace}{\varphi_1 \wedge \varphi_2}{i} &= \min\{\eval{\trace}{ \varphi_1}{i}, \eval{\trace}{ \varphi_2}{i}\}\\
\eval{\trace}{\ltlnext{} \varphi}{i}&= \eval{\trace}{\varphi}{i+1}\\
\eval{\trace}{f(\varphi_1, \ldots, \varphi_k)}{i}&= f(\eval{\trace}{\varphi_1}{i}, \ldots, \eval{\trace}{\varphi_k}{i})\\
\eval{\trace}{\varphi_1 \ltluntil{}{} \varphi_2}{i}&= \max_{i \leq i' \leq |\trace|} \left\{ \min \left\{ \eval{\trace}{\varphi_2}{i'}, \min_{i \leq j < i'} \{  \eval{\trace}{\varphi_1}{j}        \}\right\} \right \} 
\end{align*}}
The (satisfaction) value of $\varphi$ on $\trace$, written $\eval{\trace}{\varphi}{}$, is $\eval{\trace}{\varphi}{1}$.

We define $\pvals(\varphi) \subseteq [0,1]$ as the set of values $\eval{\trace}{\varphi}{i}$, ranging over all traces 
$\trace$ and steps $1 \leq i \leq |\trace|$. 
The following proposition, proven by \cite{AlmagorBK16}, shows that an \LTLfF\ formula 
can take on only exponentially many different values.

\begin{proposition}\label{exp-values}
For every \LTLfF\ formula $\varphi$, $|V(\varphi)| \leq  2^{|\varphi|}$. 
\end{proposition}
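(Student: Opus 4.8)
The plan is to prove this by structural induction on the \LTLfF\ formula $\varphi$, tracking the slightly stronger claim that the \emph{entire} set $\pvals(\psi)$ of satisfaction values occurring anywhere in any subformula-position is what we bound, rather than just the top-level value set. Actually, since $\pvals(\varphi)$ already ranges over all time steps $1 \le i \le |\trace|$ and all traces, the natural induction hypothesis is simply: $|\pvals(\psi)| \le 2^{|\psi|}$ for every strict subformula $\psi$ of $\varphi$, and we show the bound propagates to $\varphi$. The base cases are $\top$, $\bot$ (value sets of size $1 \le 2^1$) and $p$ (value set $\{0,1\}$, size $2 \le 2^1$).

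\textbf{Inductive step.} First I would handle negation: $\eval{\trace}{\neg\psi}{i} = 1 - \eval{\trace}{\psi}{i}$, so $\pvals(\neg\psi) \subseteq \{1 - v \mid v \in \pvals(\psi)\}$, giving $|\pvals(\neg\psi)| \le |\pvals(\psi)| \le 2^{|\psi|} < 2^{|\neg\psi|}$. For the next operator, $\eval{\trace}{\ltlnext{}\psi}{i} = \eval{\trace}{\psi}{i+1}$, so $\pvals(\ltlnext{}\psi) \subseteq \pvals(\psi) \cup \{?\}$ — I need to be careful here about what happens when $i = |\trace|$, but in any case $|\pvals(\ltlnext{}\psi)| \le |\pvals(\psi)| + 1 \le 2^{|\psi|} + 1 \le 2^{|\psi|+1} = 2^{|\ltlnext{}\psi|}$, where the last inequality holds since $2^{|\psi|} \ge 1$. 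For conjunction, every value of $\varphi_1 \wedge \varphi_2$ is $\min\{v_1, v_2\}$ for some $v_1 \in \pvals(\varphi_1)$, $v_2 \in \pvals(\varphi_2)$, so $|\pvals(\varphi_1 \wedge \varphi_2)| \le |\pvals(\varphi_1)| \cdot |\pvals(\varphi_2)| \le 2^{|\varphi_1|} \cdot 2^{|\varphi_2|} = 2^{|\varphi_1|+|\varphi_2|} \le 2^{|\varphi_1 \wedge \varphi_2|}$. The until operator is similar: each value is obtained by nested $\min$/$\max$ over values drawn from $\pvals(\varphi_1) \cup \pvals(\varphi_2)$, and since $\min$ and $\max$ of a finite set of numbers from a set $U$ always lie in $U$, we get $\pvals(\varphi_1 \ltluntil{}{} \varphi_2) \subseteq \pvals(\varphi_1) \cup \pvals(\varphi_2)$, hence $|\pvals(\varphi_1 \ltluntil{}{} \varphi_2)| \le |\pvals(\varphi_1)| + |\pvals(\varphi_2)| \le 2^{|\varphi_1|} + 2^{|\varphi_2|} \le 2^{|\varphi_1|+|\varphi_2|+1} \le 2^{|\varphi_1 \ltluntil{}{} \varphi_2|}$.

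\textbf{The function case and the main obstacle.} The case $f(\varphi_1, \ldots, \varphi_k)$ is where the argument is most delicate and I expect it to be the main point requiring care. Here $\eval{\trace}{f(\varphi_1,\ldots,\varphi_k)}{i} = f(v_1, \ldots, v_k)$ where each $v_j \in \pvals(\varphi_j)$, so the image has size at most $\prod_j |\pvals(\varphi_j)| \le \prod_j 2^{|\varphi_j|} = 2^{\sum_j |\varphi_j|}$. The subtlety is the size measure: one must have $|f(\varphi_1,\ldots,\varphi_k)| \ge 1 + \sum_j |\varphi_j|$ (counting the function symbol as contributing at least $1$), which then gives $2^{\sum_j |\varphi_j|} \le 2^{|f(\varphi_1,\ldots,\varphi_k)|}$. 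I would state explicitly at the outset which syntactic size measure $|\cdot|$ is used so that each inductive step's arithmetic (e.g., $2^a + 2^b \le 2^{a+b+1}$, valid for $a,b \ge 0$; and the "$+1$" slack for unary operators) goes through; this bookkeeping is the only real obstacle, since the mathematical content — that $\min$, $\max$, negation-complement, and arbitrary $f$ cannot create more distinct output values than the product of the input alphabet sizes — is routine. Since the statement is attributed to prior work (\cite{AlmagorBK16}), I would note that it suffices to reproduce this short induction.
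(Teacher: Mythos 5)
Your proof is correct. The paper does not actually prove this proposition---it is imported from \cite{AlmagorBK16}---and your structural induction is essentially the argument given there: atomic formulae contribute $\{0,1\}$, negation preserves cardinality, $\wedge$ and $\ltluntil{}{}$ only take $\min$/$\max$ of existing values (so in fact $\pvals(\varphi_1\wedge\varphi_2)\subseteq\pvals(\varphi_1)\cup\pvals(\varphi_2)$ and likewise for $\ltluntil{}{}$, which is even a bit tighter than your product bound), and only the $f(\varphi_1,\ldots,\varphi_k)$ case genuinely needs the product $\prod_j|\pvals(\varphi_j)|\le 2^{\sum_j|\varphi_j|}$. Your flagged bookkeeping points (the size measure counting each operator node, and the possible extra value of $\ltlnext{}\psi$ at the last position of a finite trace) are exactly the right things to pin down, and the arithmetic goes through with the standard node-count measure.
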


The functions $f$ allow us to capture a diversity of methods for combining 
a set of potentially competing objectives (including classical preference 
aggregation methods like weighted sums and lexicographic ordering). 

\begin{example}
For illustration purposes, 
consider two variants of our robot vacuum example,
with specifications $\varphi_1$ and $\varphi_2$, 
in which the goal $\ltleventually{} ( clean(LR) \land clean(BR))$ is replaced by $\ltleventually{} ( clean(LR) )$ and $\ltleventually{} ( clean(BR))$, respectively.
We can include in $\C{F}$ a binary weighted sum operator $sum_{0.3,0.7}$, where %
the satisfaction value of $sum_{0.3,0.7}(\varphi_1, \varphi_2)$ on trace $\pi$ is 0.3 if $\pi \models \varphi_1 \land \neg \varphi_2$, 0.7 if  $\pi \models \neg \varphi_1 \land \varphi_2$, 1 if  $\pi \models \varphi_1 \land \varphi_2$, and zero otherwise. We can thus express that we'd like to clean both rooms, but give priority to the bedroom. 
\end{example}

\subsection{Defining Optimal Strategies}

Henceforth, we consider a \emph{constrained synthesis \LTLfF problem} \constrainedproblem{},
defined as before %
except that now $\varphi$ is an \LTLfF\ formula.
Such formulae assign satisfaction values to traces, 
allowing us to rank traces according to the extent to which they satisfy the expressed preferences. 
It remains to lift this preference order to strategies. 

Perhaps the most obvious way to rank strategies is to consider the minimum value of any trace induced by the strategy, 
preferring strategies that can guarantee the highest worst-case value. This is the approach adopted by \cite{AlmagorBK16} for \LTLF\ synthesis. We formalize it for constrained \LTLfF\ synthesis as follows:

{\color{purple}

}

\begin{definition}\label{def:bgv}
The \emph{best guaranteed value} of strategy $\strategy$, denoted $\wcv{\strategy}$, is the minimum value of $\eval{\trace}{\varphi}{}$ over all %
$\trace \in \fulltraces{\strategy}$ (or 0 if $\fulltraces{\strategy} = \emptyset$). 
A strategy $\strategy$ is \emph{\opt\ w.r.t.\ $(\assume,\varphi)$} if it is a $\assume$-strategy and no $\assume$-strategy $\strategy'$ exists with $\wcv{\strategy'} > \wcv{\strategy}$. 
\end{definition}

Optimizing for the best guaranteed value seems natural, %
but can be insufficiently discriminative. 
Consider a simple scenario with $\C{X}=\{x\}$ and $\C{Y}=\{y\}$.
If the environment plays $x$, then we get value $0$ no matter what, 
and if $\neg x$ is played, a value of $1$ is achieved by playing $y$, and $0$ if $\neg y$ is played. 
Clearly, we should prefer to play $y$ after $\neg x$, %
yet the strategy that plays $\neg y$ following $\neg x$ is \opt, 
since like every strategy, its bgv is 0. 
This motivates us to introduce a stronger, context-aware, notion of optimality:

\begin{definition}
Given %
a strategy $\strategy$, trace $\trace \in \compat{\strategy}$ %
that does not contain $\endtrace$, and $X \in 2^\Uvars$,
the \emph{best guaranteed value of $\strategy$ starting from $\trace  \! \cdot \! X$}, written $\wcvstart{\strategy}{\trace,X}$, 
is the minimum of $\eval{\trace'}{\varphi}{}$ over all traces 
$\trace' \in \fulltraces{\strategy}$ such that 
$\pi \cdot (X\cup Y) \sqsubseteq \pi'$
for some $Y \in \Cvars$ (or 0 if no such trace exists). 
A strategy $\strategy$ is a \emph{strongly \opt\ w.r.t.\ $(\assume,\varphi)$} if it is an $\assume$-strategy, and there is no $\assume$-strategy $\strategy'$, trace $\trace \in \compat{\strategy} \cap \compat{\strategy'}$ without $\endtrace$, 
and $X \in 2^\Uvars$
such that $\wcvstart{\strategy'}{\trace,X} > \wcvstart{\strategy}{\trace,X}$.
\end{definition}

Strongly \opt strategies take advantage of any favorable situation during execution to improve the best worst-case value. In the preceding example, they allow us to say that the first strategy is better than the second. 

\section{Algorithms: High-Quality \LTLf Synthesis}
\label{sec:alg-high-qual-synth}
In this section, we present novel techniques to compute \opt %
and strongly \opt strategies for 
a constrained \LTLfF\ synthesis problem $\constrainedproblem{}$. 
As in Section \ref{dba-construction}, we focus on the case where $\assume$ is a conjunction $\assume_s \wedge \assume_c$ of safe and co-safe formulae.

\subsection{Automaton for \LTLfF}\label{ss:aut-ltlff}
It has been shown in \cite{AlmagorBK16} how to construct, 
for a given \LTLfF\ formula $\varphi$ and set of values $\desiredvals \subseteq [0,1]$, 
an \NFA $\C{A}_{\varphi, 
\desiredvals} = \langle \alphPV, Q, \delta, Q_0, F\rangle$
that accepts finite traces $\pi$ with $\eval{\trace}{\varphi}{} \in \desiredvals$. 
We briefly recall the construction here.
We denote by $sub(\varphi)$ the set of subformulas of~$\varphi$,
and let $C_\varphi$ be the set of functions $g: sub(\varphi) \rightarrow [0,1]$ such that $g(\psi) \in V(\psi)$ for all $\psi \in sub(\varphi)$.
$Q$ contains all \emph{consistent} functions in $C_\varphi$, where 
a function $g$ is consistent if, for every $\psi \in sub(\varphi)$, the following hold:
\begin{itemize}[noitemsep,topsep=0pt]
    \item if $\psi = \top$, then $g(\psi) = 1$, and if $\psi = \bot$, then $g(\psi) = 0$
    \item if $\psi \in \C{P}$ then $g(\psi) \in \{0,1\}$
    \item if $\psi = f(\psi_1,\ldots,\psi_k))$, then $g(\psi) = f(g(\psi_1), \ldots, g(\psi_k))$
\end{itemize}
The transition function $\delta$ is such that $g' \in \delta(g,\sigma)$ whenever:
\begin{itemize}[noitemsep,topsep=0pt]
    \item $\sigma = \set{p \in \C{P} \mid g(p) = 1}$
    \item $g(\ltlnext \psi_1) = g'(\psi_1)$ for every $\ltlnext \psi_1 \in sub(\varphi))$
    \item $g(\ltluntil{\psi_1}{\psi_2}) = \max\set{g(\psi_2), \min\set{g(\psi_1), g'(\ltluntil{\psi_1}{\psi_2})}}$ for every $\ltluntil{\psi_1}{\psi_2} \in sub(\varphi)$
\end{itemize}
Finally, the set of initial states is $Q_0 = \set{q \in Q \mid g(\varphi) \in \desiredvals}$, and 
$F = \set{ g \mid g(\psi_2) = g(\ltluntil{\psi_1}{\psi_2}) \text{ for all } \ltluntil{\psi_1}{\psi_2} \in sub(\varphi)} \cap \set{ g \mid g(\ltlnext{\psi}) = 0 \text{ for all } \ltlnext{\psi} \in sub(\varphi)}$. %

The NFA $\C{A}_{\varphi,\desiredvals}$ can be constructed in single-exponential time, 
and  $\C{L}(\C{A}_{\varphi,\desiredvals}) = \{\pi \mid \eval{\trace}{\varphi}{} \in \desiredvals\}$  \cite{AlmagorBK16}.
By determinizing $\C{A}_{\varphi,\desiredvals}$, we obtain a DFA $\hat{\C{A}}_{\varphi,\desiredvals}$
that accepts the same language and can be constructed in double-exponential time. In what follows, $\desiredvals$ will always take the form $[b,1]$,
so we'll use $\hat{\C{A}}_{\varphi \geq b}$  in place of  $\hat{\C{A}}_{\varphi, [b,1]}$. 

\subsection{Synthesis of \opt strategies} \label{ssec:algo-bgv}
We describe how to construct a \opt strategy.
First note that given $b \in [0,1]$, we can construct a DBA $\C{A}_{\varphi \geq b}^\assume$ that recognizes traces
such that either (i) $\assume =\assume_s \wedge \assume_c$ is violated and $\endtrace$ does not occur, %
or (ii) $\endtrace$ occurs exactly once and the induced finite trace $\pi$ is such that $\eval{\trace}{\varphi}{} \geq b$. 
Indeed, we simply reuse the construction from Section \ref{dba-construction}, replacing the DFA $\C{A}_{g}$ 
with the DFA  $\hat{\C{A}}_{\varphi \geq b}$. %
We next observe that an $\assume$-strategy $\strategy$ with $\wcv{\strategy} \geq b$ exists iff 
the DBA game $\langle \C{X}, \C{Y} \cup \{\endtrace\}, \C{A}_{\varphi \geq b}^\assume \rangle$ is winning.
Thus, by iterating over the values in $V(\varphi)$ in descending order, we can determine the maximal $b^*$
for which an $\assume$-strategy $\strategy$ with $\wcv{\strategy} \geq b^*$ exists. 
A \opt\ strategy can be computed by constructing a winning strategy for the \DBA game $\langle \C{X}, \C{Y} \cup \{\endtrace\}, \C{A}_{\varphi \geq b^*}^\assume \rangle$, 
using the approach in Section \ref{ssec:dba-games}. As there are only exponentially many values in 
$V(\varphi)$ (Prop.\ \ref{exp-values}), the overall construction takes double-exponential time.

\begin{theorem}\label{thm:opt_strategies}
A \opt strategy can be constructed in double-exponential time. %
\end{theorem}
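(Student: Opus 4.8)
The plan is to follow the roadmap laid out in the paragraph preceding the theorem statement and make each step precise, relying on the \DBA-game machinery from Section~\ref{ssec:dba-games} and the automata construction from Section~\ref{ss:aut-ltlff}. First I would fix the notation: for each $b \in V(\varphi)$, let $\hat{\C{A}}_{\varphi \geq b}$ be the \DFA accepting finite traces $\trace$ with $\eval{\trace}{\varphi}{} \geq b$, obtained by determinizing the \NFA $\C{A}_{\varphi,[b,1]}$ of \cite{AlmagorBK16}; then let $\C{A}^{\assume}_{\varphi \geq b}$ be the \DBA obtained exactly as in Section~\ref{dba-construction} (Theorem~\ref{dba-thm}) but with $\C{A}_g$ replaced by $\hat{\C{A}}_{\varphi \geq b}$. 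By the argument of Theorem~\ref{dba-thm}, $\C{A}^{\assume}_{\varphi \geq b}$ accepts precisely those infinite traces $\trace$ over $2^{\C{P}\cup\{\endtrace\}}$ such that either (i) $\endtrace$ occurs exactly once, at position $n$, and $\eval{\trace(1)\cdots\trace(n)}{\varphi}{} \geq b$, or (ii) $\endtrace$ never occurs and $\trace \not\models \assume_s \wedge \assume_c$.

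The key correctness lemma is then: an $\assume$-strategy $\strategy$ with $\wcv{\strategy} \geq b$ exists iff the \DBA game $\C{G}_b = \langle \C{X}, \C{Y}\cup\{\endtrace\}, \C{A}^{\assume}_{\varphi\geq b}\rangle$ is winning. This is proved just as in Theorem~\ref{thm:correct-constrained}: a winning strategy for $\C{G}_b$, read as a function $(2^\C{X})^* \to 2^{\C{Y}\cup\{\endtrace\}}$, induces an $\assume$-strategy for the constrained \LTLfF\ problem (the state $q_\top$ being quasi-absorbing forces $\endtrace$ at most once, and condition (ii) above forces termination whenever $\assume$ holds, i.e.\ the $\assume$-strategy property), and every induced finite trace $\trace$ then satisfies $\eval{\trace}{\varphi}{} \geq b$, so $\wcv{\strategy}\geq b$; conversely any $\assume$-strategy with $\wcv{\strategy}\geq b$, augmented to output $\endtrace$ at termination, produces only infinite traces accepted by $\C{A}^{\assume}_{\varphi\geq b}$. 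The one subtlety to check is the corner case $\fulltraces{\strategy}=\emptyset$, where $\wcv{\strategy}=0$ by Definition~\ref{def:bgv}; this is fine because when $b=0$ the game is trivially winning (every trace gives value $\geq 0$), and since $0 \in V(\varphi)$ always, the search below always returns some $b^\star$.

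Next I would describe the algorithm: enumerate $V(\varphi)$ in descending order $b_1 > b_2 > \cdots$ (possible since $V(\varphi)$ is finite by Proposition~\ref{exp-values}), and for each $b_j$ build $\C{A}^{\assume}_{\varphi\geq b_j}$, build its game $\C{G}_{b_j}$, compute $Win(\C{G}_{b_j})$ as in Section~\ref{ssec:dba-games}, and test whether $q_0 \in Win(\C{G}_{b_j})$; let $b^\star$ be the first (hence largest) value for which the test succeeds. By the lemma, $b^\star = \max\{\wcv{\strategy} \mid \strategy \text{ an }\assume\text{-strategy}\}$, so any winning strategy for $\C{G}_{b^\star}$ — extracted via the transducer $\C{T}_{\C{G}_{b^\star}}$ of Section~\ref{ssec:dba-games}, with $\mathsf{alive}$-style post-processing dropping any spurious symbols — is a \opt\ strategy in the sense of Definition~\ref{def:bgv}. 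For the complexity: $\hat{\C{A}}_{\varphi\geq b}$ is double-exponential in $|\varphi|$, the \DFA for $\assume_s$ (bad prefixes) and for $\assume_c$ (good prefixes) are double-exponential in $|\assume_s|,|\assume_c|$, so $\C{A}^{\assume}_{\varphi\geq b}$ is double-exponential in $|\assume_s|+|\assume_c|+|\varphi|$; each $Win(\cdot)$ computation is polynomial in the automaton size, hence double-exponential overall; and by Proposition~\ref{exp-values} there are at most $2^{|\varphi|}$ iterations, which only multiplies a double-exponential bound by a single exponential, leaving it double-exponential. Constructing the transducer from $Win(\C{G}_{b^\star})$ is again polynomial in the automaton size. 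Hence a \opt\ strategy is constructed in double-exponential time, as claimed.

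The main obstacle I expect is not any single hard step but getting the reuse of Theorem~\ref{dba-thm} exactly right once $\C{A}_g$ is replaced by $\hat{\C{A}}_{\varphi\geq b}$: one must check that the proof of Theorem~\ref{dba-thm} only used that $\C{A}_g$ is a \DFA accepting the target finite-trace language (here $\{\trace \mid \eval{\trace}{\varphi}{}\geq b\}$ rather than $\{\trace \mid \trace\models\varphi\}$), and that nothing about the objective being a classical \LTLf\ formula was essential — which is the case, since the $q_\top/q_\bot/q_{\mathsf{bad}}/q_{\mathsf{good}}$ bookkeeping is purely about $\endtrace$ and the safe/co-safe constraints. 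A secondary, easily-overlooked point is to confirm monotonicity is \emph{not} needed across the iteration (we do not need $\wcv{}$ to be monotone in any structural sense — we simply take the max over a finite set), and to confirm $0 \in V(\varphi)$ so that the search is well-defined; both follow from Proposition~\ref{exp-values} and the semantics of $\llbracket\cdot\rrbracket$.
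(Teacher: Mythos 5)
Your proposal is correct and takes essentially the same route as the paper's proof: substitute $\hat{\C{A}}_{\varphi \geq b}$ for $\C{A}_g$ in the construction of Section~\ref{dba-construction}, reuse the argument of Theorem~\ref{dba-thm} to characterize the resulting \DBA, reduce the existence of an $\assume$-strategy with $\wcv{\strategy}\geq b$ to the corresponding \DBA game being winning, and iterate over the exponentially many values of $V(\varphi)$ in descending order, which keeps the whole procedure double-exponential. The only slip is the claim that $0 \in V(\varphi)$ always holds (it fails, e.g., for a tautological $\varphi$, where $V(\varphi)=\{1\}$); the search is nonetheless well-defined because the game for $b=\min V(\varphi)$ is always winning --- every finite trace has value at least $\min V(\varphi)$, so the immediately-terminating $\assume$-strategy already wins it.
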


\def\bgval{v}
\def\optval{v^+}
\newcommand\autval[1]{\C{A}_{\geq #1}^\assume}
\newcommand\winoptgen{
Win(\bgvaut,\optaut)}
\def\bgvaut{\C{A}_{\varphi \geq \bgval}^\assume}
\def\optaut{\C{A}_{\varphi \geq \optval}^\assume}

\newcommand\winoptgeni[1]{Win^{#1}(\bgvaut,\optaut)}

\subsection{Synthesis of strongly \opt strategies}
To compute a strongly \opt stategy, we build a transducer that runs in parallel DBAs $\C{A}_{\geq b}^\assume$ for different values~$b$,
and selects outputs symbols so as to advance within the `best' applicable winning region. 
This idea can be formalized as follows. As in Section \ref{ssec:algo-bgv}, 
we first determine the maximal $b^* \in V(\varphi)$ 
for which a $\assume$-strategy $\strategy$ with $\wcv{\strategy} \geq b^*$ exists, and set 
$B= V(\varphi) \cap [b^*,1]$. %
In the process, we will compute, for each $b \in B$, the sets $Win(\C{G}_{b})$ and $RA(\C{G}_{b})$
for the DBA game $\C{G}_{b} = \langle \C{X}, \C{Y}, \C{A}_{\geq b}^\assume \rangle$ based on the DBA $\autval{b}=\langle 2^{\C{P}\cup \{\endtrace\}}, Q_b, \delta_b, q_0^{b}, F_{b}\rangle$. 
In the sequel, we will assume that the elements of $B$ are ordered as follows: $b_1 < b_2 \ldots < b_m$, with $b^* = b_1$ and $b_m=1$.

We now proceed to the definition of the desired transducer $\C{T}^\mathsf{str}=\langle 2^\C{X}, 2^{\C{Y} \cup \{\endtrace\}}, Q_\mathsf{str}, \delta_\mathsf{str}, \omega_\mathsf{str}, q_0^\mathsf{str} \rangle$, %
obtained by taking the cross product of
the set of DBAs $\autval{b}$ with $b \in B$,
in order to keep track of the current states in these automata:
\begin{itemize}[noitemsep,topsep=0pt]
\item $q_0^\mathsf{str} = (q_0^{b_1}, q_0^{b_2}, \ldots, q_0^{b_m})$  and $Q_\mathsf{str} = Q_{b_1} \times \ldots \times Q_ {b_m}$
\item $\delta_\mathsf{str}((q_1, \ldots, q_m), X)= (\delta_{b_1}(q_1,X \cup Y), \ldots, \delta_{b_m}(q_m,X \cup Y))$, where $Y= \omega_\mathsf{str}((q_1, q_2, \ldots, q_m),X)$
\end{itemize}
After reading $X$, the output function identifies the maximal value $b \in B$ such that 
current state $q_b$ of $\autval{b}$ can transition, via some symbol $X \cup Y$, into a state in $Win(\C{G}_{b})$,
and it returns the same output as the transducer $\C{T}_{\C{G}_b}$ in state $q_b$:
\begin{itemize}[noitemsep,topsep=0pt]
\item $\omega_\mathsf{str}((q_1, \ldots, q_m), X)= \omega_b(q_b,X)$, where $b= \mathsf{max}(\{v \in B\mid \exists Y\, \delta_v(q_v,X \cup Y) \in Win(\C{G}_{v})\})$
\end{itemize}
We note that the transducer $\C{T}_{\C{G}_b}$ can be defined as in Section \ref{ssec:dba-games} even when $q_0^b \not \in Win(\C{G}_{b})$,
but it only returns `sensible' outputs when it transitions to $Win(\C{G}_{b})$. 

\begin{theorem}\label{thm:strong-opt}
$\C{T}^\mathsf{str}$ implements a strongly \opt strategy and can be constructed in double-exponential time. %
\end{theorem}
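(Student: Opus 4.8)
The plan is to argue two things: (a) the transducer $\C{T}^\mathsf{str}$ can be constructed in double-exponential time, and (b) it implements a strongly \opt\ strategy. Part (a) is the easy bookkeeping: by Proposition~\ref{exp-values} the set $V(\varphi)$ has at most exponentially many elements, so $B$ has at most exponentially many elements; each DBA $\autval{b}$ has double-exponential size by Theorem~\ref{dba-thm} (with $\C{A}_g$ replaced by $\hat{\C{A}}_{\varphi\geq b}$ as in Section~\ref{ssec:algo-bgv}); the winning regions $Win(\C{G}_b)$ and $RA(\C{G}_b)$ are computable in polynomial time in the size of $\autval{b}$ by the analysis in Section~\ref{ssec:dba-games}; and the cross product of exponentially many double-exponential-size automata is still of double-exponential size. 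Hence the whole construction, including the determination of $b^*$ exactly as in Section~\ref{ssec:algo-bgv}, stays within double-exponential time.

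For part (b), first I would check that $\C{T}^\mathsf{str}$ really induces a strategy $\strategy^\mathsf{str}$ and that it is an $\assume$-strategy. The key observation is that, on the component $\autval{b_1}$ with $b_1 = b^*$, whenever the current composite state projects to a state $q_{b_1} \in Win(\C{G}_{b_1})$, the output $\omega_\mathsf{str}$ chosen for the maximal applicable value is still a move that keeps the $b_1$-component inside $Win(\C{G}_{b_1})$ and does not increase its distance to $RA(\C{G}_{b_1})$ — because the higher-$b$ winning regions are contained (as sets of composite-reachable configurations) in the lower ones in the relevant sense: any move entering $Win(\C{G}_{b})$ for $b > b_1$ also enters $Win(\C{G}_{b_1})$, since $\hat{\C{A}}_{\varphi\geq b}$-acceptance implies $\hat{\C{A}}_{\varphi\geq b_1}$-acceptance ($[b,1]\subseteq[b^*,1]$). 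So $\strategy^\mathsf{str}$ behaves at least as well as the $b^*$-winning strategy of Section~\ref{ssec:dba-games}, and in particular it terminates on every $\assume$-trace and guarantees value $\geq b^*$ on every induced finite trace; thus $\wcv{\strategy^\mathsf{str}} \geq b^* = \optv{}$, so it is already \opt.

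For the strong optimality, I would argue by contradiction: suppose some $\assume$-strategy $\strategy'$, a trace $\trace \in \compat{\strategy^\mathsf{str}} \cap \compat{\strategy'}$ without $\endtrace$, and $X \in 2^\Uvars$ satisfy $\wcvstart{\strategy'}{\trace,X} > \wcvstart{\strategy^\mathsf{str}}{\trace,X}$. Let $b = \wcvstart{\strategy'}{\trace,X}$; then from the composite state $q$ reached by $\C{T}^\mathsf{str}$ after reading the $\Uvars$-projection of $\trace$, followed by $X$, the $b$-component $q_b$ can transition via some $X\cup Y$ into $Win(\C{G}_b)$ — because the restriction of $\strategy'$ after $\trace\cdot X$ witnesses a strategy guaranteeing value $\geq b$, which by Theorem~\ref{thm:winning} (applied to $\C{G}_b$ from state $q_b$) forces $q_b$, and then the successor via $X\cup Y$, into $Win(\C{G}_b)$; here one needs the standard fact that the $b$-component of the composite run of $\C{T}^\mathsf{str}$ agrees with the run of $\autval{b}$ on the actual trace produced. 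But then the definition of $\omega_\mathsf{str}$ selects the \emph{maximal} such value, which is $\geq b$, and returns $\omega_{b'}(q_{b'},X)$ for that maximal $b'$; by the transducer property of $\C{T}_{\C{G}_{b'}}$ (Section~\ref{ssec:dba-games}), every subsequently-induced finite trace extending $\trace\cdot X$ achieves value $\geq b' \geq b$, so $\wcvstart{\strategy^\mathsf{str}}{\trace,X}\geq b = \wcvstart{\strategy'}{\trace,X}$ — a contradiction. I expect the main obstacle to be making the synchronization argument precise: one must carefully show that the $b$-component of $\C{T}^\mathsf{str}$'s composite state, after any history, coincides with the state of $\autval{b}$ on the trace actually output, so that "$q_b$ can reach $Win(\C{G}_b)$" is exactly the condition needed, and that a tail of an $\assume$-strategy achieving value $\geq b$ from a given point indeed certifies membership in $Win(\C{G}_b)$ from the corresponding state — this is where the equivalences of Theorems~\ref{dba-thm}, \ref{thm:winning}, and \ref{thm:correct-constrained} have to be invoked at the level of sub-runs rather than full runs.
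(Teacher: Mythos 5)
Your overall architecture matches the paper's: the complexity analysis is identical, and your contradiction argument for strong optimality rests on exactly the two facts the paper isolates as Lemmas~\ref{lem:strong1} and~\ref{lem:strong2} --- namely, (A) an $\assume$-strategy guaranteeing value at least $b$ from $\trace\cdot X$ exists iff the $b$-component can transition into $Win(\C{G}_b)$, and (B) if $\omega_\mathsf{str}$ selects value $u$ at some step, then the best guaranteed value of $\sigma_\mathsf{str}$ from that point is at least $u$. Your justification of (A) is essentially the paper's (restrict/extend the strategy and apply the winning-region characterization at the shifted initial state).

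The gap is in your justification of (B). You write that ``by the transducer property of $\C{T}_{\C{G}_{b'}}$, every subsequently-induced finite trace extending $\trace\cdot X$ achieves value $\geq b'$,'' and earlier that the chosen move ``does not increase its distance to $RA(\C{G}_{b_1})$.'' Neither claim is warranted as stated: $\C{T}^\mathsf{str}$ does not commit to following $\C{T}_{\C{G}_{b'}}$ after this step --- at every subsequent step it re-selects the maximal applicable value and may output according to a different component transducer --- so you cannot import the guarantee of any single $\C{T}_{\C{G}_{b}}$ wholesale. Moreover, merely keeping the relevant components inside their winning regions does not yield B\"uchi acceptance: one needs infinitely many visits to $F_b$ for some \emph{fixed} $b$, and the distance-decreasing property of $\omega_b$ only operates while $\omega_b$ is actually being followed. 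The missing idea (the heart of the paper's Lemma~\ref{lem:strong2}) is that the value selected by $\omega_\mathsf{str}$ is monotonically non-decreasing along any run: if the $b$-component enters $Win(\C{G}_b)$ at step $i$, then some move keeps it in $Win(\C{G}_b)$ at step $i+1$, so the selected value never drops. Since $B$ is finite, the selected value stabilizes at some $b_\infty \geq u$, after which $\C{T}^\mathsf{str}$ follows $\C{T}_{\C{G}_{b_\infty}}$ forever; only then can one conclude that the run of $\autval{b_\infty}$ on the produced trace is accepting, and hence that every induced finite trace extending $\trace\cdot X$ has value at least $b_\infty \geq u$. The synchronization issue you flag (that the $b$-component of the product tracks the run of $\autval{b}$ on the trace actually output) is real but routine; the monotonicity-and-stabilization step is the one your sketch actually lacks.
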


\section{Discussion and Concluding Remarks}
\label{sec:conclusion}
It has been widely remarked in the (infinite) LTL synthesis literature that environment assumptions are ubiquitous: the existence of winning strategies is almost always predicated on some kind of environment assumption. This was observed in the work of \cite{ChatterjeeHJ08}, %
motivating the introduction of 
the influential assume-guarantee synthesis model, and in work on rational synthesis \cite{fis-kup-lus-tacas10}, where the environment is assumed to act as a rational agent, and synthesis necessitates 
finding a Nash equilibrium. Interesting reflections on the role of assumptions in \LTL synthesis, together with a survey of the relevant literature, can be found in \cite{DBLP:journals/corr/BloemEJK14}.

In this paper, we explored the issue of handling environment assumptions in \LTLf synthesis \cite{DBLP:conf/ijcai/GiacomoV13}, the counterpart of \LTL synthesis for programs that terminate. Our starting point %
was the observation that the standard approach to handling assumptions in \LTL synthesis (via logical implication) fails in the finite-trace setting. %
This led us to propose an extension of \LTLf\ synthesis %
 that explicitly accounts for environment assumptions.
The key insight underlying the new model of \emph{constrained \LTLf\ synthesis} 
is that while the synthesized program must realize the objective in a finite number of steps, the environment continues to exist after the program terminates, so environment assumptions should be interpreted under infinite \LTL\ semantics. 

We studied the relationships holding between constrained \LTLf\ synthesis and (standard) \LTLf\ and \LTL\ synthesis.
In particular, we identified a fundamental difficulty in reducing constrained \LTLf\ synthesis to \LTLf\ synthesis --  the former problem can require unbounded strategies, while bounded strategies suffice for the latter.  Nevertheless, when the constraints were restricted to the safe \LTL fragment, a reduction from constrained \LTLf synthesis to \LTLf synthesis is possible. 
Interestingly, this explains why planning -- more naturally conceived as a constrained \LTLf synthesis problem -- can also be encoded as \LTLf synthesis.
The connection between synthesis and planning  has been remarked in several works (see e.g., \cite{deg-var-ijcai15,DIppolitoRS18,cam-bai-mui-mci-icaps18,cam-bai-mui-mci-ccai18,cam-mui-bai-mci-ijcai18}).  
We also showed how to reduce constrained \LTLf synthesis to (infinite) \LTL synthesis, which provides a worst-case optimal means of solving constrained \LTLf synthesis problems, in the general case, using (infinite) \LTL synthesis tools. %
In the case where our constraint is comprised of a conjunction of safe and co-safe formulae, 
we showed that the constrained \LTLf synthesis problem can be reduced to DBA games
and the winning strategy determined from the winning region. %
What makes our approach interesting is that 
the DBA is constructed via manipulation of DFAs, much easier to handle in practice than infinite-word automata.

We next turned our attention to the problem of 
augmenting constrained \LTLf synthesis with quality measures. We were motivated by practical concerns surrounding the ability to differentiate and synthesize high-quality strategies in settings where we may have a collection of mutually unrealizable objective formulae and  %
alternative strategies of differing quality.  Our work builds on results for the infinite case, e.g., \cite{AlmagorBK16,AlmagorKRV17,kupferman-qualsurvey16} with and without environment assumptions. 
We adopted \LTLfF %
as our language for specifying quality measures. While the syntax of \LTLfF  is utilitarian, many more compelling preference languages are reducible to this core language.  We defined two different notions of optimal strategies -- bgv-optimal and strongly bgv-optimal. The former adapts a similar definition in \cite{AlmagorBK16} and the latter originates with us.  
We focused again on %
assumptions that can be expressed as conjunctions of safe and co-safe formulae and 
provided algorithms to compute bgv- and strongly bgv-optimal strategies with optimal (2EXP) complexity. %

Proper handling of environment assumptions and quality measures, together with the design of efficient algorithms for such richer specifications, is essential to putting \LTLf synthesis into practice. 
The present paper makes several important advances in this direction and 
also suggests a number of interesting topics for future work including:
the study of other types of assumptions in the finite-trace setting (e.g. rational synthesis), 
the exploitation of more compelling KR languages for specifying preferences, 
and the exploration of further ways of comparing and ranking strategies  
(perhaps incorporating notions of cost or trace length).

\subsection*{Relation to Conference Version}
This paper appears in the Proceedings of the 16th International Conference on Knowledge Representation and Reasoning (KR 2018) without the appendix proofs. The body of this paper is the same as the KR 2018 paper, except that a minor typographic error in the translation of \LTL into \LTLf (the definition of $\translate(\ltluntil{\varphi_1}{\varphi_2}) $ on page 6), which originally appeared in \cite{DBLP:conf/ijcai/GiacomoV13} and was repeated in the KR 2018 paper, has been corrected.

        \subsection*{Acknowledgements}
    This work was partially funded by the Natural Sciences and Engineering Research Council of Canada (NSERC) and the ANR project GoAsQ (ANR-15-CE23-0022).

    \begin{small}
  \bibliography{main}
  \bibliographystyle{aaai}
    \end{small}
\section*{Proofs}

\medskip

\noindent\textbf{Theorem \ref{thm:safe_reduction}}.
When $\assume$ is a safe formula, 
the constrained \LTLf\ specification $\C{S}=\constrainedproblem{}$ is realizable iff the \LTLf specification $\C{S}'=\langle \Uvars, \Cvars, \assume' \rightarrow \varphi \rangle$ is realizable,
where $\assume'$ is %
obtained from $\assume$ by replacing every occurrence of $\ltlnext{\psi}$ by $\ltlweaknext{\psi}$. 

\begin{proof}
Let $\strategy'$  be a winning strategy for $\C{S}'=\langle \Uvars, \Cvars, \assume' \rightarrow \varphi \rangle$,  
with $\assume$ a safe formula. We define a strategy $\sigma$ for $\C{S}=\constrainedproblem{}$ by setting $\sigma(X_1\cdots X_n)$ equal to
\begin{itemize}
\item $\sigma'(X_1 \cdots X_n) \setminus \{\endtrace\}$, when $\endtrace \in \sigma'(X_1 \cdots X_n)$ and \newline $(X_1 \cup \strategy'(X_1)) \ldots (X_n \cup \strategy'(X_1\cdots X_n)) \not \models \assume'$; 
\item $\sigma'(X_1 \cdots X_n)$, otherwise.
\end{itemize}
To show that $\sigma$ is a winning strategy, take some $\mathbf{X} \in (2^\C{X})^\omega$. Then $n_{\strategy',\mathbf{X}} < \infty$, %
and %
 the finite trace $\pi' = \tracestrat{\sigma'}{\mathbf{X}}$ is such that %
(i) $\pi'\models \neg \assume'$, or (ii) $\pi' \models \varphi$
If (i) holds, then $n_{\strategy,\mathbf{X}}=\infty$ (since we will remove $\endtrace$),
and $\pi = \inftracestrat{\sigma}{\mathbf{X}}$ contains $\pi'$ as a prefix. 
We can then use the \LTLf\ equivalence $\neg \ltlweaknext{} \varphi \equiv \ltlnext{} \neg \varphi$ and the fact that $\assume$ is safe to derive
$\pi \models \neg \assume$. If (ii) holds (and (i) does not), then it follows from the definition of $\sigma$ that $n_{\strategy,\mathbf{X}}=n_{\strategy',\mathbf{X}}$ and $\pi = \tracestrat{\sigma}{\mathbf{X}} =\tracestrat{\sigma'}{\mathbf{X}}$, so $\pi \models \varphi$. 

\smallskip

For the other direction, let $\strategy$ be a winning strategy for $\C{S}$. 
Define a strategy $\sigma'$ for $\C{S}'$ %
by setting $\sigma'(X_1\cdots X_n)$ equal to
\begin{itemize}
\item $\strategy(X_1\cdots X_n) \cup \set{\endtrace}$, if $(X_1 \cup \strategy(X_1)) \ldots (X_n \cup \strategy(X_1\cdots X_n))$ is a bad prefix for $\assume$,
and 
$\endtrace \not \in \sigma'(X_1\cdots X_k)$
for 
$k < n$;
\item $ \strategy(X_1\cdots X_n) \setminus \set{\endtrace}$, if 
$\endtrace \in \sigma'(X_1\cdots X_k)$ for some $k < n$;
\item $\sigma'(X_1\cdots X_n)= \strategy(X_1\cdots X_n)$, otherwise
\end{itemize}

Basically, $\sigma'$ mimics $\sigma$ and outputs $\endtrace$ as soon as a bad prefix is reached or $\sigma$ terminates the execution due to satisfaction of $\varphi$, whichever situation occurs first.
The construction of $\sigma'$ is such that executions always terminate
-- which is not necessarily true for $\sigma$ -- and induce finite traces that satisfy the \LTLf formula $\alpha' \rightarrow \varphi$.

To show that $\sigma'$ is a winning strategy for $\C{S}'$, take some $\mathbf{X} \in (2^\C{X})^\omega$.  Consider the infinite induced trace $\pi = \inftracestrat{\sigma}{\mathbf{X}}$, and let $k_{\mathbf{X}}$
be the minimum $k$ such that $\pi(1) \ldots \pi(k)$ is a bad prefix for $\assume$ ($k_{\mathbf{X}} = \infty$ if $\pi$ has no bad prefix). 
We examine these three cases separately: 
(i) $n_{\strategy,\mathbf{X}} < \infty$ and $k_{\mathbf{X}} \leq n_{\strategy,\mathbf{X}}$, 
(ii) $n_{\strategy,\mathbf{X}} < \infty$ and $k_{\mathbf{X}} > n_{\strategy,\mathbf{X}}$, and
(iii) $n_{\strategy,\mathbf{X}}=\infty$.
 
If (i) holds,
then $n_{\strategy',\mathbf{X}} = k_{\mathbf{X}}$, and  $\pi' = \tracestrat{\sigma'}{\mathbf{X}}= \pi(1) \ldots \pi(k_\mathbf{X})$.
From the fact that $\pi(1) \ldots \pi(k_{\mathbf{X}})$ is a bad prefix for the safe formula $\assume$ and the \LTLf\ equivalence $\neg \ltlweaknext{} \varphi \equiv \ltlnext{} \neg \varphi$, 
we obtain %
$ \pi'\models \neg \assume'$.  

If (ii) holds, then
it follows from the definition of $\sigma'$ that
$n_{\strategy',\mathbf{X}}=n_{\strategy,\mathbf{X}}$ and 
$\tracestrat{\sigma'}{\mathbf{X}}=\pi(1) \ldots \pi(n_{\strategy,\mathbf{X}})$. Since 
$\sigma$ is a winning strategy for $\C{S}$ with $n_{\strategy,\mathbf{X}} < \infty$, we must have $\pi(1) \ldots \pi(n_{\strategy,\mathbf{X}}) \models \varphi$, hence $\tracestrat{\sigma'}{\mathbf{X}} \models \varphi$.

If (iii) holds,
then it must be the case that  $\pi \not\models \alpha$ (because $\varphi$ is a winning strategy).
As $\alpha$ is a safe formula, $\pi$ must contain a bad prefix for $\alpha$, so $k_{\mathbf{X}} < \infty$. Using a similar argument as in case (i), we can show that $\pi' \models \neg \alpha'$.
\end{proof}

\medskip

\noindent\textbf{Theorem \ref{reduction-infinite}}
The constrained \LTLf\ specification $\C{S}=\constrainedproblem{}$ is realizable iff \LTL\ specification $\C{S}^\infty=\langle \Uvars, \Cvars \cup \set{\mathsf{alive},\endtrace}, \psi_{\assume, \varphi} \rangle$ is realizable. Moreover, 
for every winning strategy $\sigma$ for $\C{S}^\infty$, 
the strategy $\sigma'$ defined by $\strategy'(X_1\cdots X_n) \defeq \strategy(X_1 \cdots X_n) \setminus \set{\mathsf{alive}}$ is a winning strategy for $\C{S}$. 
\begin{proof} 
For the first direction, suppose that $\C{S}=\constrainedproblem{}$ is realizable, and let $\sigma$ be a winning strategy for $\C{S}$. 
By definition, for every $\mathbf{X} \in (2^\C{X})^\omega$, the trace $\inftracestrat{\sigma}{\mathbf{X}}$ satisfies one of the following: $(i)$ it has a finite prefix of length $n_{\strategy,\mathbf{X}} < \infty$ that satisfies $\varphi$, or $(ii)$  $n_{\strategy,\mathbf{X}} = \infty$ and  $\inftracestrat{\sigma}{\mathbf{X}} \not \models \assume$. 
We define a strategy $\sigma^\infty$ for $\C{S}^\infty=\langle \Uvars, \Cvars \cup \set{\mathsf{alive},\endtrace}, \psi_{\assume, \varphi} \rangle$
as follows: %
\begin{itemize}
\item $\sigma^\infty(X_1\cdots X_n)= \sigma(X_1\cdots X_n) \cup \{\mathsf{alive}\}$, if there is no $k < n$ such that $\endtrace \in \sigma(X_1\cdots X_k)$  
\item $\sigma^\infty(X_1\cdots X_n) = \sigma(X_1\cdots X_n)$, otherwise
\end{itemize}
We claim that $\sigma^\infty$ is a winning strategy for $\C{S}^\infty$. Take some $\mathbf{X}=X_1 X_2 \ldots $, let $\pi=\inftracestrat{\sigma}{\mathbf{X}}$ and $\pi^\infty= \inftracestrat{\sigma^\infty}{\mathbf{X}}$. 
We first show that $\pi^\infty \models \psi_{\endtrace}$. First note that if $\endtrace \in \pi^\infty(i)$,  then $\endtrace \in \pi(i)$, which means $\endtrace \not \in \pi(i+1)$ (since $\pi$ contains at most one $\endtrace$).
It follows that $\mathsf{alive} \in \pi^\infty(i)$ and $\mathsf{alive}\not \in \pi^\infty(i+1)$. Next suppose that $\mathsf{alive} \in \pi^\infty(i)$ but $\mathsf{alive} \not \in \pi^\infty(i+1)$. 
This can only occur if $\endtrace \in \pi(i)$, which implies that $\endtrace \in \pi^\infty(i)$.
We have thus shown that $\pi^\infty \models \ltlalways{(\endtrace \leftrightarrow \mathsf{alive} \land \ltlnext{\neg \mathsf{alive}})}$.
We also have $\pi^\infty \models  \ltlalways{(\endtrace \rightarrow \ltlnext{\ltlalways{ \neg \endtrace}})}$, since $\endtrace \in \pi^\infty(i)$ iff $\endtrace \in \pi(i)$,
and $\pi$ contains at most one occurrence of $\endtrace$. 

We next show that $\pi^\infty  \models \left ((\assume \vee \ltleventually{\endtrace}) \rightarrow \varphi_{\mathsf{inf}}\right)$. 
First consider the case (i) where $\pi$ has a finite prefix of length $n_{\strategy,\mathbf{X}} < \infty$ that satisfies $\varphi$. 
Then $\endtrace \in \pi^\infty(n_{\strategy,\mathbf{X}})$, so 
$\mathsf{alive} \in \pi^\infty(i)$ for $1 \leq i \leq n_{\strategy,\mathbf{X}}$ 
and $\mathsf{alive} \not \in \pi^\infty(i)$ for $i > n_{\strategy,\mathbf{X}}$. 
It follows that $\pi^\infty \models \varphi_{\mathsf{inf}}$. Next suppose that (ii) holds, i.e.\ 
$n_{\strategy,\mathbf{X}} = \infty$ and  $\pi \not \models \assume$. 
Since $\assume$ only involves variables from $\C{X} \cup \C{Y}$, and $\pi^\infty$ coincides with $\pi$ on 
$\C{X} \cup \C{Y}$, it follows that $\pi^\infty  \not \models \assume$. 
As $n_{\strategy,\mathbf{X}} = \infty$, we know that $\endtrace$ does not occur in $\pi$. The same must hold for $\pi^\infty$,
hence $\pi^\infty \not \models \ltleventually{\endtrace}$.    
We thus obtain $\pi^\infty  \models \neg (\assume \vee \ltleventually{\endtrace})$, hence 
$\pi^\infty  \models \left ((\assume \vee \ltleventually{\endtrace}) \rightarrow \varphi_{\mathsf{inf}}\right)$. 

\smallskip

For the other direction, suppose $\C{S}^\infty$ is realizable, and let $\sigma^\infty$ be a winning strategy for $\C{S}$. 
We define $\sigma$ as follows:  $\sigma(X_1 \ldots X_n)= \sigma^\infty(X_1 \ldots X_n) \setminus \{\mathsf{alive}\}$. 
Our aim is to show that $\sigma$ is a winning strategy for $\C{S}$. Consider some $\mathbf{X}=X_1 X_2 \ldots $, and 
let $\pi^\infty= \inftracestrat{\sigma^\infty}{\mathbf{X}}$ and $\pi=\inftracestrat{\sigma}{\mathbf{X}}$. 
First note that since $\pi^\infty \models \ltlalways{(\endtrace \rightarrow \ltlnext{\ltlalways{ \neg \endtrace}})}$, 
$\pi^\infty$ contains at most one occurrence for $\endtrace$, and the same holds for $\sigma$. 
It follows that $\sigma$ is a valid strategy. Next suppose that $\endtrace \in \pi^\infty(n)$.
Then $\tracestrat{\sigma}{\mathbf{X}}= \pi(1) \ldots \pi(n)$. 
As $\pi^\infty  \models \left ((\assume \vee \ltleventually{\endtrace}) \rightarrow \varphi_{\mathsf{inf}}\right)$
and $\pi^\infty \models \ltleventually{\endtrace}$, we have $\pi^\infty \models \varphi_{\mathsf{inf}}$.
When combined with $\pi^\infty \models \ltlalways{(\endtrace \leftrightarrow \mathsf{alive} \land \ltlnext{\neg \mathsf{alive}})}$,
we get that $\mathsf{alive} \in \pi^\infty(i) $ for $i \leq n$ and $\mathsf{alive} \not \in \pi^\infty(i) $ for $i > n$. 
This, together with the definition of $\varphi_{\mathsf{inf}}$, implies that the induced finite trace $\pi(1) \ldots \pi(n)$
satisfies $\varphi$. Finally, consider the case where $\endtrace$ does not occur in $\pi^\infty$. 
We know that $\pi^\infty  \models \left ((\assume \vee \ltleventually{\endtrace}) \rightarrow \varphi_{\mathsf{inf}}\right)$.
Note that we cannot have $\pi^\infty  \models  \varphi_{\mathsf{inf}}$ since it implies, when combined with 
$\pi^\infty  \models  \ltlalways{(\endtrace \leftrightarrow \mathsf{alive} \land \ltlnext{\neg \mathsf{alive}})}$,
that $\endtrace$ occurs in $\pi^\infty$. It follows that $\pi^\infty  \models \neg (\assume \vee \ltleventually{\endtrace}) $,
or equivalently, $\pi^\infty  \models \neg \assume \wedge \ltlalways{\neg \endtrace}$. From this, we can derive that $\pi \not \models \assume$
and $\pi$ does not contain $\endtrace$. We have thus established that $\sigma$ is a winning strategy.
\end{proof}

\medskip

\noindent\textbf{Theorem \ref{dba-thm}}
The \DBA $\constraintdba$ accepts infinite traces $\pi$ such that either: (i) 
$\pi(1)\cdots\pi(n) \models \varphi$ and $\endtrace$ occurs only in $\pi(n)$,
or (ii) $\pi \not\models \assume_s \wedge \assume_c$ and $\endtrace$ does not occur in $\pi$. 
$\constraintdba$ can be constructed in double-exponential time in 
$|\assume_s| + |\assume_c| + |\varphi|$.
\begin{proof}
First, we prove that the language of $\constraintdba$ contains the set of traces that satisfy one of the conditions $(i)$ and $(ii)$. Then, we prove that the language of $\constraintdba$ is contained in the set of traces that satisfy one of the conditions $(i)$ and $(ii)$.
The construction of $\constraintdba$ is polynomial in the size of $\C{A}_s$, $\C{A}_c$, and $\C{A}_g$, which can be constructed in double-exponential time in $|\assume_s|$, $|\assume_c|$, and $|\assume_g|$, respectively.

$(\supseteq)$
(i) 
If $\pi(1)\cdots\pi(n) \models \varphi$ and $\endtrace$ occurs only in $\pi(n)$,
then the run of $\C{A}_g$ on $\pi(1)\cdots\pi(n)$ is accepting, i.e. its last state belongs to $F_g$. 
It then follows from the definition of $\constraintdba$ that the run of $\constraintdba$ on $\pi$ will transition to $q_\top$ after reading $\pi(n)$ and, because $\endtrace$ occurs only in $\pi(n)$, it will then loop at $q_\top$. %
As $q_\top$ is an accepting state of $\constraintdba$, this shows that $\constraintdba$ accepts $\pi$.
(ii)
If $\pi \not\models \assume_s \wedge \assume_c$ and $\endtrace$ does not occur in $\pi$,
then either  $\pi \not\models \assume_s$ or  $\pi \not\models \assume_c$. 
In the first case ($\pi \not\models \assume_s$), the trace $\pi$ contains a bad prefix $\pi(1)\cdots\pi(n)$, and w.l.o.g. we can suppose that this is the shortest such prefix. 
It follows that the run of $\C{A}_s$ on $\pi(1)\cdots\pi(n)$ is accepting. From the definition of $\constraintdba$, the run of $\constraintdba$ on $\pi$ transitions to a state $(q_s,q_c,q_g)$ where $q_s = q_\mathsf{bad}$. Because $\endtrace$ does not occur in $\pi$, after having read $\pi(n)$, the run of $\constraintdba$ on $\pi$ will remain among the states whose third component is $q_\mathsf{bad}$. By construction, these states are accepting, so $\constraintdba$ accepts $\pi$.
In the second case ($\pi \not\models \assume_g$), the run of $\C{A}_c$ on finite prefix $\pi(1)\cdots\pi(n)$ is not accepting for any $n < \infty$. 
It follows that the run of $\constraintdba$ on $\pi$ will not visit any state $(q_s,q_c,q_g)$ with $q_c = q_\mathsf{good}$, and since $\pi$ does not contain $\endtrace$, it also cannot contain the state $q_\bot$. Thus, the run of $\pi$ will only visit accepting states, so $\constraintdba$ accepts $\pi$.

$(\subseteq)$
Let $\pi$ be an infinite trace that is accepted by $\constraintdba$. We distinguish three cases: $(a)$ $\endtrace$ does not occur in $\pi$; $(b)$ $\endtrace$ occurs exactly one time in $\pi$; $(c)$ $\endtrace$ occurs more than one time in $\pi$.
Case $(a)$:
if $\endtrace$ does not occur in $\pi$, then $q_\top$ does not occur in the run of $\constraintdba$ on $\pi$. As this run is accepting but does not contain $q_\top$, %
it must either hit infinitely often states with $q_\mathsf{bad}$, or hit infinitely often states without $ q_\mathsf{good}$. 
In the first case, let $\pi(1)\cdots\pi(n)$ be the smallest prefix of $\pi$ such that after reading $\pi(1)\cdots\pi(n)$, the DBA $\constraintdba$ enters a state $(q_s,q_c,q_g)$ with $q_s = q_\mathsf{bad}$.
By construction of $\constraintdba$, it must be that the run of $\C{A}_s$ on $\pi(1)\cdots\pi(n)$ is accepting. 
Thus, $\pi(1)\cdots\pi(n)$ %
 is a bad prefix of $\assume_s$, which means  $\pi \not\models \assume_s$.
In the second case, we observe that if a run enters a state $(q_s,q_c,q_g)$ with $q_c = q_\mathsf{good}$, then it remains in a state with $\mathsf{good}$ in the second component unless a symbol with $\endtrace$ is read. As the considered trace $\pi$ does not contain $\endtrace$, it follows that 
the run of $\constraintdba$ on $\pi$ does not hit any state $(q_s,q_c,q_g)$ with $q_c = q_\mathsf{good}$. Hence, 
$\C{A}_c$ must not accept any finite prefix of $\pi$, so $\pi$ does not contain any good prefixes for $\assume_c$, i.e.\ 
$\pi \not\models \assume_c$.
This concludes that case $(a)$ implies case $(ii)$.
Case $(b)$: suppose that $\endtrace$ occurs exactly one time in $\pi$, say in the $n$th symbol $\pi(n)$.
Immediately after reading $\pi(n)$, the DBA $\constraintdba$ will transition to either $q_\top$ or $q_\bot$. However, since the  run of $\constraintdba$ on $\pi$ is accepting, it cannot contain $q_\bot$. We thus have a transition of the form $\delta((q_s,q_c,q_g), \pi(n))=q_\top$ with
$\delta_g(q_g,\pi(n)) \in \QF{g}$. 
It follows that the finite prefix $\pi(1)\cdots\pi(n)$ is accepted by the DFA $\C{A}_g$, i.e., $\pi(1)\cdots\pi(n) \models \varphi$. 
We have thus shown that case $(b)$ implies case $(i)$.
To conclude the proof, observe that Case $(c)$ is not possible. 
Indeed, the first symbol with $\endtrace$ forces an automaton transition to either $q_\top$ or $q_\bot$, 
and the second symbol with $\endtrace$ forces an automaton transition to $q_\bot$.
Because state $q_\bot$ is absorbing and not accepting, 
$\constraintdba$ does not accept traces where $\endtrace$ occurs more than one time.
\end{proof}

\medskip

\noindent\textbf{Theorem \ref{thm:correct-constrained}}
Consider a constrained \LTLf specification $\C{S}=$\agproblem{\assume_s \wedge \assume_c}{\varphi} where $\assume_s$ (resp.\ $\assume_c$) is a safe (resp.\ co-safe) formula. 
Then:
\begin{itemize}
\item  $\C{S}$ is realizable iff the \DBA game $\C{G}^{\assume_s, \assume_c}_{\varphi}$ is winning;
\item Every winning strategy for 
$\C{G}^{\assume_s, \assume_c}_{\varphi}$ is a winning strategy for $\C{S}$, and vice-versa;
\item Deciding whether $\C{G}^{\assume_s, \assume_c}_{\varphi}$ is winning, and constructing a winning strategy when one exists,
can be done in 2EXP. 
\end{itemize}
\begin{proof}
By Theorem \ref{dba-thm},
the language of the \DBA $\constraintdba$ contains all, and only, the infinite traces 
$\pi$ such that either:
(i) 
$\pi(1)\cdots\pi(n) \models \varphi$ and $\endtrace$ occurs only in $\pi(n)$,
or (ii) $\pi \not\models \assume_s \wedge \assume_c$ and $\endtrace$ does not occur in $\pi$. 
By definition, a strategy $\sigma$ is a winning strategy for $\C{G}^{\assume_s, \assume_c}_{\varphi}$ iff 
every induced infinite trace $\inftracestrat{\sigma}{\mathbf{X}}$ is accepted by $\constraintdba$. %
It follows that winning strategies for 
$\C{G}^{\assume_s, \assume_c}_{\varphi}$ are winning strategies for $\C{S}$, and vice versa.
By Theorem \ref{dba-thm}, we can construct the DBA $\constraintdba$ in double-exponential time.
We can then construct, in polynomial time in $|\constraintdba|$ the set $Win(\C{G}^{\assume_s, \assume_c}_{\varphi} )$
and check whether  $((q_0)_{s}, (q_0)_{c}, (q_0)_{g}) \in Win(\C{G}^{\assume_s, \assume_c}_{\varphi} )$.
If the latter holds, then we construct a winning strategy in the form of a transducer. This step is also
polynomial w.r.t.\ $|\constraintdba|$, and thus the entire procedure can be performed in double-exponential time. 
\end{proof}

\medskip

\noindent\textbf{Theorem \ref{thm:opt_strategies}}
A \opt strategy can be constructed in double-exponential time. %
\begin{proof}
We first argue that the described construction yields an \opt\ strategy. We know from \cite{AlmagorBK16}
that the $\hat{\C{A}}_{\varphi,\desiredvals}$ accepts finite traces $\pi$ such that
$\eval{\trace}{\varphi}{} \geq b$. By reusing the arguments from the proof of Theorem \ref{dba-thm},
we can show that $\C{A}_{\varphi \geq b}^\assume$ recognizes infinite traces
such that either (i) $\assume =\assume_s \wedge \assume_c$ is violated and $\endtrace$ does not occur, %
or (ii) $\endtrace$ occurs exactly once and the induced finite trace $\pi$ is such that $\eval{\trace}{\varphi}{} \geq b$. It follows that an $\assume$-strategy $\strategy$ with $\wcv{\strategy} \geq b$ exists iff 
the DBA game $\langle \C{X}, \C{Y} \cup \{\endtrace\}, \C{A}_{\varphi \geq b}^\assume \rangle$ is winning.
We can use the techniques described in Section \ref{ssec:dba-games} to decide whether such a DBA is winning.
By considering the values in $V(\varphi)$ in descending order, 
we can determine the maximal $b^*$
for which an $\assume$-strategy $\strategy$ with $\wcv{\strategy} \geq b^*$ exists. 
We can then compute such a strategy, for the identified best value  $b^*$, 
by constructing a winning strategy for the \DBA game $\langle \C{X}, \C{Y} \cup \{\endtrace\}, \C{A}_{\varphi \geq b^*}^\assume \rangle$, again using the techniques from Section \ref{ssec:dba-games}.

For each value $b$, we can construct the DFA $\hat{\C{A}}_{\varphi,\desiredvals}$ in double-exponential time,
and the DFAs for safe and co-safe constraints can also be constructed in double-exponential time. 
It follows that $\C{A}_{\varphi \geq b}^\assume$ can also be constructed in double-exponential time.
Determining whether the DBA game $\langle \C{X}, \C{Y} \cup \{\endtrace\}, \C{A}_{\varphi \geq b}^\assume \rangle$ is winning, and constructing a winning strategy for the game, is also in double-exponential time. Finally, we note that
all of the preceding double-exponentional time operations are performed at most once per value $b \in V(\varphi)$, 
so the overall procedure runs in double-exponential time. 
\end{proof}

\medskip

\medskip

The next two lemmas will be used to prove Theorem \ref{thm:strong-opt}. In what follows, it will be convenient to slightly abuse notation and use $\delta(q,\pi)$, with $\pi$ a finite trace, to indicate the automata state resulting 
from reading $\pi$ starting from state $q$ (and similarly for the output function $\omega$ of transducers on a finite string $X_1 \ldots X_n$). 

\begin{lemma}\label{lem:strong1}
For every $\pi=(X_1 \cup Y_1) \ldots (X_n \cup Y_n) \in 2^{\C{X} \cup \C{Y}}$, 
$X_{n+1} \in 2^\C{X}$, and $b \in [0,1]$, the following are equivalent:
\begin{enumerate}
\item there exists an $\assume$-strategy $\strategy$ %
such that $\trace \in \compat{\strategy}$ %
and $\wcvstart{\strategy}{\trace,X_{n+1}} \geq b$
\item %
$\delta_b(q_0^b, \pi \cdot (X_{n+1}\cup Y_{n+1}))\! \in Win(\C{G}_{b})$
for some $Y_{n+1}\! \in 2^{\C{Y} \cup \{\endtrace\}}$. %
\end{enumerate}
\end{lemma}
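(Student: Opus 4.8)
\medskip
\noindent\textit{Proof plan.}
The plan is to establish the two implications separately, relying throughout on two earlier facts. The first is the version of Theorem~\ref{dba-thm} that holds for $\autval{b}$ (obtained, as in Section~\ref{ssec:algo-bgv}, by substituting $\hat{\C{A}}_{\varphi\geq b}$ for $\C{A}_g$): the automaton $\autval{b}$ accepts an infinite trace $\rho$ exactly when either (i) $\endtrace$ occurs in $\rho$ exactly once, say in $\rho(m)$, and $\eval{\rho(1)\cdots\rho(m)}{\varphi}{}\geq b$, or (ii) $\endtrace$ does not occur in $\rho$ and $\rho\not\models\assume$. The second is the correctness of the B\"uchi-game solver behind Theorem~\ref{thm:winning} together with the transducer of Section~\ref{ssec:dba-games}: it computes $Win(\C{G}_b)$ so that $q\in Win(\C{G}_b)$ iff Player~II has a strategy, depending on the symbols read after $q$ is reached, that forces the remaining run to visit $F_b$ infinitely often; since a finite prefix does not affect B\"uchi acceptance, this is the same as forcing the whole infinite word into $\C{L}(\autval{b})$. (If $b=0$ both statements hold trivially, so I will assume $b>0$.)

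For $1\Rightarrow 2$, suppose $\strategy$ is an $\assume$-strategy with $\trace\in\compat{\strategy}$ and $\wcvstart{\strategy}{\trace,X_{n+1}}\geq b$. I would set $Y_{n+1}\defeq\strategy(X_1\cdots X_nX_{n+1})\setminus\set{\endtrace}$, let $q\defeq\delta_b(q_0^b,\trace\cdot(X_{n+1}\cup Y_{n+1}))$, and have Player~II play, from $q$, exactly as $\strategy$ dictates (on input suffix $X_{n+2}X_{n+3}\cdots$ she outputs $\strategy(X_1\cdots X_{n+1}X_{n+2}\cdots)$). For a given suffix, let $\mathbf{X}$ be the resulting full input sequence; the induced infinite word of the play is $\inftracestrat{\strategy}{\mathbf{X}}$. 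If $\strategy$ terminates on $\mathbf{X}$ then, using $\wcvstart{\strategy}{\trace,X_{n+1}}\geq b>0$ to see that it does not terminate before step $n+1$, the induced finite trace extends $\trace\cdot(X_{n+1}\cup Y_{n+1})$, hence is one of the traces in the minimum defining $\wcvstart{\strategy}{\trace,X_{n+1}}$ and so has value $\geq b$; thus the word meets condition~(i). If $\strategy$ does not terminate on $\mathbf{X}$, then, $\strategy$ being an $\assume$-strategy, the word has no $\endtrace$ and violates $\assume$, i.e.\ condition~(ii). Either way the play is winning, so $q\in Win(\C{G}_b)$.

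For $2\Rightarrow 1$, fix $Y_{n+1}$ with $q\defeq\delta_b(q_0^b,\trace\cdot(X_{n+1}\cup Y_{n+1}))\in Win(\C{G}_b)$ and let $\tau$ be a Player~II strategy winning from $q$. I would define $\strategy$ to output $Y_1,\ldots,Y_n$ on the first $n$ steps, to output $Y_{n+1}$ at step $n+1$ whenever the $(n{+}1)$-st input equals $X_{n+1}$, to follow $\tau$ from step $n{+}2$ onwards, and to output $\set{\endtrace}$ (terminating) at the first step where the input prefix departs from $X_1\cdots X_{n+1}$. Then $\trace\in\compat{\strategy}$, since the branch that first departs at step $n+1$ induces a length-$(n{+}1)$ finite trace extending $\trace$. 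To see $\strategy$ is an $\assume$-strategy: every diverging branch terminates outright, and on a branch agreeing with $X_1\cdots X_{n+1}$ the induced infinite word lies in $\C{L}(\autval{b})$, hence by the first fact is either case~(i) (so $\strategy$ terminates) or case~(ii) (so the infinite trace violates $\assume$). Finally, any full trace of $\strategy$ extending $\trace\cdot(X_{n+1}\cup(Y_{n+1}\setminus\set{\endtrace}))$ comes from some $\mathbf{X}$ agreeing with $X_1\cdots X_{n+1}$ with $n_{\strategy,\mathbf{X}}<\infty$, so its completion $\inftracestrat{\strategy}{\mathbf{X}}$ is accepted by $\autval{b}$ and contains $\endtrace$; by condition~(i) this finite trace has value $\geq b$, whence $\wcvstart{\strategy}{\trace,X_{n+1}}\geq b$.

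The step I expect to be the main obstacle is the second direction, specifically the simultaneous verification that $\strategy$ is a bona fide $\assume$-strategy \emph{and} that $\wcvstart{\strategy}{\trace,X_{n+1}}\geq b$: one must rule out that the winning strategy $\tau$ keeps the run accepting by ever exploiting clause~(ii) along a branch on which $\assume$ is nonetheless satisfiable by some environment behavior — for then $\strategy$ would be obliged to terminate there, possibly with value below $b$ — and, symmetrically, that on branches where $\strategy$ never plays $\endtrace$ the formula $\assume$ is genuinely violated. Unpacking the acceptance condition of $\autval{b}$ does this: a run avoiding $q_\top$ is accepting iff it eventually reaches the absorbing value $q_s=q_\mathsf{bad}$ or stays forever with $q_c\neq q_\mathsf{good}$, i.e.\ iff $\assume_s$ or $\assume_c$ is violated, so ``surviving'' without playing $\endtrace$ forces $\assume$ to be false; this is what keeps the $\assume$-strategy requirement compatible with the value guarantee (and is where one uses that $\assume$ acts as a genuine environment constraint, cf.\ the discussion of constraints realizable for the environment in Section~\ref{sec:constrained}). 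The first direction, by contrast, is a direct simulation and I do not anticipate difficulty there. \hfill$\square$
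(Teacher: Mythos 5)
Your overall route is the same as the paper's: for $1\Rightarrow 2$ you read off $Y_{n+1}$ from $\strategy$ and turn the tail of $\strategy$ into a Player-II strategy witnessing membership of the reached state in $Win(\C{G}_{b})$; for $2\Rightarrow 1$ you graft a winning game strategy onto the prefix, terminating immediately on inputs that diverge from $X_1\cdots X_{n+1}$, and read off both the $\assume$-strategy property and the value bound from the language characterization of $\autval{b}$. This matches the paper's proof (which merely packages the first direction via an auxiliary game with shifted initial state), and your second direction is sound for the same reasons the paper's is.

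There is, however, one concrete step in the forward direction that fails: you set $Y_{n+1}\defeq\strategy(X_1\cdots X_nX_{n+1})\setminus\set{\endtrace}$. Condition 2 explicitly allows $Y_{n+1}\in 2^{\C{Y}\cup\{\endtrace\}}$, and stripping $\endtrace$ breaks the sub-case in which $\strategy$ terminates exactly at step $n+1$ (your $b>0$ observation only rules out termination at step $\leq n$, not at $n+1$). In that sub-case the word actually produced by your play differs from $\inftracestrat{\strategy}{\mathbf{X}}$ at position $n+1$: it contains no $\endtrace$ there, nor anywhere later, since ``playing as $\strategy$ dictates'' emits $\endtrace$ at most once. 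So the run never reaches $q_\top$, your appeal to condition~(i) is unavailable, and the run is accepting only if the stripped word happens to violate $\assume$, which nothing guarantees. Worse, the conclusion itself can fail for your chosen witness: the product state $\delta_b(q_0^b,\pi\cdot(X_{n+1}\cup Y_{n+1}))$ reached via the stripped symbol may genuinely lie outside $Win(\C{G}_{b})$, e.g.\ when every proper extension of $\pi\cdot(X_{n+1}\cup Y_{n+1})$ has value below $b$ while $\assume$ remains satisfiable from there. The repair is simply to keep $\endtrace$ in $Y_{n+1}$, as the paper does: then the automaton transitions to $q_\top$ (the induced finite trace has value $\geq b$, so the objective DFA lands in an accepting state), and $q_\top\in Win(\C{G}_{b})$ trivially since Player II can loop there forever by never emitting $\endtrace$ again. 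With that one-line fix your argument coincides with the paper's.
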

\begin{proof} %
$(\Rightarrow)$ Suppose that $\strategy$ is an $\assume$-strategy such that $\trace \in \compat{\strategy}$ and $\wcvstart{\strategy}{\trace,X_{n+1}} \geq b$.
Set $Y_{n+1} = \sigma(X_1 \ldots X_n X_{n+1})$, and let $q^* = \delta_b(q_0^b, \pi \, (X_{n+1}\cup Y_{n+1}))$. 
Consider the DBA $\autval{b, q^*} = \langle 2^{\C{X} \cup \C{Y}\cup \{\endtrace\}}, Q_b, \delta_b, q^*, F_b \rangle $, which is the same 
as $\autval{b}$ but with $q^*$ for the initial state. 
Define a strategy $\sigma^*$ for the DBA game $\C{G}_{b}^* = \langle \C{X}, \C{Y}\cup \{\endtrace\}, \C{A}_{\geq b,q^*}^\assume \rangle$ as follows:
$$\sigma^*(X_1' \ldots X_h') = \sigma(X_1 \ldots X_{n+1} X_1' \ldots X_h')$$
We claim that $\sigma^*$ is a winning strategy for $\C{G}_{b}^*$. To see why, take any 
$\mathbf{X} \in (2^{\C{X}})^*$, and let $\pi_{\mathbf{X}}^*= \inftracestrat{\sigma^*}{\mathbf{X}}$. 
We need to show that $\pi_\mathbf{X}^*$ is accepted by $\C{A}_{\geq b, q^*}^\assume$. 
Let us consider $\pi_\mathbf{X} = \inftracestrat{\sigma}{X_1 \ldots X_{n+1} \mathbf{X}}$,
and let $s_0 s_1 s_2 \ldots $ be the infinite run of $\autval{b}$ on $\pi_\mathbf{X}$.
Note that since $\trace \in \compat{\strategy}$, we know that $\pi \sqsubseteq \pi_\mathbf{X}$,
hence either (i) $\pi_\mathbf{X}$ does not contain $\endtrace$ and $\pi_\mathbf{X} \not \models \assume$,
or (ii) $\pi_\mathbf{X}$ contains exactly one occurrence of $\endtrace$ (at some position $k \leq n+1$, since $\trace$ does not contain $\endtrace$) 
and the induced finite trace $\pi_\mathbf{X}^\mathsf{f} = \tracestrat{\sigma}{X_1 \ldots X_{n+1} \mathbf{X}}$
is such that $\eval{\pi_\mathbf{X}^\mathsf{f}}{\varphi}\ \geq b$.
It follows that $\pi_\mathbf{X}$ is accepted by $\autval{b}$, and thus
the infinite run $s_0 s_1 s_2 \ldots $ of $\autval{b}$ on $\pi_\mathbf{X}$ contains infinitely many $s_j \in F_b$.
We then observe that $s_{n+1} = q^*$. Since $\C{A}_{\geq b, q^*}^\assume$  has the same transitions as 
$\autval{b}$, it follows that $s_{n+1} s_{n+2} s_{n+3} \ldots$ is the run of $\C{A}_{\geq b, q^*}^\assume$ on $\pi_\mathbf{X}^*$. 
Note that $s_{n+1} s_{n+2} s_{n+3} \ldots$ must also contain infinitely many $s_j \in F_b$, so it is an accepting run.
We have thus shown  that $\sigma^*$ is a winning strategy for $\C{G}_{b}^*$. 

As the game $\C{G}_{b}^*$ is winning, we must have $q^* \in Win(\C{G}_{b}^*)$. 
We then remark that since $\C{G}_{b}^*$ and $\C{G}_{b}$ only differ in their initial states, the two games have precisely the same
winning regions. We thus obtain $q^* = \delta_b(q_0^b, \pi \, (X_{n+1}\cup Y_{n+1})) \in Win(\C{G}_{b})$. 

\medskip

$(\Leftarrow)$ Suppose that  $q^*=\delta_b(q_0^b, \pi \cdot (X_{n+1}\cup Y_{n+1}))\! \in Win(\C{G}_{b})$. 
By following the strategy of the transducer $\C{T}_{\C{G}_b}$ from this point on, 
we are guaranteed to produce a trace that is accepted by
the automaton $\autval{b}$. 
More precisely, consider the strategy $\sigma$ defined as follows:
\begin{itemize}
\item $\sigma(X_1' \ldots X_k')=Y_k$, if $X_1' \ldots X_k'=X_1 \ldots X_k$  ($1 \leq k \leq n+1$)
\item $\sigma(X_1' \ldots X_{n+1}' \ldots X_h')=\omega_b(\delta_b'(q^*, X_{n+1}' \ldots X_{h-1}'), X_h')$, if 
$X_1' \ldots X_{n+1}'=X_1 \ldots X_{n+1}$ 
\item $\sigma(X_1') = \endtrace$, if $X_1' \neq X_1$
\item $\sigma(X_1' \ldots X_k' X_{k+1}' )= \endtrace$, if $X_i'=X_i$ for $1 \leq k< n+1$ and $X_{k+1}' \neq X_{k+1}$
\item $\sigma(X_1' \ldots X_h') = \emptyset$, in all other cases
\end{itemize}
where $\delta_b'$ and $\omega_b$ are respectively  the transition and output functions of the transducer $\C{T}_{\C{G}_b}$.
The first bullet concerns prefixes of $X_1 \ldots X_{n+1}$ and
ensures that $(X_1 \cup Y_1) \ldots (X_{n+1}\cup Y_{n+1}) \in \compat{\sigma})$.
The second bullet states that once $X_1 \ldots X_{n+1}$ has been read, 
we start following the transducer $\C{T}_{\C{G}_b}$. 
The remaining points ensure that all infinite traces $\pi'\in \inftraces{\sigma}$ such that $\pi \, (X_{n+1}\cup Y_{n+1})\not \sqsubseteq \pi'$ contain a single occurrence of $\endtrace$. 

We claim that $\sigma$ is an $\assume$-strategy such that 
$(X_1 \cup Y_1) \ldots (X_{n+1}\cup Y_{n+1}) \in \compat{\strategy}$ and $\wcvstart{\strategy}{\trace,X_{n+1}} \geq b$. 
As noted above, the first bullet of the definition ensures that  $(X_1 \cup Y_1) \ldots (X_{n+1}\cup Y_{n+1}) \in \compat{\strategy}$. 
The last three bullets make sure that every  $\pi' \in \inftraces{\sigma}$ with 
$(X_1 \cup Y_1) \ldots (X_{n+1}\cup Y_{n+1}) \not \sqsubseteq \pi'$ contains exactly one occurrence of $\endtrace$.
It remains to consider the infinite traces that begin with $(X_1 \cup Y_1) \ldots (X_{n+1}\cup Y_{n+1})$.
Consider some such trace $\pi' = (X_1 \cup Y_1) (X_2 \cup Y_2) \ldots \in \inftraces{\sigma}$, and let $s_0 s_1 s_2 \ldots$ be the infinite run of $\autval{b}$ on $\pi'$. 
Because $(X_1 \cup Y_1) \ldots (X_{n+1}\cup Y_{n+1}) \sqsubseteq \pi'$, we know that $s_{n+1} = q^*$. %
The latter can be combined with the second bullet to show that 
for every $i\geq n+1$,  $Y_{i+1} = \omega_b(s_i, X_{i+1})$ and $s_{i+1} = \delta_b'(s_i, X_{i+1})$.
As $s_{n+1}=q^* \in Win(\C{G}_{b})$, and $\omega_b$ is defined so as to always remain within $Win(\C{G}_{b})$,
we have $s_i \in Win(\C{G}_{b})$ for every $i\geq n+1$. Furthermore, since $\omega_b$ always reduces the distance to $RA(\C{G})$, 
there are infinitely many $i \geq n+1$ such that $s_i \in F_b$. 
This establishes that $s_0 s_1 s_2 \ldots$ is an accepting run, and thus, $\pi'$ must either be such that $\pi' \not \models \assume$,
or it contains a single occurrence of $\endtrace$ such that the finite induced trace $\pi''$  is such that $\eval{\pi''}{\varphi}\ \geq b$. 
We have thus shown that every infinite trace produced by $\sigma$ either violates $\assume$ or contains $\endtrace$, i.e.\ $\sigma$ is an $\assume$-strategy. 
Moreover, for every $\pi'' \in \fulltraces{\sigma}$ with $(X_1 \cup Y_1) \ldots (X_{n+1}\cup Y_{n+1}) \sqsubseteq \pi''$, we have $\eval{\pi''}{\varphi}\ \geq b$, as desired. 
\end{proof}

In the following lemma, we use $\sigma_\mathsf{str}$ to denote the strategy implemented by the transducer $\C{T}^\mathsf{str}$. To simplify the formulation, we extend the notation $\wcvstart{\strategy}{\trace,X}$ to all finite traces that can be produced by strategy $\strategy$ (recall that Definition 2 only defines this notation for compatible traces that do not contain $\endtrace$). Formally, given a  trace $\trace = (X_1 \cup Y_1) \ldots (X_n \cup Y_n)$ that is a prefix of some infinite trace in
$\inftracestrat{\strategy}{\mathbf{X}}$ and contains $\endtrace$ at position $k \leq n$, we set $\wcvstart{\strategy}{\trace,X}= \eval{\trace'}{\varphi}{}$, 
where $\pi'= (X_1 \cup Y_1) \ldots (X_k \cup Y_k \setminus \{\endtrace\})$.

\begin{lemma}\label{lem:strong2}
For every $X_1 \ldots X_n \in (2^\C{X})^*$, $X_{n+1} \subseteq \C{X}$, and $v \in [0,1]$: 
if %
$\pi= (X_1\cup  \omega_\mathsf{str}(q_0^\mathsf{str}, X_1)) \ldots (X_n \cup  \omega_\mathsf{str}(q_0^\mathsf{str}, X_1 \ldots X_n))$,
$(q_1, q_2, \ldots, q_m)=\delta_\mathsf{str}(q_0^\mathsf{str}, X_1 \ldots X_n)$, and 
$u$ is such that $\omega_\mathsf{str}((q_1, q_2, \ldots, q_m), X_{n+1})$ was set equal to  $\omega_u(q_u,X_{n+1})$, 
then $\wcvstart{\strategy_\mathsf{str}}{\pi ,X_{n+1}} \geq u$.  
\end{lemma}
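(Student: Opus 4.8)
The plan is to follow an arbitrary play that obeys $\sigma_\mathsf{str}$ starting from the prefix $\pi$ extended by $X_{n+1}$ and the agent's forced response $Y_{n+1}:=\omega_\mathsf{str}((q_1,\dots,q_m),X_{n+1})$, tracking the runs of the component DBAs $\autval{b}$, $b\in B$. Since $\delta_\mathsf{str}$ mirrors the product of the $\delta_b$'s and $\pi$ records exactly the outputs chosen by $\omega_\mathsf{str}$, the $b$-th component of $(q_1,\dots,q_m)$ is $q_b=\delta_b(q_0^b,\pi)$, and by the definition of $\omega_\mathsf{str}$ we have $u=\mathsf{max}\{v\in B \mid \exists Y\; \delta_v(q_v,X_{n+1}\cup Y)\in Win(\C{G}_v)\}$ with $Y_{n+1}=\omega_u(q_u,X_{n+1})$. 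In particular $u$ lies in this set, so Case~1 of the output function $\omega_u$ of $\C{T}_{\C{G}_u}$ fires at $(q_u,X_{n+1})$, whence $q^\star:=\delta_u(q_u,X_{n+1}\cup Y_{n+1})=\delta_u(q_0^u,\pi\cdot(X_{n+1}\cup Y_{n+1}))\in Win(\C{G}_u)$.

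I would first dispose of the degenerate cases. If $\pi$ already contains $\endtrace$, say at position $k\le n$, then the induced finite trace is the length-$k$ prefix $\pi^{\mathsf{f}}$ of $\pi$ (with $\endtrace$ removed) and $\wcvstart{\sigma_\mathsf{str}}{\pi,X_{n+1}}=\eval{\pi^{\mathsf{f}}}{\varphi}{}$. Here $\eval{\pi^{\mathsf{f}}}{\varphi}{}\ge u$: otherwise reading the $\endtrace$-symbol at step $k$ would have sent the $\autval{u}$-component to the absorbing, non-accepting state $q_\bot$, so $q_u=q_\bot$ and $u$ could not be applicable at step $n+1$, contradicting the choice of $u$. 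The case $\endtrace\in Y_{n+1}$ is similar: then $\pi\cdot(X_{n+1}\cup Y_{n+1}\setminus\{\endtrace\})$ is the unique member of $\fulltraces{\sigma_\mathsf{str}}$ extending $\pi\cdot X_{n+1}$, and since $q^\star\in Win(\C{G}_u)$ excludes $q^\star=q_\bot$, we must have $q^\star=q_\top$, i.e.\ the objective DFA $\hat{\C{A}}_{\varphi\geq u}$ accepts it, so its value is $\ge u$.

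For the principal case ($\pi$ and $Y_{n+1}$ both $\endtrace$-free), the key observation is that along any continuation obeying $\sigma_\mathsf{str}$ the value selected by $\omega_\mathsf{str}$ is \emph{non-decreasing}: whenever $\omega_\mathsf{str}$ uses value $w$, Case~1 of $\omega_w$ applies, so the $\autval{w}$-component then lies in $Win(\C{G}_w)$; and from any state of $Win(\C{G}_w)$ Player~II can, against every environment move, step back into $Win(\C{G}_w)$ (the standard fixpoint property of Büchi-game winning regions), so $w$ stays applicable and the next selected value is $\ge w$. As $B$ is finite, the selected value stabilises at some $w\ge u$ from some step $N$ on, and from $N$ onward $\sigma_\mathsf{str}$ coincides with the transducer $\C{T}_{\C{G}_w}$ of Section~\ref{ssec:dba-games} run from a state of $Win(\C{G}_w)$. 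By the argument sketched for Theorem~\ref{thm:winning}, that run stays inside $Win(\C{G}_w)$ and revisits $RA(\C{G}_w)\subseteq F_w$ infinitely often, so the induced infinite trace is accepted by $\autval{w}=\C{A}_{\varphi\geq w}^\assume$; by the characterisation of this DBA (Theorem~\ref{dba-thm} with $\C{A}_g$ replaced by $\hat{\C{A}}_{\varphi\geq w}$) the trace either violates $\assume$ and carries no $\endtrace$, or carries exactly one $\endtrace$ whose induced finite trace has value $\ge w\ge u$. Hence every finite trace of $\sigma_\mathsf{str}$ extending $\pi\cdot(X_{n+1}\cup Y_{n+1})$ has value at least $u$, giving $\wcvstart{\sigma_\mathsf{str}}{\pi,X_{n+1}}\ge u$ \emph{provided} at least one such finite trace exists.

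Establishing that proviso when $u>0$ (the case $u=0$ being trivial) is, I expect, the main obstacle, since a priori $\C{T}_{\C{G}_w}$ might win the Büchi game purely by keeping $\assume$ violated. I would handle it with Lemma~\ref{lem:strong1}: from $q^\star=\delta_u(q_0^u,\pi\cdot(X_{n+1}\cup Y_{n+1}))\in Win(\C{G}_u)$ one gets an $\assume$-strategy $\strategy$ with $\pi\in\compat{\strategy}$ and $\wcvstart{\strategy}{\pi,X_{n+1}}\ge u>0$; by the definition of $\wcvstart{}$ this forces $\strategy$ to have a terminating play extending $\pi\cdot(X_{n+1}\cup Y_{n+1})$, and the strategy constructed in the proof of Lemma~\ref{lem:strong1} follows $\C{T}_{\C{G}_u}$ after step $n+1$. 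Replaying $\sigma_\mathsf{str}$ against the environment sequence witnessing that terminating play, and re-invoking Lemma~\ref{lem:strong1} at each of the at most $|B|$ steps where the selected value jumps upward (each jump again landing in the winning region of the new value, so a terminating continuation is again available), I obtain an environment sequence on which $\sigma_\mathsf{str}$ itself terminates with a finite trace extending $\pi\cdot(X_{n+1}\cup Y_{n+1})$; combined with the previous paragraph this completes the proof. (Alternatively, one can fold the existence of a terminating continuation into the non-decreasing-value invariant and avoid the extra bookkeeping.)
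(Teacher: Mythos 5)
Your core argument is exactly the paper's: track the value $\zeta(i)$ selected by $\omega_\mathsf{str}$ at each step, show it is non-decreasing (because landing in $Win(\C{G}_w)$ guarantees some successor back in $Win(\C{G}_w)$), let it stabilize at some $w \geq u$, conclude that the run of $\autval{w}$ is eventually trapped in $Win(\C{G}_w)$ and visits $F_w$ infinitely often, and read off from the language of $\autval{w}$ that every terminating extension of $\pi\cdot(X_{n+1}\cup Y_{n+1})$ has value at least $w\geq u$. That part is correct and matches the paper's proof essentially step for step (your ``degenerate'' cases are what the paper's extended definition of $\wcvstart{\strategy}{\trace,X}$, stated just before the lemma, is for).

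The one place you go beyond the paper is the proviso at the end, and that is also where your argument breaks. You are right that the convention ``$\wcvstart{\strategy}{\trace,X}=0$ if no trace of $\fulltraces{\strategy}$ extends $\trace\cdot(X\cup Y)$'' means the conclusion needs a terminating extension to exist when $u>0$; the paper's proof silently passes from ``every such trace has value $\geq u$'' to the inequality without excluding the empty case. But your patch does not establish existence. Lemma~\ref{lem:strong1} only yields \emph{some} $\assume$-strategy with a terminating continuation from $q^\star$; replaying $\sigma_\mathsf{str}$ against the environment sequence that makes \emph{that} strategy terminate tells you nothing, because the two strategies emit different outputs and hence induce different plays, and re-invoking the lemma at value jumps again only produces terminating continuations of other strategies. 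Nothing in the construction forces $\C{T}_{\C{G}_w}$ itself to ever emit $\endtrace$: its output function chooses \emph{any} successor minimizing the distance to $RA(\C{G}_w)$, and a successor that keeps $\assume$ violated forever can be exactly as close as one that reaches $q_\top$, so the Büchi game can legitimately be won without ever terminating. Closing this would require either strengthening the transducer (e.g., preferring successors from which $q_\top$ is reachable) or reading the empty case as vacuous -- neither of which you or the paper does. In short: the main argument is sound and identical to the paper's; the extra step you add correctly identifies a real subtlety but does not resolve it.
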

\begin{proof}
Fix $X_1 \ldots X_n \in (2^\C{X})^*$ and $X_{n+1} \subseteq \C{X}$.
Consider some $\mathbf{X} \in (2^\C{X})^\omega$ such that $\mathbf{X} = X_1 \ldots X_n X_{n+1} X_{n+2} \ldots $.
Let $\pi' = \inftracestrat{\sigma_\mathsf{str}}{\mathbf{X}}=(X_1 \cup Y_1) (X_2 \cup Y_2) \ldots$, and for all $i \geq 0$ let $(q_{i+1}^1, \ldots, q_{i+1}^m) = \delta_\mathsf{str}(q_0^\mathsf{str}, X_1 \ldots X_{i+1})$. 
Define a function $\zeta: \mathbb{N} \rightarrow \{b_1, \ldots, b_m\}$ by letting $\zeta(i)$ be  the unique value $b$ in $\{b_1, \ldots, b_m\}$
such that  $\omega_\mathsf{str}((q_{i}^1, \ldots, q_{i}^m) , X_1 \ldots X_i), X_{i+1})$ was set equal to $\omega_b(q_i^b,X_{i+1})$. 
Basically, $\zeta(i)=b$ means that we used transducer $\C{T}_{\C{G}_b}$ to generate $Y_{i+1}$ after reading $X_{i+1}$ from state $(q_{i}^1, \ldots, q_{i}^m)$.
Note that the value $u$ from the lemma statement is equal to $\zeta(n)$. \smallskip

\noindent\textbf{Claim}: For all $i \geq 0$, $\zeta(i+1) \geq \zeta(i)$. \\
\emph{Proof of claim}: Suppose that  $\zeta(i)=b$, which means that $b= \mathsf{max}(\{v \in B\mid \exists Y\, \delta_v(q_i^v,X_{i+1} \cup Y) \in Win(\C{G}_{v})\})$.
Then we have $Y_{i+1} = \omega_b(q_i^b,X_{i+1})$. From the definition of $\omega_b$, and the fact that there exists some $Y$ with $\delta_b(q_i^b,X_{i+1} \cup Y) \in Win(\C{G}_{b})$,
we know that $\delta_b(q_i^b,X_{i+1} \cup Y_{i+1}) \in Win(\C{G}_{b})$. It follows that there must exist $Y'$ such that  
$\delta_b(q_{i+1}^b,X_{i+2} \cup Y') \in Win(\C{G}_{b})$, and hence $\mathsf{max}(\{v \in B\mid \exists Y'\, \delta_v(q_{i+1}^v,X_{i+2} \cup Y') \in Win(\C{G}_{v})\}) \geq b$.
Thus, $\zeta(i+1) \geq b = \zeta(i)$. (\emph{end proof of claim})\smallskip

Due to the preceding claim, and the finiteness of $B$, 
there exists $h \geq 0$ such that $\zeta(i)=\zeta(h)$ for all $i \geq h$, and $\zeta(i) < \zeta(h)$ for $i < h$. 
Let $b_\mathbf{X} = \zeta(h)$. Observe that $q_0^{b_\mathbf{X} } q_1^{b_\mathbf{X} } q_3^{b_\mathbf{X} } \ldots $ is the run of $\autval{b_\mathbf{X} }$ 
on $\pi'$ and that $q_{h+1}^{b_\mathbf{X}} \in Win(\C{G}_{b_\mathbf{X}})$. 
From the definition of $\omega_{b_\mathbf{X}}$, we can infer that $q_i^{b_\mathbf{X}} \in Win(\C{G}_{b_\mathbf{X}})$
for all $i > h$, and further that there are infinitely many $q_i^{b_\mathbf{X}}  \in F_{b_\mathbf{X}}$. 
This means that $q_0^{b_\mathbf{X} } q_1^{b_\mathbf{X} } q_3^{b_\mathbf{X} } \ldots $ is an accepting run of $\autval{b_\mathbf{X} }$ on $\pi'$.
Thus, either $\pi'$ does not contain $\endtrace$ and violates $\assume$, or $\pi'$ contains $\endtrace$ 
and the induced finite trace $\tracestrat{\sigma_\mathsf{str}}{\mathbf{X}}$ has value at least $b_\mathbf{X} = \zeta(h) \geq \zeta(n)=u$.  
We have shown that every trace $\pi'' \in \fulltraces{\sigma_\mathsf{str}}$ is such that $\eval{\pi''}{\varphi}\ \geq \zeta(n)=u$,
and hence that $\wcvstart{\strategy_\mathsf{str}}{\pi ,X_{n+1}} \geq u$.
\end{proof}

We now proceed to the proof of  Theorem \ref{thm:strong-opt}.\medskip

\noindent\textbf{Theorem \ref{thm:strong-opt}}
$\C{T}^\mathsf{str}$ implements a strongly \opt strategy and can be constructed in double-exponential time. %
\begin{proof}
We first show that $\C{T}^\mathsf{str}$ implements a strongly \opt strategy. Suppose for a contradiction that this is not the case. 
Then there exists a finite trace $\trace = (X_1 \cup Y_1) \ldots (X_n \cup Y_n) \in \compat{\strategy}$ that does not contain $\endtrace$, $X \in \mathbf{X}$, and an $\assume$-strategy
such that $\trace \in \compat{\strategy'}$ and $\wcvstart{\strategy'}{\trace,X} > \wcvstart{\strategy}{\trace,X}$.
Suppose that $\wcvstart{\strategy}{\trace,X} = b$ and $\wcvstart{\strategy'}{\trace,X}=b'$. 
From Lemma \ref{lem:strong1}, we know that $\delta_{b'}(q_0^{b'}, \pi \, (X_{n+1}\cup Y_{n+1}))\! \in Win(\C{G}_{b'})$
for some $Y_{n+1}\! \in 2^{\C{Y} \cup \{\endtrace\}}$. 
It follows that $\omega_\mathsf{str}(\delta_\mathsf{str}(q_0^\mathsf{str}, X_1 \ldots X_n), X_{n+1})= \omega_v(q_v,X_{n+1})$
for some $v \geq b'$. From Lemma \ref{lem:strong2}, $\wcvstart{\strategy_\mathsf{str}}{\pi ,X_{n+1}} \geq v \geq b'$, a contradiction.  

We know from Section \ref{ssec:algo-bgv} and Theorem \ref{thm:opt_strategies} that for each $v \in V(\varphi)$, 
the DBA $\autval{v}$, winning region $Win(\C{G}_{v})$, and transducer $\C{T}_{\C{G}_v}$ can be constructed in double-exponential time. 
As there are only single exponentially many values in $V$ (hence $B$), 
the transducer $\C{T}^\mathsf{str}$ can also be constructed double-exponential time.
\end{proof}

\end{document}